\documentclass[a4paper,USenglish,cleveref,autoref,thm-restate]{lipics-v2021} %
\usepackage{mathtools}
\usepackage{xspace}
\usepackage[numbers,sort]{natbib} %
\hypersetup{
    colorlinks = true,
    urlcolor = {black},
    citecolor = {blue},
    linkcolor = {blue}
}

\newcommand{\sv}[1]{}%
 \newcommand{\lv}[1]{#1}%

\usepackage{etoolbox}

 \newcommand{\toappendix}[1]{#1}%
\title{Optimal Path Partitions in Subcubic and Almost-subcubic Graphs}

\author{Tomáš Masařík}
{Institute of Informatics, University of Warsaw, Poland}
{masarik@mimuw.edu.pl}
{0000-0001-8524-4036}{Supported by the Polish National Science Centre SONATA-17 grant number
2021/43/D/ST6/03312.}

\author{Michał Włodarczyk}
{Institute of Informatics, University of Warsaw, Poland}
{m.wlodarczyk@mimuw.edu.pl}
{0000-0003-0968-8414}
{Supported by the Polish National Science Centre SONATA-19 grant number 2023/51/D/ST6/00155.}

\author{Mehmet Akif Yıldız}
{Centrum Wiskunde \& Informatica, Amsterdam, The Netherlands}
{m.akif.yildiz@cwi.nl}
{0000-0002-3349-9165} %
{This work was carried out during the tenure of an ERCIM `Alain Bensoussan' Fellowship Programme.}

\authorrunning{T.~Masařík, M.~Włodarczyk, M.~A.~Yıldız} %

\Copyright{Tomáš Masařík, Michał Włodarczyk, Mehmet Akif Yıldız} %

\ccsdesc[500]{Theory of computation~Graph algorithms analysis}

\keywords{path partitions, parameterized algorithms, subcubic graphs, model checking, disjoint paths} %

\category{} %

\relatedversion{} %

\nolinenumbers %

\hideLIPIcs

\numberwithin{equation}{section}

\newcommand{\nn}{\mathbb{N}}
\newcommand{\Dd}[0]{\mathcal{D}}
\newcommand{\Qq}[0]{\mathcal{Q}}
\newcommand{\Pp}[0]{\mathcal{P}}
\newcommand{\Tt}{\mathcal{T}}
\newcommand{\Oh}{\mathcal{O}}
\newcommand{\odd}{\mathsf{odd}}
\newcommand{\pan}{\mathsf{pan}}
\newcommand{\opt}{\mathsf{opt}}
\newcommand{\sen}{\mathsf{sen}}
\newcommand{\high}{\mathsf{high}}
\newcommand{\edges}{\mathsf{edges}}
\newcommand{\pairs}{\mathsf{ends}}
\newcommand{\fo}{\ensuremath{\mathsf{FO}}\xspace}
\newcommand{\fodp}{\ensuremath{\mathsf{FO}\text{+}\mathsf{DP}}\xspace}
\newcommand{\pn}{\mathsf{pn}}
\newcommand{\sub}{\subseteq}
\newcommand{\sm}{\setminus}

\newcommand{\TM}[1]{}
\newcommand{\MAY}[1]{}
\newcommand{\micr}[1]{}
\newcommand{\akif}[1]{}
\newcommand{\mic}[1]{#1}

\usepackage{tikz}
\usepackage{subcaption}
\usepackage{graphicx}

\begin{document}
	
	\maketitle
	\begin{abstract}
We consider the problem of partitioning the edges of a graph into as few paths as possible.
This is a~subject of the classic conjecture of Gallai and
a recurring topic in combinatorics.
Regarding the complexity of partitioning a graph optimally, Peroch\'e [Discret.\ Appl.\ Math.\ 1984] proved that
it is NP-hard already on graphs of maximum degree four, even when we only ask if two paths suffice.

We show that the problem is solvable in polynomial time on subcubic graphs and then we present an efficient algorithm for ``almost-subcubic'' graphs.
Precisely, we prove that the problem is fixed-parameter tractable when parameterized by the edge-deletion distance to a subcubic graph.
To this end, we reduce the task to model checking in first-order logic extended by disjoint-paths predicates (\fodp)
and then we employ the recent tractability result by Schirrmacher, Siebertz, Stamoulis, Thilikos, and Vigny [LICS 2024].
\end{abstract}

\section{Introduction}\label{sec:INTRO}
The famous conjecture of Gallai~\cite{Gallai} states that {the edge set of any} $n$-vertex connected graph can be partitioned into at most $\lceil n/2 \rceil$ paths. 
It has been confirmed on several graphs classes, including planar graphs \cite{Gallai-planar}, graphs of degree at most five~\cite{Bonamy19}, graphs with no even degree vertices \cite{Gallai-2n-over-3-first}, and graphs of treewidth at most four \cite{Gallai-treewidth}.
In general the conjecture remains open and currently the best bound is $\lfloor 2n/3 \rfloor$~\cite{Gallai-2n-over-3-first, Gallai-2n-over-3-second}.
Notably, Lov{\'a}sz  proved that every graph can be decomposed using $\lfloor n/2 \rfloor$ paths and cycles~\cite{Gallai}.

In the recent years, path partitioning and covering problems have 
 also been studied from the algorithmic perspective (see \cite{manuel2018revisiting} for a survey), in particular in
parameterized complexity. %
Depending on the variant, the goal to cover either the vertices or the edges of a given graph
using as few paths as possible.
One may require the paths to be pairwise disjoint and/or to be shortest paths~\cite{Dumas24, baste2025polynomial, chakraborty2025, foucaud2025polynomialtime}.
While the natural parameter is the number of paths~\cite{Dumas24, baste2025polynomial, Fernau25}, the problem has been also studied from the perspective of structural parameterization~\cite{chakraborty2025, foucaud2025polynomialtime, Fernau25} and parameterization below a guarantee~\cite{fomin2025pathcover}.

We consider the problem in the original variant studied by Gallai and for a graph $G$ we 
define $\pn(G)$ as the smallest number of edge-disjoint paths in $G$ whose union is $E(G)$.
Notably, Peroch\'e \cite{NP-complete-paper} showed that deciding whether $\pn(G)\leq 2$ is NP-hard even for graphs $G$ with maximum degree $\Delta(G)=4$.
First, we complement this hardness with a polynomial-time algorithm to compute $\pn(G)$ when $G$ is subcubic, i.e., $\Delta(G) \le 3$.
The sharp transition between  $\Delta(G) = 3$ and $\Delta(G) = 4$ leads to the question of what happens in graphs that are {\em almost subcubic}.
For a graph $G$, we define its {\em subcubic edge-deletion number} -- $\sen(G)$ -- as the smallest number of edges one needs to remove from $G$ to make it subcubic (c.f.~\cite{MajumdarR18}).
We show that determining $\pn(G)$ is fixed-parameter tractable (FPT) with respect to the parameter $\sen(G)$, providing a significant extension of the case  $\Delta(G) = 3$.

\begin{theorem}\label{thm:main}
There is an algorithm that, given a graph $G$, computes $\pn(G)$ in time $f(\sen(G))\cdot |G|^{\Oh(1)}$, where $f$ is a computable function.
\end{theorem}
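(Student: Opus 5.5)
Let $k=\sen(G)$ and fix a set $F$ of at most $k$ edges such that $G-F$ is subcubic. Such an $F$ can be found in FPT time. Every vertex of degree at least $4$ must be incident to $F$, so there are at most $2k$ high-degree vertices and each has degree at most $3+k$. Hence $G$ has a set $X$ of $\Oh(k)$ ``special'' vertices, each of degree at most $k+3$, and all other vertices have degree at most $3$.

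The first step is a structural understanding of optimal partitions in subcubic graphs, which also gives the polynomial-time algorithm. In a path partition, a vertex of odd degree is the endpoint of an odd number of paths, so it is an endpoint of at least one path. Degree-$2$ vertices can be suppressed, except that cycle components are handled separately. For a subcubic graph $H$, I would show that $\pn(H)$ is determined per component by counting odd-degree vertices: the answer is $\odd(C)/2$ for each component $C$ with odd-degree vertices, and $1$ for each Eulerian component. The exceptions are specific obstructions, such as components where the degree-$3$ vertices cannot be paired up along paths. These I would characterize by a matching or $T$-join style argument.

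Concretely, one guesses for each degree-$3$ vertex which incident edge ``continues'' a path through it and which pair is split. This is a choice of a set of path-ends, and the partition exists iff the resulting edge-pairing at each vertex creates no cycle. Cycles can be broken by local rerouting in the subcubic case, at the cost of analyzable exceptions.

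For the general case, the key is to encode the local behaviour at the special vertices by a bounded amount of information. At each $x\in X$, an optimal partition induces a pairing of some incident edges (paths passing through $x$), with the remaining edges being path-ends. Since $\deg(x)\le k+3$, there are $f(k)$ such ``local patterns'' per vertex, and $f(k)$ patterns overall. For each guess, I would split every special vertex according to its pairing, replacing $x$ by degree-$2$ and degree-$1$ copies. This yields a subcubic graph $G'$ whose path partitions correspond to partitions of $G$ respecting the pattern, except that some paths in $G'$ may glue into cycles through split vertices, or the same path may revisit $x$.

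The main obstacle is exactly this global acyclicity and simplicity condition: the glued trails must be genuine paths, visiting each special vertex at most once. It is not local and cannot be handled by the subcubic counting directly. Here I would invoke \fodp: first-order logic with disjoint-paths predicates, whose model checking is FPT on general graphs by Schirrmacher et al. The guessed pattern is expressed by an \fo formula over $O(k)$ quantified vertices and edges around $X$. The requirement that the segments between special vertices link up into an acyclic family of paths with prescribed endpoints is then expressed by disjoint-paths atoms. Everything outside the segments is handled by the subcubic counting formula from the first step, whose value depends only on parity and component data that can be computed in polynomial time once the segments are fixed.

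The difficult technical lemma I would aim for is a replacement argument: in an optimal partition of $G$, the paths through $X$ can be rerouted so that each is described by $O(k)$ segments between special vertices, with bounded interaction. This makes the \fodp sentence of size $g(k)$. Minimizing over all patterns then gives $\pn(G)$ in time $f(k)\cdot|G|^{\Oh(1)}$.
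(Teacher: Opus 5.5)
Your outline has the right architecture: guess a bounded description of how the paths through the high-degree vertices behave, test realizability with \fodp, and charge the rest to the subcubic formula. However, the step that makes the final minimization correct is missing.

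The model checker only certifies that \emph{some} family of segments realizing the guess exists. It never fixes one, the formula cannot refer to the edges of the paths asserted by the $\mathsf{dp}$ atoms, and there may be exponentially many realizations. Yet you charge the remainder ``once the segments are fixed'' via parity and component data, and both depend on the realization. The parities at the far endpoints of the paths through $X$ are not part of what you guess. Cycle components, pan cycles and subdivided diamonds of $G-\Qq$ change with the chosen segments. For instance, suppose a realized path ends at the two degree-$3$ vertices $u,v$ of a triangle $uvw$ with $\deg_G(w)=2$. Then the triangle becomes a cycle component of the remainder that costs two paths while contributing nothing to the odd count, and another realization of the same guess need not pay this. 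So minimizing over guesses, as you describe it, does not compute $\pn(G)$.

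The missing idea is to make the objective a function of the pattern alone and to show that every admissible realization attains it. The paper does this in two steps.
\begin{enumerate}
\item It stores in the pattern the degrees of the path endpoints outside $N_G[V_4]$. Then $\odd(G-\Qq)$ is determined by the pattern, and $\odd(G-\Qq)/2$ lower-bounds the remainder.
\item It proves that for every \emph{bull-free} realization $\Qq$ one can extend paths of $\Qq$, keeping $|\Qq|$ and $\odd(G-\Qq)$ unchanged, until $G-\Qq$ has no pan cycle and no cycle or subdivided-diamond component. Lemma~\ref{thm:subcubic-case} then gives exactly $\odd(G-\Qq)/2$.
\end{enumerate}
Step 2 relies on preprocessing you do not have: deleting the pan cycles of $G$ and shortening every bull cycle to a triangle. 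It also needs the extra constraint in the formula that no path ends at a bull-pair. This constraint is \fo-expressible only because bull cycles were shortened, and it is harmless because optimal solutions are bull-free.

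Your ``difficult lemma'' is not where the difficulty lies. A path visits each special vertex at most once, so it has $\Oh(k)$ segments automatically.

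There are also smaller errors.
\begin{enumerate}
\item A connected Eulerian subcubic graph is a cycle and needs two paths, not one. The true exceptions are cycles and subdivided diamonds, plus an additive $\pan(G)$ term. Per-vertex guessing is also not a polynomial algorithm.
\item \fodp model checking is FPT on classes excluding a fixed topological minor, not on general graphs. It must be run on a bounded-degree graph such as $G-V_4$.
\item Since $\mathsf{dp}$ asks for internally vertex-disjoint paths, every vertex shared by two segments must be a segment endpoint on both, and no terminal pair may be linked twice. The paper ensures this with a terminal collection (\Cref{lem:terminal}).
\end{enumerate}
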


\subparagraph{Organization of the paper.}
We begin with an informal exposition of the main technical ideas in \Cref{sec:OVERVIEW}.
In \Cref{sec:PRELIMINARIES} we fix notation and discuss basic graph-theoretical notions.
\sv{Due to space constraints, the proof of \Cref{thm:subcubic-case}, covering the subcubic case, is located in Appendix \ref{sec:SUB-CUBIC}.}
\lv{The proof of \Cref{thm:subcubic-case}, covering the subcubic case, is located in \Cref{sec:SUB-CUBIC}.}
The proof of \Cref{thm:main} spans Sections \ref{sec:ALMOST-SUBCUBIC}-\ref{sec:ALGO}.
In particular, in Sections \ref{sec:PATTERNS} and~\ref{sec:MAIN}, we introduce the concept of a pattern and prove its several properties.
Then, in \Cref{sec:ALGO}, we reduce the problem to model checking and conclude the proof.
\sv{The proofs of lemmas marked with ($\bigstar$) are postponed to Appendix \ref{sec:MISSING}.}

\section{Overview}
\label{sec:OVERVIEW}
\subparagraph{Subcubic graphs.}
Let $\odd(G)$ denote the number of odd-degree vertices in $G$.
Clearly, $\pn(G) \ge \odd(G)/2$ because each odd-degree vertex must be an endpoint of some path.
Moreover, %
when all vertices in $G$ have odd degrees then the lower bound is attained (Corollary~\ref{cor:all-degrees-odd}). 
We show that in a subcubic graph the degree-2 vertices can be effectively dissolved, resulting in a convenient formula for $\pn(G)$.
An additional argument is necessary when $G$ contains a {\em pan cycle}, i.e., a cycle with exactly one degree-3 vertex.
Let $\mathrm{pan}(G)$ denote the number of such cycles in $G$.
Then the following formula determines $\pn(G)$ of a subcubic graph $G$ and entails a polynomial algorithm to compute $\pn(G)$. 

\begin{restatable}[Subcubic Path Partition\sv{ $\bigstar$}]{lemma}{thmSubcubic}
\label{thm:subcubic-case}
Let $G$ be a connected graph with $\Delta(G)\leq 3$. Then it holds that 
\begin{align*}
    \pn(G)=\begin{cases}
        2 & \text{if }G\text{ is a cycle or a subdivision of a diamond},\\
        \mathrm{odd}(G)/2 + \mathrm{pan}(G) & \text{otherwise}.
    \end{cases}
\end{align*}
\end{restatable}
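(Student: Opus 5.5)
We prove the two inequalities separately, after disposing of the exceptional graphs directly. If $G$ is a cycle, one path cannot close up a cycle, so $\pn(G)\ge 2$, and two arcs suffice. If $G$ is a subdivision of a diamond, it has two odd vertices, so $\pn(G)\ge 1$. A single path would be an Euler trail, but removing the unique $u$--$v$ trail through all edges is impossible, since it would have to be a Hamiltonian-type traversal; this is checked directly on the four-vertex diamond and survives subdivision. For the upper bound, a path plus a complementary path suffice. Note that $\pan(G)=2$ in the diamond setting is exactly what breaks the formula, so the exceptional case is genuinely needed.

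\textbf{Lower bound.} We show $\pn(G)\ge \odd(G)/2+\pan(G)$ by charging path endpoints. Every odd vertex is an endpoint of an odd number of paths, hence of at least one. Now consider a pan cycle $C$ with attachment vertex $w$, where $\deg(w)=3$ and all other vertices of $C$ have degree $2$. Internal vertices of $C$ cannot be traversed-through endpoints unless a path ends there. Any path entering $C$ must do so via $w$ and can use at most two of the three edges at $w$. So if no path ends at an interior vertex of $C$, the edges of $C$ form a single path that starts and ends at $w$, which is impossible. Hence either some path ends at an interior (even-degree) vertex of $C$, and then it must in fact have two endpoint-incidences there by parity, or the structure forces extra endpoints. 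Either way $C$ contributes two additional path-end incidences beyond the one forced at odd vertices. Distinct pan cycles have disjoint interiors, so summing over the endpoint incidences (total $2\pn(G)$) gives $2\pn(G)\ge \odd(G)+2\pan(G)$.

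\textbf{Upper bound.} We construct a partition by induction on $|E(G)|$, using suppression of degree-2 vertices.

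First, if $G$ has no degree-2 vertices, all its degrees are odd ($1$ or $3$), and Corollary~\ref{cor:all-degrees-odd} gives $\odd(G)/2$ paths, with $\pan(G)=0$.

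Otherwise, dissolve every degree-2 vertex that is not on a pan cycle. This produces a multigraph $G'$ in which each suppressed thread becomes an edge. Loops arise exactly from pan cycles, and parallel edges arise from threads between the same pair of vertices. For each pan cycle, delete the loop, which turns $w$ into a degree-1 vertex. The resulting graph has $\odd(G)+0$ odd vertices, since $w$ stays odd. Partition it into $\odd/2$ paths (trails) via the corollary, extended to multigraphs. Then reinsert each pan cycle as one extra path from $w$ around $C$, split at $w$, or better, attach it to the path ending at $w$ and spend one extra path. That is exactly $+\pan(G)$.

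\textbf{Main obstacle.} The hard point is that a partition of the multigraph into \emph{trails} must lift to \emph{paths} in $G$: a trail may revisit a vertex, and subdivided parallel edges may close a cycle. I would handle this with a local rerouting argument. Whenever a trail revisits a degree-3 vertex $x$, it uses all three edges there, which is impossible unless $x$ is an endpoint of that trail. The remaining failure mode is a closed subtrail hanging at an endpoint, which can be swapped with a neighbouring path. This swap fails exactly in the cycle and diamond-subdivision configurations, and those are excluded. I expect verifying that the swap always succeeds outside these exceptions to be the main case analysis.
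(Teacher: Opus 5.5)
\textbf{The upper bound has a genuine gap.} Once you suppress all degree-$2$ vertices, parallel threads become parallel edges, and \Cref{cor:all-degrees-odd} does not extend to multigraphs. For example, two vertices joined by three parallel edges have all degrees odd but need three paths. In multigraphs only the elementary Euler-type fact that $\odd/2$ \emph{trails} suffice survives, so the corollary contributes nothing. The whole content of the upper bound is therefore the step you only announce: turning the lifted trails into paths of $G$ without increasing their number. You do not say at which vertex and with which trail the swap is made. You also do not show that its outcome consists of paths, that it creates no new revisits (so that repeating it terminates), or that it fails only for cycles and subdivided diamonds. That is exactly what has to be proved.

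The step can plausibly be done. One option is to re-splice at the first branch vertex $u$ after $x$ on the closed part: its third edge must lie on a trail ending at $u$, and the number of revisited vertices serves as a decreasing potential. You would still need to treat the case where this $u$ is the other end of the trail, and to show that when no usable $u$ exists the graph is a theta graph, i.e.\ a subdivided diamond.

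The paper avoids multigraphs altogether. It suppresses a single degree-$2$ vertex $v$ only when its neighbours $x,y$ are non-adjacent, so the graph stays simple and induction applies, and it treats $xy\in E(G)$ by hand. There, the absence of pan cycles forces $\deg x=\deg y=3$, and the triangle $xvy$ is deleted. Either the rest splits into two components in which $x$ and $y$ have degree $1$, and the triangle is absorbed by the paths ending there. Or an $(x,y)$-path $P$ in the rest must contain a degree-$3$ vertex $w$ (otherwise $G$ is a subdivided diamond); after also deleting $P$ and applying induction, the triangle and $P$ are rerouted through the path ending at $w$.

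\textbf{Your lower bound is correct and differs from the paper.} Counting path ends directly gives $2\pn(G)\ge\odd(G)+2\pan(G)$ at once: at least one end at each odd vertex, and, by parity, at least two inside each of the pairwise disjoint pan-cycle interiors. The paper instead derives the bound from the identity $\pn(G)=\pn(G-C)+1$ of \Cref{prop:remove-end-cycles} plus induction. Your argument is shorter, while the paper's identity is exact and is reused for non-subcubic graphs in \Cref{lem:niceGraphs}.

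One side remark of yours is wrong. A subdivided diamond has $\pan(G)=0$, since every cycle in it passes through both branch vertices, so the formula fails there by predicting $1$. The clean reason it needs two paths is that an Euler trail between its two branch vertices must pass through each of them a second time.
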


\subparagraph{Towards general case: guessing a partial solution.}
We present a brief outline of the algorithm parameterized by $\sen(G)$, starting by an XP algorithm.
Let $V_4 \sub V(G)$ be the set of vertices of degree at least four in $G$.
It is easy to see that $k := \sum_{v \in V_4} \deg(v)$ is of order $\Oh(\sen(G))$.
This also upper bounds the number of edge-disjoint paths that can pass through a vertex from $V_4$.
Let $\Pp_\opt$ denote the family of such paths in some fixed optimal solution.
Since $|\Pp_\opt| \le k$, we can afford guessing some information about each $P\in\Pp_\opt$.
Let $V_X$ denote the set of endpoints of all the paths in $\Pp_\opt$ union with $V_4$.
For each path $P \in \Pp_\opt$, we guess the sequence of vertices from $V_X$ that are visited by $P$---we call this the {\em trace} of $P$ and a collection of traces is called a {\em pattern} (see \Cref{fig:outline:pattern} on Page \pageref{fig:outline:pattern}).
The number of patterns is $f(k) \cdot n^{\Oh(k)}$ and we can enumerate all of them in XP time. 
For a fixed pattern $F$, we can check if there exists a path family $\Pp_F$ matching $F$ (and compute it)
using the known FPT algorithm for finding edge-disjoint paths parameterized by the number of paths~\cite{Kawarabayashi2010}.

Let $F$ be the pattern encoding $\Pp_\opt$ and $\Pp_F$ be the computed family of edge-disjoint paths matching $F$.
Next, let $G_F = G - \Pp_F$ be the graph obtained from $G$ by removing all edges from the paths in $\Pp_F$.
Similarly, let $G_\opt \coloneqq G - \Pp_\opt$; we have $\pn(G) = \pn(G_\opt) + |\Pp_\opt|$.
Since every edge incident to $V_4$ is covered by some $P \in \Pp_F$, the graph  $G_F$ is subcubic and so $\pn(G_F)$ is determined by \Cref{thm:subcubic-case}.
The crux of the analysis is to ensure that $\pn(G_\opt) = \pn(G_F)$ for one pattern $F$, even though the family $\Pp_F$ may differ from $\Pp_\opt$.

The main observation is that $\odd(G_\opt) = \odd(G_F)$ as this depends only on $\odd(G)$ and the pattern $F$.
Hence, the optimal solution must use at least $\odd(G_F)/2$ paths apart from the ones touching $V_4$.
We take advantage of \Cref{thm:subcubic-case} to justify that in fact $\odd(G_F)/2 + |\Pp_F|$ paths suffice to partition $E(G)$.
This becomes challenging when some connected component of $G_F$ is a cycle or a subdivided diamond or it contains a pan cycle.
In our analysis, we effectively rule out these cases by either preprocessing $G$ appropriately or postprocessing the family~$\Pp_F$ by extending some paths (see \Cref{lem:pn-less-odd}).
Consequently, $\pn(G)$ can be computed as the minimum of $\odd(G_F)/2 + |\Pp_F|$ over all patterns $F$ that can be realized using edge-disjoint paths, which yields an XP algorithm.

\lv{\vspace{-0.25em}}

\subparagraph{Incorporating logic.}
Now we explain how to improve the above XP algorithm to an FPT one.
The running time bottleneck comes from guessing the set $V_X$ and so 
we redefine a pattern\footnote{The full definition of a pattern is slightly more technical and also stores information about occurrences of $N_G(V_4)$ on the paths. See \Cref{def:pattern}.} by replacing the vertices from $V_X \setminus V_4$ by variables $x_1,x_2, \dots, x_\ell$ in each trace and additionally storing the degree of each $v \in V_X \setminus V_4$.
The number of such patterns is a function of $k$ only and it is still sufficient to uniquely determine the value of~$\odd(G - \Pp_F)$ for any $\Pp_F$ matching $F$. 

To check if there exists a path family $\Pp_F$ matching $F$,
we seek an assignment of the variables $x_1,x_2, \dots, x_\ell$ in $V(G) \setminus V_4$ that (1) obeys the degree specification 
and so that (2)~the pairs of vertices which occur consecutive in the traces can be connected by 
edge-disjoint paths.
Then we try adjust the computed family $\Pp_F$ to ensure $\pn(G - \Pp_F) = \odd(G - \Pp_F) / 2$.
However, if the assignment places endpoints of one path on a so-called {\em bull cycle} (see \Cref{lem:bull-free}), then $G-\Pp_F$ has a cyclic component that cannot be reduced without increasing $\odd(G - \Pp_F)$.
Hence, we need to add requirement (3) that each resulting path is {\em bull-free}, i.e., it avoids the above configuration.
If all these conditions can be satisfied, we call the pattern $F$ {\em feasible}.
Restricting to bull-free paths allows us to again express $\pn(G)$ as the minimum of $\odd(G_F)/2 + |\Pp_F|$ over all feasible patterns $F$.

The crucial observation is that 
both bull-freeness and the degree specification can be expressed in the first-order logic (\fo) after appropriate preprocessing of the graph.
This allows us to rewrite the question of whether $F$ is feasible using logic: precisely, the \fo logic extended by disjoint-paths predicates (\fodp)~\cite{GolovachST23}.
We utilize the recent result of Schirrmacher, Siebertz, Stamoulis, Thilikos, and Vigny~\cite{Logic} who proved that model checking of \fodp is FPT on bounded-degree graphs\footnote{The theorem in \cite{Logic} is more general and works for every graph class that excludes some graph as a topological minor.}, when parameterized by the formula length.
More precisely, they considered predicates involving paths that are internally vertex-disjoint but we show that this is also sufficient on subcubic graphs.
This allows us to check if a pattern is feasible in time $f(k)\cdot n^{\Oh(1)}$.
Since the number of patterns is a function of $k$, we obtain an FPT algorithm.

\section{Preliminaries}\label{sec:PRELIMINARIES}

Throughout the paper, we use standard graph theory notation and terminology. We always work with simple and undirected graphs. 
We write $H\subseteq G$ to mean $H$ is a \textit{subgraph} of the graph $G$, that is $V(H)\subseteq V(G)$ and $E(H)\subseteq E(G)$. The graph \textit{induced} by $F\subseteq E(G)$, denoted by $G[F]$, is the subgraph of $G$ with vertex set consisting of those vertices incident to edges in $F$ and with edge set $F$. The graph \textit{induced} by $S\subseteq V(G)$, denoted by $G[S]$, is the subgraph of $G$ with the vertex set $S$ and all edges with 
both endpoints in $S$. 

For a collection $\mathcal{Q}$ of graphs, we write $V(\mathcal{Q})=\bigcup_{Q\in \mathcal{Q}}V(Q)$ and $E(\mathcal{Q})=\bigcup_{Q\in \mathcal{Q}}E(Q)$. For a collection $\mathcal{Q}$ of subgraphs of $G$, we write $G-\mathcal{Q}$ for the graph obtained from $G$ by deleting all the edges in $E(\mathcal{Q})$. 
We denote the \textit{degree} of a vertex $v$ by $\deg_G(v)$.
The \textit{maximum degree} of $G$ is $\Delta(G):=\max\{\deg(v):v\in V(G)\}$. We call a graph $G$ \emph{subcubic} if $\Delta(G)\leq 3$. We say $G$ is a \emph{triangle} if $|V(G)|=3=|E(G)|$. We say $G$ is a \emph{diamond} if $|V(G)|=4$ and $|E(G)|=5$.

For a path $P=v_1v_2\ldots v_k$ and $1\leq i<j\leq k$, we write $v_iPv_j$ for the path $v_iv_{i+1}\ldots v_{j-1}v_j$, that is the unique $(v_i,v_j)$-path along $P$.
For an $(a,b)$-path $P$ and $(b,c)$-path $Q$ with $V(P)\cap V(Q)=\{b\}$, we write $aPbQc$ for the $(a,c)$-path that is obtained by \textit{gluing} paths $P$ and $Q$ at the vertex $b$.
We also  combine the two conventions to write concisely $aPbQc$ for the concatenation of the $(a,b)$-subpath of $P$ and the $(b,c)$-subpath of $Q$.

The \textit{length} of path $P$, denoted by $|P|$, is the number of edges in $P$. 
We say a graph $G$ is a \emph{subdivision} of a graph $H$ if $G$ can be obtained from $H$ by successively replacing edges of $H$ with paths.
Two paths $P, Q$ are {\em edge-disjoint} if $E(P) \cap E(Q) = \emptyset$.
Paths $P,Q$ are  {\em internally vertex-disjoint} if no vertex of one path appears as an internal vertex on the other path.

For a graph $G$, a collection of edge-disjoint paths $\Pp$ is said to be a \textit{path partition}\footnote{We remark that it is common in the combinatorial literature to use the name ``path decomposition'' and reserve ``path partition'' for the variant where one aims to partition the set of vertices.
We decided to use ``path partition'' here to avoid confusion with the terminology related to pathwidth.} for $G$ if $\bigcup_{P\in \Pp}E(P)=E(G)$. The \textit{path number} of $G$ is defined as 
	\begin{align*}
		\pn(G):=\min\{r:\text{there exists a path partition }\Pp\text{ of }G\text{ with }|\Pp|=r\}.    
	\end{align*}

We write $\mathrm{odd}(G)$ for the number of odd degree vertices in $G$. We say that a cycle $C$ in $G$ is an \emph{pan cycle} if there exists a unique vertex $v\in V(C)$ with $\deg_G(v)=3$, and $\deg_G(w)=2$ for all other $w\in V(C)$. We write $\mathrm{pan}(G)$ for the number of pan cycles in $G$. Next, we say that a cycle $C$ in $G$ is a \emph{bull cycle} if there are two distinct vertices $u,v\in V(C)$ with $\deg_G(u)=\deg_G(v)=3$, and $\deg_G(w)=2$ for all other $w\in V(C)$.
See \Cref{fig:panBull} for an illustration.

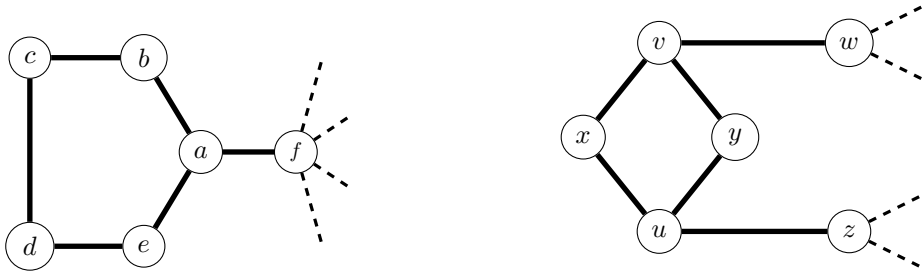
\begin{figure}[!htbp]
    \centering
\begin{subfigure}{.47\textwidth}
\centering
\begin{tikzpicture}[scale=0.5]
\node (v4) at (0,0) [circle, draw] {$d$};
\node (v5) at (3,0) [circle, draw] {$e$};
\node (v1) at (4.5,2.5) [circle, draw] {$a$};
\node (v2) at (3,5) [circle, draw] {$b$};
\node (v3) at (0,5) [circle, draw] {$c$};
\node (v6) at (7,2.5) [circle, draw, scale=0.8] {$f$};

\draw [line width=2pt] (v1)--(v2);
\draw [line width=2pt] (v1)--(v5);
\draw[line width=2pt] (v1)--(v6);
\draw [line width=2pt] (v4)--(v5);
\draw [line width=2pt] (v3)--(v4);
\draw[line width=2pt] (v2)--(v3);

\draw[dashed, line width=1.2pt] (v6)-- (8.5,3.5);
\draw[dashed, line width=1.2pt] (v6)-- (8.5,1.5);
\draw[dashed, line width=1.2pt] (v6)-- (7.7,5);
\draw[dashed, line width=1.2pt] (v6)-- (7.7,0);

\end{tikzpicture}
\end{subfigure}%
\hfill
\begin{subfigure}{.47\textwidth}
\centering
\begin{tikzpicture}[scale=0.5]
\node (w) at (1,2.5) [circle, draw] {$x$};
\node (u) at (3,0) [circle, draw] {$u$};
\node (z) at (5,2.5) [circle, draw] {$y$};
\node (v) at (3,5) [circle, draw] {$v$};
\node (x) at (8,5) [circle, draw] {$w$};
\node (y) at (8,0) [circle, draw] {$z$};

\draw [line width=2pt] (u)--(z);
\draw [line width=2pt] (v)--(z);
\draw[line width=2pt] (v)--(w);
\draw [line width=2pt] (w)--(u);
\draw [line width=2pt] (u)--(y);
\draw[line width=2pt] (x)--(v);

\draw[dashed, line width=1.2pt] (x)-- (10,6);
\draw[dashed, line width=1.2pt] (x)-- (10,4);
\draw[dashed, line width=1.2pt] (y)-- (10,-1);
\draw[dashed, line width=1.2pt] (y)-- (10,1);

\end{tikzpicture}
\end{subfigure}

\caption{An example of a pan cycle $abcde$ ($\deg(a)=3$ and $\deg(b)=\deg(c)=\deg(d)=\deg(e)=2$) and a bull cycle $uyvx$ ($\deg(u)=\deg(v)=3$ and $\deg(x)=\deg(y)=2$).}\label{fig:panBull}
\end{figure}

In any path partition of a graph $G$, each odd-degree vertex of $G$ must be an endpoint of some path. This implies the following.

\begin{observation}\label{obs:lower-bound}
For any graph $G$, it holds that $\pn(G)\geq \odd(G)/2$.
\end{observation}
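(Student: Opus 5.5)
We argue by a parity count of the path ends at each vertex. Fix an arbitrary path partition $\Pp$ of $G$; it suffices to show $|\Pp| \ge \odd(G)/2$, and then take $\Pp$ to be optimal, so that $|\Pp|=\pn(G)$. Without loss of generality every path in $\Pp$ has at least one edge, since removing trivial paths does not increase $|\Pp|$ and keeps $\Pp$ a path partition.

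The first step is to express $\deg_G(v)$ through $\Pp$. Fix a vertex $v$. The paths in $\Pp$ are edge-disjoint and their edge sets cover $E(G)$, so every edge incident to $v$ lies on exactly one path of $\Pp$. Hence $\deg_G(v)=\sum_{P\in\Pp}\deg_P(v)$. Each $P\in\Pp$ is a path, so $\deg_P(v)\in\{0,1,2\}$, and $\deg_P(v)=1$ holds exactly when $v$ is an endpoint of $P$. Let $e(v)$ be the number of paths of $\Pp$ having $v$ as an endpoint. Then $\deg_G(v)\equiv e(v)\pmod 2$. In particular, if $\deg_G(v)$ is odd then $e(v)\ge 1$, that is, every odd-degree vertex is an endpoint of some path of $\Pp$.

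The second step is a double count of the pairs (path, endpoint). Each nontrivial path has exactly two distinct endpoints, so
\[
2|\Pp| \;=\; \sum_{v\in V(G)} e(v) \;\ge\; \#\{v : \deg_G(v) \text{ odd}\} \;=\; \odd(G).
\]
This gives $|\Pp|\ge \odd(G)/2$, as required.

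I do not expect a real obstacle here. The only points needing care are that paths are simple, so $\deg_P(v)\le 2$ and a nontrivial path has two distinct endpoints, and that degenerate one-vertex paths are discarded beforehand.
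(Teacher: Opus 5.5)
Your proof is correct and follows the paper's route: the paper simply notes that in any path partition every odd-degree vertex must be an endpoint of some path. Your parity identity $\deg_G(v)\equiv e(v) \pmod 2$ and the double count $2|\Pp| = \sum_v e(v) \ge \odd(G)$ just spell out that one-line justification in full.
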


The proof of an old result of Lov{\'a}sz \mic{implies} that the lower bound is attained when all the degrees are odd. %
\begin{corollary}[\cite{Gallai, Gallai-2n-over-3-first}]\label{cor:all-degrees-odd}
For a graph $G$ with all the degrees odd, we have $\pn(G) = \odd(G)/2$.
\end{corollary}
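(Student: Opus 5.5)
The lower bound $\pn(G)\ge \odd(G)/2$ is \Cref{obs:lower-bound}, so only the matching upper bound needs proof: when every vertex has odd degree, $E(G)$ should split into exactly $\odd(G)/2$ edge-disjoint paths. My plan is to obtain this from Lov\'asz's decomposition theorem~\cite{Gallai}, in the strengthened form that comes out of its inductive proof (as also used in~\cite{Gallai-2n-over-3-first}). That form says: if $G$ has $u$ odd-degree vertices and $g$ even-degree vertices of positive degree, then $E(G)$ decomposes into at most $u/2+g$ paths and cycles, and every cycle in the decomposition is ``charged'' to a distinct even-degree vertex. When $g=0$ no cycle can be charged, so the decomposition consists of $u/2$ paths. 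Since $u=\odd(G)$ in our setting, this gives $\pn(G)\le \odd(G)/2$.

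If I had to reconstruct this rather than cite it, I would argue by induction on $|E(G)|$. The invariant is the count ``paths plus cycles $\le u/2+g$, with cycles only at even vertices''. I would take an odd-degree vertex $v$, remove the edges of a suitably chosen path or star structure at $v$, and apply induction to the smaller graph. In that smaller graph the parities of a few vertices flip, and some edges of $v$ are reinserted by extending existing paths that end at a neighbour of $v$. Each move has to be checked to keep the invariant. Concretely, when a neighbour $w$ of $v$ becomes even after the deletion, the induction places it in the $g$ count. I then need to show that the edge $vw$ can be re-attached to an element ending at $w$ (a path, or a cycle, which is then opened into a path), so that the charge on $w$ is released.

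The main obstacle is this reattachment bookkeeping. A reinserted edge must not create a repeated vertex on the extended path, and opening a cycle at an even vertex must not leave an uncharged cycle behind. This is exactly the delicate part of Lov\'asz's original argument. For that reason I would state the strengthened Lov\'asz theorem explicitly and cite it, rather than redo the case analysis. The corollary then follows immediately by specialising to $g=0$ and combining with \Cref{obs:lower-bound}.
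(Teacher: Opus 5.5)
Your proposal is correct and is essentially the paper's argument. The lower bound is \Cref{obs:lower-bound}, and the matching upper bound comes from Lovász's decomposition argument~\cite{Gallai, Gallai-2n-over-3-first} specialised to graphs with no even-degree vertex, where it yields $\odd(G)/2$ paths; the paper likewise does not reprove Lovász's result, so citing it rather than carrying out your sketched induction is exactly what is intended.
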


\toappendix{%
\section{Subcubic Graphs}\label{sec:SUB-CUBIC}

Due to \Cref{cor:all-degrees-odd}, for a subcubic graph $G$ it holds that $\pn(G)=\mathrm{odd}(G)/2$ whenever $\deg_G(v)\in \{1,3\}$ for all $v\in V(G)$. In this section, we extend this result by allowing vertices in $G$ with $d(v)=2$. We begin by showing that pan cycles can be safely removed from the graph. %

\begin{lemma}\label{prop:remove-end-cycles}
Let $G$ be a graph. If $G$ has a pan cycle $C$, then $\pn(G)=\pn(G-C)+1$.    
\end{lemma}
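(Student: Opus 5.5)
We prove the two inequalities separately. Let $v$ be the unique vertex of $C$ with $\deg_G(v)=3$, and let $f=vw$ be the third edge at $v$, the one not on $C$. Every other vertex of $C$ has degree $2$ in $G$, so $C$ meets the rest of $G$ only through $v$. In $G-C$ the vertex $w$ keeps its degree and $v$ has degree $1$.

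\emph{Upper bound} $\pn(G)\le \pn(G-C)+1$. Take an optimal path partition $\Pp$ of $G-C$ and add $C$ minus one edge as a new path. A cleaner choice is to let $e=va$ be an edge of $C$ and add the path $C-e$, which runs from $v$ to $a$. Then $e$ is still uncovered, so this alone does not work. Instead we use that $v$ has degree $1$ in $G-C$. Some path $P\in\Pp$ ends at $v$ using the edge $f$. Extend $P$ beyond $v$ along $C$ through all its vertices except that we cannot return to $v$. Concretely, write $C=v u_1u_2\dots u_m v$. Replace $P$ by $P' = P\,v\,u_1u_2\dots u_m$, which is a path because the $u_i$ are not in $G-C$. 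The only edge left uncovered is $u_m v$, which we add as a new single-edge path. This gives $|\Pp|+1$ paths.

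\emph{Lower bound} $\pn(G-C)\le \pn(G)-1$. Take an optimal partition $\Pp$ of $G$. The internal vertices $u_i$ have degree $2$. Since a path visits $v$ at most once, no path can contain all of $C$. The edges of $C$ therefore split into at least two arcs, namely maximal subpaths of $C$ lying within a single path of $\Pp$. Any path using edges of $C$ enters $C$ only through $v$, via $f$ or as an endpoint, so each path's intersection with $C$ is a single arc starting at $v$, or an arc not through $v$ ending inside it.

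The main step is to count the paths whose edges are all inside $C$ or that are destroyed by removing $C$. Restricting each path of $\Pp$ to $E(G-C)$ gives a family covering $G-C$, and each restriction is still a path, possibly empty, because $C$ hangs off the single vertex $v$. It therefore suffices to show that at least one path of $\Pp$ lies entirely in $C$. Suppose not. Then every path meeting $C$ also uses $f$, since its only exit is via $v$ and then $f$. Only one path contains $f$, so at most one path meets $C$, and that path cannot cover all of $C$, which is a contradiction.

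Hence some path lies inside $C$ and vanishes on restriction, giving a partition of $G-C$ into at most $\pn(G)-1$ paths. The main point needing care is checking that restrictions remain single paths. This holds because a path leaving $C$ does so through $v$ and cannot return.
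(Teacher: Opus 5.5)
Your proof is correct and is essentially the paper's argument. For the upper bound, you extend the path of an optimal partition of $G-C$ that ends at the degree-3 vertex along $C$ and add the leftover arc as one new path (you just choose that arc to be a single edge); for the lower bound, you truncate the unique path through $f$ and use that some path lies entirely inside $C$ and vanishes. Two small wording fixes: under the paper's convention $G-C$ deletes only edges, so the $u_i$ are isolated vertices of $G-C$ rather than absent from it (that is why $P$ avoids them), and the exploratory false start at the beginning of your upper-bound paragraph should be dropped.
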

\begin{proof}
Let $C$ be a pan cycle in $G$, and let $x\in V(C)$ be the unique vertex with $\deg_G(x)=3$. Let $y\in V(C)$ be an arbitrary vertex with $y\neq x$. Take a path partition $\Dd$ for $G-C$ with $|\Dd|=\pn(G-C)$. Since $\deg_{G-C}(x)=1$, there exists a $(x,x')$-path $P\in \Dd$ for some $x'\in V(G-C)$, depicted in Figure~\ref{fig:absorb-C}. Observe that $C$ can be decomposed into two $(x,y)$-paths, say $P_1$ and $P_2$. Then, $(\Dd\setminus \{P\})\cup \{x'PxP_1y,yP_2x\}$ is a path partition for $G$ with exactly $\pn(G-C)+1$ paths, which shows that $\pn(G)\leq \pn(G-C)+1$.

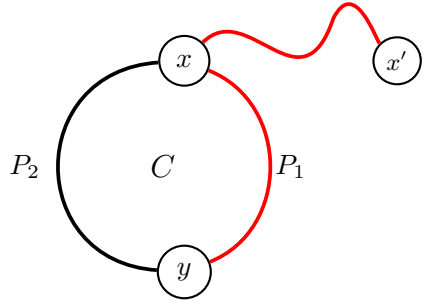
\begin{figure}
    \centering

\begin{tikzpicture}[scale=0.7, thick]

    \node[circle, draw, inner sep=4pt, font=\large] (x) at (0,2) {$x$};
    \node[circle, draw, inner sep=4pt, font=\large] (y) at (0,-2) {$y$};

    \node at (-0.4,0) {\Large $C$};

    \draw[line width=1.4pt,color=red]
        (x) .. controls (2,1.2) and (2,-1.2) .. (y);
    \node at (2,0) {\large $P_1$};

    \draw[line width=1.4pt]
        (x) .. controls (-3,1.8) and (-3,-1.8) .. (y);
    \node at (-3,0) {\large $P_2$};

    \node[circle, draw, inner sep=2.5pt, font=\small] (xp) at (4,2) {$x'$};

    \draw[line width=1.4pt,color=red]
        (x)
        .. controls (1.2,3.2) and (2.2,1.0) ..
        (2.8,2.8)
        .. controls (3.2,3.6) and (3.6,2.4) ..
        (xp);
\end{tikzpicture}
\caption{Extension of the path partition of $G-C$ into a pan cycle $C$.}
\label{fig:absorb-C}
\end{figure}

Conversely, consider a path partition $\Dd$ for $G$ with $|\Dd|=\pn(G)$. Let $z$ be the unique neighbour of $x$ in $G$ with $z\notin V(C)$, and let $P\in \Dd$ be the unique path containing the edge $xz$. Suppose that $P$ is an $(u,w)$-path for some $u\in V(C)$ and $w\in V(G)\setminus V(C)$ (possibly with $u=x$ and $w=z$), such that $x$ is closer to $u$ than $z$, and that $z$ is closer to $w$ than $x$, depicted in Figure~\ref{fig:cut-C}. 

\begin{figure}
    \centering

\begin{tikzpicture}[scale=0.7, thick]

    \tikzstyle{vertex}=[circle, draw, inner sep=4pt, font=\large]

    \node[vertex] (x) at (0,2) {$x$};
    \node[vertex] (y) at (1.8,-1.2) {$y$};
    \node[vertex] (u) at (-1.8,1) {$u$};

    \draw[line width=1.4pt,color=red, dashed]
        (u) .. controls (-0.8,1.95) and (-0.5,1.98) .. (x);

    \draw[line width=1.4pt]
        (x) .. controls (1.2,1.8) and (2.2,0.4) .. (y);

    \draw[line width=1.4pt]
        (y) .. controls (0,-2.6) and (-2.4,-2.0) .. (u);

    \node[circle, draw, inner sep=3pt] (xp) at (4,2) {\footnotesize $w$};
    \node[circle, draw, inner sep=3.2pt, font=\large] (z) at (1.4,3) {$z$};
    \draw [line width=1.4pt,color=red] (x)--(z);

    \draw[line width=1.4pt,color=red]
        (z)
        .. controls (1.8,3.2) and (2.4,1.0) ..
        (2.8,2.8)
        .. controls (3.2,3.6) and (3.6,2.4) ..
        (xp);

     \node at (-0.2,0) {\Large $C$};
     \node at (0.5,3.1) {\Large $P$};
\end{tikzpicture}
\caption{A path partition for $G-C$ via removing the arc $ux$ from the path $P$.}
\label{fig:cut-C}
\end{figure}
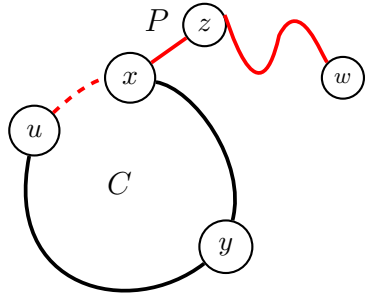

Observe that for all $Q\in \Dd$ with $Q\neq P$, we have either $E(Q)\cap E(C)=\emptyset$ or $E(Q)\subseteq E(C)$. Moreover, there exists $Q_0\in \Dd$ with $Q_0\neq P$ satisfying $E(Q_0)\subseteq E(C)$ as $E(C)\cap E(P)\neq \emptyset$. Then, letting $\Dd_1:=\{Q: Q\in \Dd\text{ and } E(Q)\cap E(C)=\emptyset\}$, we have $\Dd_1\cup \{xzPw\}$ is a path partition for $G-C$. Using $Q_0,P\notin \Dd_1$, we obtain $|\Dd_1|\leq |\Dd|-2=\pn(G)-2$, which implies that $\pn(G-C)\leq \pn(G)-1$, so the result follows.
\end{proof}

We now prove \Cref{thm:subcubic-case}. By \Cref{prop:remove-end-cycles}, it suffices to use $\odd(G)/2$ paths whenever $G$ has no pan cycles, except two edge cases.

\thmSubcubic*
\begin{proof}%
If $G$ is a subdivision of a triangle or a diamond, then the statement is obvious. Assume the otherwise. We will prove the statement by the induction on $|E(G)|$. The claim is clear when $|E(G)|=0$, and assume that it holds for graphs with fewer than $|E(G)|$ edges. If $\mathrm{pan}(G)\geq 1$, then take a pan cycle $C$ in $G$. Note that $\mathrm{odd}(G-C)=\mathrm{odd}(G)$ and that $\mathrm{pan}(G-C)=\mathrm{pan}(G)-1$. Since $G-C$ is connected, by the induction hypothesis, we have $\pn(G-C)=\mathrm{odd}(G-C)/2+\mathrm{pan}(G-C)$. 
The claim follows by \Cref{prop:remove-end-cycles}.

We may assume now that $\mathrm{pan}(G)=0$. By \Cref{cor:all-degrees-odd}, it suffices to find a path partition with $\mathrm{odd}(G)/2$ paths. If $\deg_G(v)\neq 2$ for all $v\in V(G)$, then we have $\mathrm{odd}(G)=|V(G)|$, $\mathrm{even}(G)=0$, and $\mathrm{pan}(G)=0$, so by \Cref{cor:all-degrees-odd}, we are done. Next, take a vertex $v\in V(G)$ with $\deg_G(v)=2$. Let $x$ and $y$ be the neighbors of $v$.

Suppose that $xy\notin E(G)$, and consider $G':=(G-\{vx,vy\})\cup \{xy\}$. Note that $G'$ is connected, $\Delta(G')\leq 3$, and $|E(G')|<|E(G)|$. Moreover, it is easy to see that $G'$ has no pan cycles, and that $\mathrm{odd}(G)=\mathrm{odd}(G')$. By the induction hypothesis, $G'$ has a path partition $\Dd$ with $\mathrm{odd}(G')/2=\mathrm{odd}(G)/2$ paths. Let $P\in \Dd$ be the path containing the edge $xy$. Let $P'$ be the path obtained from $P$ by replacing the edge $xy$ with the path $xvy$. Then, $(\Dd\setminus \{P\})\cup \{P'\}$ is a path partition for $G$ with exactly $\mathrm{odd}(G)/2$ paths, we are done. Hence, we can assume that $xy\in E(G)$. Since $G$ is not a triangle, then we have either $\deg_G(x)=3$ or $\deg_G(y)=3$. Then, by using $\mathrm{pan}(G)=0$, we obtain $\deg_G(x)=\deg_G(y)=3$. Let $G'=G-\{xv,vy,yx\}$.

Suppose that $G'$ is not connected. Then, by using the fact that $G$ is connected, $G'$ has exactly two connected components, say $G_x$ and $G_y$, such that $x\in V(G_x)$ and $y\in V(G_y)$. Note that $\deg_{G_x}(x)=1=\deg_{G_y}(y)$, so neither of $G_x$ and $G_y$ is a subdivision of a triangle or a diamond. Moreover, we have $\mathrm{pan}(G_x)=0=\mathrm{pan}(G_y)$. Hence, by the induction hypothesis, we have a path partition $\Dd_x$ (resp.~$\Dd_y$) for $G_x$ (resp.~$G_y$) with $|\Dd_x|=\mathrm{odd}(G_x)/2$ (resp.~$|\Dd_y|=\mathrm{odd}(G_y)/2$). Since $\deg_{G_x}(x)=1$ (resp.~$\deg_{G_y}(y)=1$), there exists an $(x,x')$-path $P_x\in \Dd_x$ for some $x'\in V(G_x)$ and $(y,y')$-path $P_y\in \Dd_y$ for some $y'\in V(G_y)$, depicted in Figure~\ref{fig:absorp-xvy}. Since $\mathrm{odd}(G_x)/2+\mathrm{odd}(G_y)/2=\mathrm{odd}(G')/2=\mathrm{odd}(G)/2$, we have that $\{x'P_xxvy, y'P_yyx\}\cup (\Dd_x\cup \Dd_y)\setminus \{P_x,P_y\}$ is a path partition for $G$ with exactly $\mathrm{odd}(G)/2$ paths, we are done.

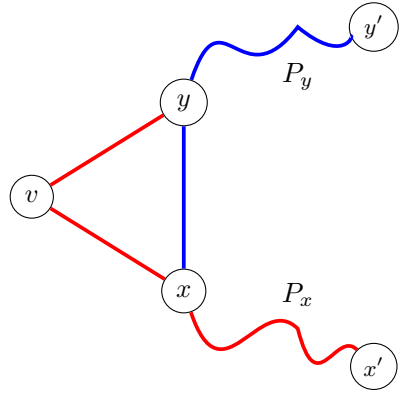
\begin{figure}[!htbp]
    \centering

\begin{tikzpicture}[scale=0.5]

\node (x) at (1,0) [circle, draw] {$x$};
\node (v) at (-3,2.5) [circle, draw] {$v$};
\node (y) at (1,5) [circle, draw] {$y$};

\draw [line width=1.4pt,color=red] (v)--(x);
\draw [line width=1.4pt,color=red] (v)--(y);
\draw[line width=1.4pt,color=blue] (x)--(y);

\node (x') at (6,-2) [circle, draw,inner sep=2.5pt, ] {\footnotesize $x'$};
\node (y') at (6,7) [circle, draw,inner sep=2.5pt, ] {\footnotesize $y'$};

    \draw[line width=1.4pt,color=red]
        (x)
        .. controls (2,-3) and (3,0) ..
        (4,-1)
        .. controls (4.5,-3) and (5,-1) ..
        (x');

    \draw[line width=1.4pt,color=blue]
        (y)
        .. controls (2,8) and (2.5,5) ..
        (4,7)
        .. controls (5.2,6) and (5.5,6.8) ..
        (y');

\node at (4,-0.1) {\large $P_x$};
\node at (4,5.7) {\large $P_y$};

\end{tikzpicture}

\caption{Absorption of the edges $xv,vy,xy$ using $P_x$ and $P_y$.}
\label{fig:absorp-xvy}
\end{figure}

Next, suppose that $G'$ is connected. Then, there exists an $(x,y)$-path $P$ in $G'$. Since $xy\notin E(G')$, the length of $P$ is at least two. If $\deg_{G'}(w)=2$ for all $w\in V(P)\setminus \{x,y\}$, then by using the fact that $G$ is connected, we see that $G'$ itself is $P$, which implies that $G$ is a subdivision of a diamond, a contradiction. Hence, $W:=\{w\in V(P)\setminus\{x,y\}: \deg_{G'}(w)=3\}\neq \emptyset$. Letting $G'':=G'-E(P)$, we have $\deg_{G''}(w)=1$ for all $w\in W$. Moreover, each connected component of $G''$ has at least one vertex from $W$. Therefore, none of the connected components of $G''$ is a subdivision of a triangle or a diamond. Moreover, we have that $\mathrm{odd}(G'')=\mathrm{odd}(G)-2$, and that there are no pan cycles in $G''$. By applying the induction hypothesis for each connected component of $G''$, we find a path partition $\Dd$ for $G''$ with $|\Dd|=\mathrm{odd}(G'')/2=\mathrm{odd}(G)/2-1$. Pick a vertex $w\in W$. Since $\deg_{G''}(w)=1$, there exists a $(w,z)$-path $Q$ in $\Dd$ for some $z\in V(G'')$. If $z\notin V(P)$, then $(\Dd\setminus\{Q\})\cup \{zQwPxy, xvyPw\}$ is a path partition for $G$ with exactly $\mathrm{odd}(G)/2$ paths, as depicted in Figure~\ref{fig:absorp-xvy-when-z-notin-P}, we are done. 

\begin{figure}[!htbp]
    \centering

\begin{tikzpicture}[scale=0.5]

\node (x) at (1,0) [circle, draw] {$x$};
\node (v) at (-3,2.5) [circle, draw] {$v$};
\node (y) at (1,5) [circle, draw] {$y$};

\draw [line width=1.4pt,color=red] (v)--(x);
\draw [line width=1.4pt,color=red] (v)--(y);
\draw[line width=1.4pt,color=blue] (x)--(y);

\node (w) at (6,2) [circle, draw,inner sep=2.5pt] {$w$};

    \draw[line width=1.4pt,color=blue]
        (x)
        .. controls (3,-2) and (5,0) ..
        (w);

    \draw[line width=1.4pt,color=red]
        (y)
        .. controls (3,7) and (5,5) ..
        (w);

\node at (4.2,3.6) {\large $P$};

\node (z) at (14,2.5) [circle, draw,inner sep=2.5pt] {$z$};

\draw[line width=1.4pt,color=blue]
        (w)
        .. controls (9,-2) and (11.5,5) ..
        (z);

\node at (9.5,2.6) {\large $Q$};

\end{tikzpicture}

\caption{Absorption of the edges $xv,vy,xy$ using the paths $P$ and $Q$ when $z\notin V(P)$.}
\label{fig:absorp-xvy-when-z-notin-P}
\end{figure}
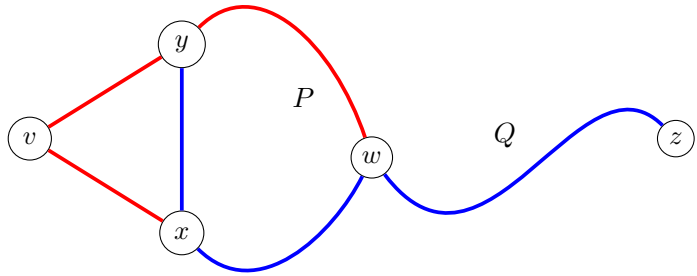

If $z\in V(P)$, without loss of generality, we can assume that $z$ is between $x$ and $w$ along $P$, depicted in Figure~\ref{fig:absorp-xvy-when-z-in-P}. Hence, $(\Dd\setminus\{Q\})\cup \{wQzPxvy, zPyx\}$ is a path partition for $G$ with exactly $\mathrm{odd}(G)/2$ paths, so the result follows.

\begin{figure}[!htbp]
    \centering

\begin{tikzpicture}[scale=0.5]

\node (x) at (1,0) [circle, draw] {$x$};
\node (v) at (-3,2.5) [circle, draw] {$v$};
\node (y) at (1,5) [circle, draw] {$y$};

\draw [line width=1.4pt,color=blue] (v)--(x);
\draw [line width=1.4pt,color=blue] (v)--(y);
\draw[line width=1.4pt,color=red] (x)--(y);

\node (w) at (6,2) [circle, draw,inner sep=2.5pt] {$w$};

    \draw[line width=1.4pt,color=red]
        (y)
        .. controls (3,7) and (5,5) ..
        (w);

\node (z) at (4.6,-1.1) [circle, draw,inner sep=2.5pt] {$z$};

        \draw[line width=1.4pt,color=blue]
        (x)
        .. controls (2.6,-1.7) and (2.6,-1.7) ..
        (z);

        \draw[line width=1.4pt,color=red]
        (z)
        .. controls (5.6,0) and (5.6,0) ..
        (w);

\node at (4.2,3.6) {\large $P$};

\draw[line width=1.4pt,color=blue]
        (w)
        .. controls (14,2) and (12,0) ..
        (10,-2)
        .. controls (9,-2.5) and (7.5,-2) ..
        (z);

\node at (9.5,2.6) {\large $Q$};

\end{tikzpicture}

\caption{Absorption of the edges $xv,vy,xy$ using the paths $P$ and $Q$ when $z\in V(P)$.}
\label{fig:absorp-xvy-when-z-in-P}
\end{figure}
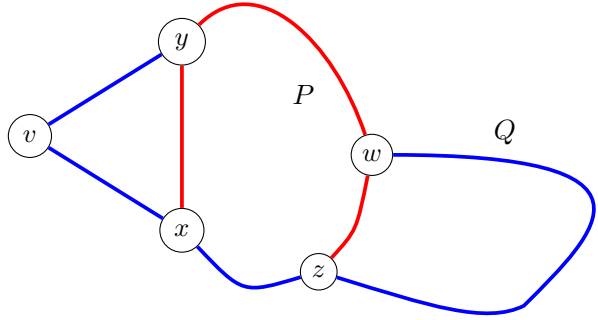

\end{proof}
}%

\section{Almost-Subcubic Graphs} %
\label{sec:ALMOST-SUBCUBIC}

We begin with several observations that will come in useful in the proof of \Cref{thm:main}.
First we preprocess the input to avoid inconvenient corner cases.
In particular, we discard the pan cycles and shorten the bull cycles.
The first facilitates application of \Cref{thm:subcubic-case} and the latter allows us to detect bull cycles in an \fo formula.

\begin{definition}
We say that graph is \emph{nice} if it is: (a) connected, (b) not subcubic, (c) does not have any {pan cycles}, and (d) all bull cycles are triangles.

A {\em nice pair} $(G,V_4)$ is given by a nice graph $G$ and $V_4 \subseteq V(G)$ being the set of vertices of degree at least four in $G$.
\end{definition}

The following lemma justifies that we can restrict ourselves to nice graphs only. 

\begin{restatable}{lemma}{lemNiceGraphs}
\label{lem:niceGraphs}
Let $G$ be a connected graph that is not subcubic.
One can transform $G$ in polynomial time into a nice graph $G_0$
so that $\pn(G)=\pn(G_0)+\mathrm{pan}(G)$ and $\sen(G)=\sen(G_0)$.
\end{restatable}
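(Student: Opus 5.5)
The plan is to build $G_0$ in two phases: first delete all pan cycles, then shorten every bull cycle to a triangle. After each phase I will check connectivity, non-subcubicity, $\sen$, and $\pn$.

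\emph{Phase 1 (pan cycles).} Two distinct pan cycles of $G$ are vertex-disjoint. If they shared a vertex, that vertex would have degree at least $4$ (both cycles contribute two edges, and they cannot share both since each cycle has only one degree-$3$ vertex). Let $C_1,\dots,C_p$ be the pan cycles, with $p=\mathrm{pan}(G)$, and remove them one at a time, applying Lemma~\ref{prop:remove-end-cycles} at each step to lose exactly one unit of $\pn$.

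Removing $C_i$ leaves its degree-$3$ vertex $x_i$ with degree $1$ and changes no other degree outside $C_i$. Hence no new pan cycle appears, and the remaining $C_j$ are still pan cycles. The graph stays connected, since $C_i$ hangs off the rest only through $x_i$. Every vertex of $V_4$ keeps its degree, so the graph stays non-subcubic.

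For $\sen$, I use that some minimum deletion set consists only of edges incident to vertices of degree $\ge 4$: deleting any other edge never helps. No edge of $C_i$ has such an endpoint, and these degrees are unchanged, so $\sen$ is unchanged. This gives $\pn(G)=\pn(G_1)+\mathrm{pan}(G)$, with $G_1$ connected, non-subcubic and pan-free.

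\emph{Phase 2 (bull cycles).} A bull cycle with degree-$3$ vertices $u,v$ consists of two threads between $u$ and $v$, i.e.\ paths whose internal vertices all have degree $2$. Their lengths are $a\le b$, with $b\ge 2$ by simplicity. If $a+b\ge 4$, I replace the pair of threads by the edge $uv$ together with a thread $uwv$ of length $2$. Doing this for every bull cycle gives $G_0$. Two distinct bull cycles share no edges, since a degree-$2$ vertex determines its cycle, so the replacements do not interfere.

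The following are unaffected because only degree-$2$ vertices disappear:
\begin{itemize}
\item connectivity and non-subcubicity, since $u,v$ keep degree $3$;
\item the absence of pan cycles, since no new cycles are created;
\item $\sen$, since the touched edges are not incident to $V_4$.
\end{itemize}

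The real content is that the replacement does not change $\pn$. I reduce this to two local moves on a thread $T$ with internal vertices of degree $2$:
\begin{itemize}
\item[(i)] changing the length of $T$ among values $\ge 2$;
\item[(ii)] changing the length of $T$ from $\ge 2$ to $1$, provided the other $u$–$v$ thread has length $\ge 2$ (so the graph stays simple).
\end{itemize}

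For (i) I normalize an optimal partition. Its paths meet $T$ in segments, and at most one internal vertex of $T$ needs to be a breakpoint: if there are two breakpoints, the segment between them is a whole path, and it can be glued to a neighbouring segment, which is still a path because it only extends into degree-$2$ vertices. A path that crosses $T$ entirely can be rerouted over a thread of any length. A partition with at most one internal breakpoint transfers between thread lengths $\ge 2$ without changing the number of paths. Subdividing also never increases $\pn$, because an edge can simply be replaced by a path.

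\emph{Main obstacle.} The hard step is move (ii): showing that $\pn$ does not grow when a thread of length $\ge 2$ becomes the single edge $uv$. This requires an optimal partition of the longer graph in which that thread contains no internal breakpoint. Such a breakpoint means two paths end at an internal vertex $w$, since $w$ has even degree. I would first try to merge these two paths through $w$. If the merged walk is not a path, it must revisit a vertex, and that can only be $u$ or $v$, which have degree $3$. In that case I would exchange tails with the path through the third edge at $u$ (or $v$), or use the second thread of the bull cycle, to remove the breakpoint without adding paths. This is a small case analysis around $u$ and $v$, in the style of the proof of Lemma~\ref{thm:subcubic-case}.
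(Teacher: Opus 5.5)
There is a genuine gap in move (ii), that is, in showing that shortening a thread to the single edge $uv$ does not increase $\pn$. This is the whole content of the lemma, and you leave it as a plan. Your Phase~1, the target graph (each non-triangle bull cycle becomes a triangle $uvw$), and the checks of connectivity, non-subcubicity, pan-freeness and $\sen$ are fine and agree with the paper.

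The one concrete assertion in your plan is not justified. When the two paths $P_1\supseteq wTu$ and $P_2\supseteq wTv$ ending at the breakpoint $w$ cannot be merged, the repeated vertex need not be $u$ or $v$. Edge-disjoint paths can meet anywhere in the graph, and the degrees of $u$ and $v$ say nothing about where:
\begin{itemize}
\item $P_1$ can leave the bull cycle $B$ through the external edge $e_u$ at $u$, and $P_2$ through the external edge $e_v$ at $v$. Their tails can then meet at a distant vertex, for instance one of degree $\ge 4$.
\item $P_1=wTuT'x$ and $P_2=wTvT'x$ can both end at the same internal vertex $x$ of the other thread $T'$.
\end{itemize}
These configurations can only be excluded by separate rerouting arguments that use optimality and look at all of $B$ at once.
\begin{itemize}
\item In the first configuration, every edge of $T'$ lies on a path confined to $T'$, because the other edges at $u$ and $v$ are already taken. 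Replacing $P_1$, $P_2$ and those paths by $(P_1-B)\,uTv$ and $(P_2-B)\,vT'u$ saves a path.
\item In the second configuration, you must reroute through the paths containing $e_u$ and $e_v$.
\end{itemize}
Only after this do the ``exchange tails at $u$ or $v$'' cases remain, and you do not carry those out either.

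The missing idea is what the paper's proof supplies. The bull cycle $B$ is joined to the rest of $G_1$ only by one external edge at $u$ and one at $v$. Hence every path of an optimal partition that meets $B$ is of one of three kinds:
\begin{itemize}
\item it crosses $B$ from $u$ to $v$ with nonempty parts outside at both ends, which forces some path to lie entirely inside $B$;
\item it meets the outside in a single subpath attached at $u$ or at $v$;
\item it lies entirely inside $B$.
\end{itemize}
The paper then builds a partition of the triangle version with the same number of paths directly, splitting into cases by how many paths lie inside $B$. Your thread-by-thread local view does not see these interactions, so as written you have not shown that $\pn$ is preserved.

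A smaller point of the same kind arises in (i). Gluing the segment between two breakpoints to a neighbouring path closes a cycle when a single path borders both breakpoints, having gone out through $u$ and come back through $v$. This is repaired by re-splitting that union at $u$ and $v$.
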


\lv{%

\begin{proof}
Let $C$ be a pan cycle in $G$, and $x\in V(C)$ be the unique vertex satisfying $\deg_G(x)=3$. Let $G'$ be the graph obtained from $G$ by deleting all the vertices $V(C)\setminus \{x\}$, i.e., $G'$ is the graph obtained from $G-C$ by deleting the isolated vertices. Note that $G'$ is connected. By \Cref{prop:remove-end-cycles}, we have $\pn(G)=\pn(G')+1$. 
\mic{Moreover, all the affected vertices have degree at most 3, hence this modification does not change $\sen(G)$.}
Since $\mathrm{pan}(G')=\mathrm{pan}(G)-1$, by successively removing all the pan cycles in the similar way, we end up with a connected and not subcubic graph $G_1$ without pan cycles such that $\pn(G)=\pn(G_1)+\mathrm{pan}(G)$ and $\sen(G_1)= \sen(G)$.
Next, let $B$ be a bull cycle in $G_1$ that is not a triangle. Let $u,v\in V(B)$ be the vertices satisfying $\deg_{G_1}(u)=\deg_{G_1}(v)=3$. Take an arbitrary vertex $w\in V(B)\setminus\{u,v\}$. Let $B_{uv}$ be the $(u,v)$-path on $B$ not containing $w$. Similarly define $B_{uw}$ and $B_{vw}$. Let $G_1'$ be the graph obtained from $G_1$ by deleting all the vertices $V(B)\setminus \{u,v,w\}$ and adding edges $uv$, $vw$, $uw$ (if they are not already present). Note that $G_1'$ is connected. Moreover, if $\deg_{G_1}(z)\geq 4$ for some $z\in V(G_1)$, then we have $\deg_{G_1}(z)=\deg_{G_1'}(z)$, so $G_1'$ is not subcubic and $\sen(G_1')=\sen(G_1)$. For any path partition of $G_1'$ of minimum size, we can obtain a path partition of $G_1$ of the same size by replacing the edges $uv$, $vw$, $uw$ with the paths $B_{uv}$, $B_{vw}$, $B_{uw}$, respectively, which implies that $\pn(G_1)\leq \pn(G_1')$. Conversely, let $\mathcal{P}$ be a path partition for $G_1$ of size $\pn(G_1)$. 

Suppose that there exists $P\in \Pp$ that is not entirely contained in $B$ such that $P-B$ is not a single path in $G_1'$. Since $\deg_{G_1}(z)=2$ for all $z\in V(B)\setminus \{u,v\}$, we see that $P$ should contain either $B_{uv}$ or $B_{uw}\cup B_{wv}$ in the middle with non-empty subpaths at both ends, depicted in Figure~\ref{fig:P-B-not-single-path}. 

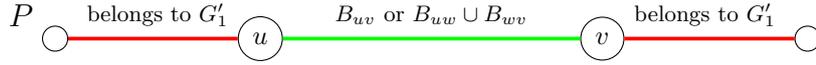
\begin{figure}[!htbp]
\centering
\begin{tikzpicture}[scale=0.45]
\node (u) at (0,0) [circle, draw] {$u$};
\node (v) at (10,0) [circle, draw] {$v$};
\node (x) at (-6,0) [circle, draw] {};
\node (y) at (16,0) [circle, draw] {};

\node at (5,0.7) [color=black] {\footnotesize $B_{uv}$ or $B_{uw}\cup B_{wv}$};

\node at (-7,0.7) [color=black] {\Large $P$};

\node at (-3,0.7) [color=black] {\footnotesize belongs to $G_1'$};

\node at (13,0.7) [color=black] {\footnotesize belongs to $G_1'$};

\draw[line width=1.4pt, color=green] (u)--(v);

\draw[line width=1.4pt,color=red] (u)-- (x);
\draw[line width=1.4pt,color=red] (v)-- (y);

\end{tikzpicture}

\caption{If $P-B$ is not a single path in $G_1'$, then there is an $(u,v)$-subpath in the middle.}
\label{fig:P-B-not-single-path}
\end{figure}

This, in particular, implies that there exists at least one path $Q\in \Pp$ that is entirely contained in $B$. Moreover, since $\deg_{G_1'}(u)=\deg_{G_1'}(v)=1$, there are no other paths $R\in \Pp$ for which $R-B$ is neither empty nor a single path. As a result, together with two paths obtained from $P-B$, we obtain a path partition for $G_1'-\{uv,vw,uw\}$ with at most $|\Pp|-2=\pn(G_1)-2$ paths by deleting edges belonging to $B$ from every path in $\Pp\setminus \{P,Q\}$. Since $u$ and $v$ lie on different paths, we can extend these paths with $uwv$ to the $u$-end and $vu$ to the $v$-end in order to obtain a path partition for $G_1'$ of size $\pn(G_1)$, depicted in Figure~\ref{fig:P-B-not-single-path-extended}, which shows $\pn(G_1')\leq \pn(G_1)$.

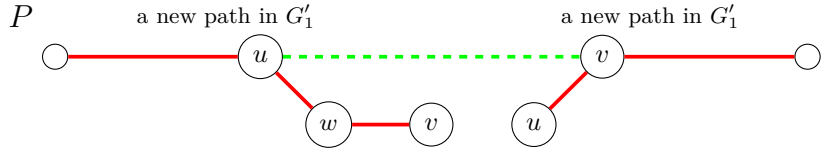
\begin{figure}[!htbp]
\centering
\begin{tikzpicture}[scale=0.45]

\node (u1) at (0,-4) [circle, draw] {$u$};
\node (v1) at (10,-4) [circle, draw] {$v$};
\node (x1) at (-6,-4) [circle, draw] {};
\node (y1) at (16,-4) [circle, draw] {};

\node (u2) at (8,-6) [circle, draw] {$u$};
\node (v2) at (5,-6) [circle, draw] {$v$};
\node (w) at (2,-6) [circle, draw] {\small $w$};

\node at (-1,-2.8) [color=black] {\footnotesize a new path in $G_1'$};

\node at (-7,-2.8) [color=black] {\Large $P$};

\node at (11.4,-2.8) [color=black] {\footnotesize a new path in $G_1'$};

\draw[line width=1.4pt,color=green, dashed] (u1)-- (v1);

\draw[line width=1.4pt,color=red] (u1)-- (x1);
\draw[line width=1.4pt,color=red] (v1)-- (y1);
\draw[line width=1.4pt,color=red] (u1)-- (w);
\draw[line width=1.4pt,color=red] (w)-- (v2);
\draw[line width=1.4pt,color=red] (v1)-- (u2);

\end{tikzpicture}
\caption{Subpaths belonging to $G_1'$ in $P$ can be extended in order to include edges $uv$, $vw$, $uw$.}
\label{fig:P-B-not-single-path-extended}
\end{figure}

Finally, suppose that for every $P\in \Pp$, either $P$ is entirely contained in $B$ or $P-B$ is a single path in $G_1'$. Then, for every path in $\Pp$ that contains at least one edge from $B$, one of the following three cases can happen, depicted in Figure~\ref{fig:types-having-edge-from-B}.

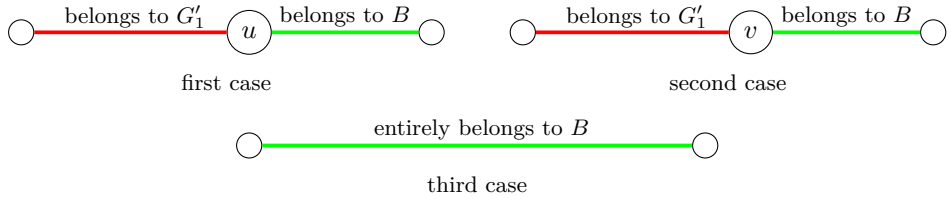
\begin{figure}[!htbp]
\centering
\begin{tikzpicture}[scale=0.3]
\node (u) at (0,0) [circle, draw] {$u$};
\node (a) at (-10,0) [circle, draw] {};
\node (b) at (8,0) [circle, draw] {};

\node at (4.2,0.7) [color=black] {\footnotesize belongs to $B$};

\node at (-5,0.7) [color=black] {\footnotesize belongs to $G_1'$};

\draw[line width=1.4pt,color=red] (u)-- (a);
\draw[line width=1.4pt,color=green] (u)-- (b);

\node at (-1,-2.2) [color=black] {\footnotesize first case};

\node (u1) at (22,0) [circle, draw] {$v$};
\node (a1) at (12,0) [circle, draw] {};
\node (b1) at (30,0) [circle, draw] {};

\node at (26.2,0.7) [color=black] {\footnotesize belongs to $B$};

\node at (17,0.7) [color=black] {\footnotesize belongs to $G_1'$};

\draw[line width=1.4pt,color=red] (u1)-- (a1);
\draw[line width=1.4pt,color=green] (u1)-- (b1);

\node at (21,-2.2) [color=black] {\footnotesize second case};

\node (a2) at (0,-5) [circle, draw] {};
\node (b2) at (20,-5) [circle, draw] {};

\node at (10.2,-4.3) [color=black] {\footnotesize entirely belongs to $B$};

\draw[line width=1.4pt,color=green] (a2)-- (b2);

\node at (10,-6.7) [color=black] {\footnotesize third case};

\end{tikzpicture}
\caption{There are three cases for every path in $\Pp$ that has at least one edge from $B$.}
\label{fig:types-having-edge-from-B}
\end{figure}

If there are at least two paths in $\Pp$ that satisfies the third case, then we obtain a path partition for $G_1'-\{uv,vw,uw\}$ with at most $|\Pp|-2=\pn(G_1)-2$ paths by deleting edges belonging to $B$ from every path in $\Pp$. Then, by adding new paths $uwv$ and $uv$, we find a path partition for $G_1'$ with at most $\pn(G_1)$ paths, so $\pn(G_1')\leq \pn(G_1)$ holds. If there is exactly one path $P\in \Pp$ that satisfies the third case, then without loss of generality, we can assume that there is a path $Q\in \Pp$ that satisfies the first case. Hence, we obtain a path partition for $G_1'-\{uv,vw,uw\}$ with $|\Pp|-1=\pn(G_1)-1$ paths by deleting edges belonging to $B$ from every path in $\Pp\setminus \{P\}$. Then, we obtain a path partition for $G_1'$ of size $\pn(G_1)$ by adding a new path $vuw$ and extending $Q$ by adding the edge $uv$ to the $u$-end, which shows $\pn(G_1')\leq \pn(G_1)$. If there are no paths in $\Pp$ satisfying the third case, using $\deg_{G_1}(u)=\deg_{G_1}(v)=3$, we can conclude that there is exactly one path $P_u\in \Pp$ satisfying the first case and exactly one path $P_v\in \Pp$ satisfying the second case. Then, $B$ is entirely contained in $P_u\cup P_v$, which implies that the part of $P_u$ belonging to $B$ ends at $v$, and that the part of $P_v$ belonging to $B$ ends at $u$. Then, $(\Pp\setminus \{P_u,P_v\})\cup \{P_u-B,P_v-B\}$ is a path partition for $G_1'-\{uv,vw,uw\}$ with $|\Pp|=\pn(G_1)$ paths, where $v\notin P_u-B$ and $u\notin P_v-B$. Hence, we obtain a path partition for $G_1'$ of size $\pn(G_1)$ by extending $P_u-B$ with the path $uwv$ to the $u$-end and extending $P_v-B$ with the edge $vu$ to the $v$-end, which shows $\pn(G_1')\leq \pn(G_1)$. We proved $\pn(G_1')\leq \pn(G_1)$ holds in all cases, so we have $\pn(G_1')=\pn(G_1)$.

Note that $G_1'$ has one less bull cycle that is not a triangle compared to $G_1$. Then, by successively reducing such bull cycles in the similar way, we end up with a connected and not subcubic graph $G_0$ whose all bull cycles are triangles such that $\pn(G_1)=\pn(G_0)$ and $\sen(G_1)=\sen(G_0)$. Using $\pn(G)=\pn(G_1)+\mathrm{pan}(G)$ and $\sen(G_1)=\sen(G)$, the result follows.
\end{proof}
}

Instead of working with the parameter $\sen(G)$ directly, we consider 
the number of edges touching the high-degree vertices, that is,
 $\high(G) = \sum_{v \in V_4} \deg_G(v)$.

\begin{lemma}\label{lem:high-sen}
    Let $(G,V_4)$ be a nice pair.
    Then $\high(G) \le 8 \cdot \sen(G)$.
\end{lemma}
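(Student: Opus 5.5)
Let $S \subseteq E(G)$ be a set of $s = \sen(G)$ edges such that $G - S$ is subcubic. The plan is to charge the degrees of vertices in $V_4$ to the edges of $S$.

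For every $v \in V_4$, since $\deg_{G-S}(v) \le 3$, at least $\deg_G(v) - 3$ edges of $S$ are incident to $v$. As $\deg_G(v) \ge 4$, we have $\deg_G(v) - 3 \ge 1$ and $\deg_G(v) - 3 \ge \deg_G(v)/4$. The second inequality is equivalent to $3\deg_G(v) \ge 12$, which holds. Hence
\[
\deg_G(v) \le 4\,(\deg_G(v)-3) \le 4\, d_S(v),
\]
where $d_S(v)$ is the number of edges of $S$ incident to $v$.

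Summing over $v \in V_4$ gives
\[
\high(G) = \sum_{v\in V_4}\deg_G(v) \le 4\sum_{v\in V_4} d_S(v) \le 4\sum_{v\in V(G)} d_S(v) = 8|S| = 8\,\sen(G),
\]
using that every edge of $S$ has exactly two endpoints.

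There is no real obstacle here. The only care needed is the pointwise inequality $\deg_G(v) \le 4(\deg_G(v) - 3)$ for degrees at least four, and the factor $2$ from double counting endpoints. Niceness of $G$ is not needed beyond the definition of $V_4$.
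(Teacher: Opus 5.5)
Your proof is correct and follows essentially the same route as the paper's: take a minimum edge set $S$ with $G-S$ subcubic, use $\deg_G(v) \le 4\, d_S(v)$ for each $v \in V_4$ (valid since $\deg_G(v) \ge 4$), and double-count endpoints of edges in $S$ to get the factor $2$. Your side remark is also right: niceness of $G$ plays no role beyond the definition of $V_4$.
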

\begin{proof}
    Let $S \sub E(G)$ be a minimum-size set for which $G - S$ is subcubic.
    For $v \in V_4$ let $\deg_S(v)$ denote the number of edges from $S$ incident to $v$.
    Observe that $\deg_S(v) \ge \deg_G(v) - 3$ and $\deg_G(v) \ge 4$ so $\deg_G(v) \le 4\cdot \deg_S(v)$.
    Since each edge from $S$ touches at most two vertices from $V_4$ we have $\sum_{v \in V_4} \deg_S(v) \le 2\cdot |S|$.
    The lemma follows.
\end{proof}

The special treatment of bull cycles is related to the following property.

\begin{definition}[Bull-free Paths]
    A {\em bull triangle} is a bull cycle of length three.
    The two degree-3 vertices of a bull triangle form a {\em bull-pair}.
    A path is {\em bull-free} if its endpoints do not form a bull-pair.
    A family of paths $\Pp$ is bull-free if every $P \in \Pp$ is bull-free. 
\end{definition}

The next lemma allows us to assume that in an optimal path partition, every path visiting $V_4$ is bull-free.

\begin{lemma}\label{lem:bull-free}
    Let $(G,V_4)$ be a nice pair, $\Pp$ be a path partition for $G$ of minimum size, and $\Pp_4 \sub \Pp$ be the subfamily of paths that visit at least one vertex from $V_4$.
    Then $\Pp_4$ is bull-free. 
\end{lemma}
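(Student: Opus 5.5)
We argue by contradiction: assuming some $P \in \Pp_4$ has endpoints forming a bull-pair, we build a path partition of $G$ with fewer paths, contradicting the minimality of $\Pp$.

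\emph{Setup.} Let $P \in \Pp_4$ be a $(u,v)$-path where $\{u,v\}$ is a bull-pair. Let $T = uvw$ be the corresponding bull triangle, so $\deg_G(u)=\deg_G(v)=3$ and $\deg_G(w)=2$. Let $u'$ and $v'$ be the third neighbours of $u$ and $v$ outside $T$.

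Since $P$ visits a vertex of $V_4$ and $u,v,w \notin V_4$, $P$ is not contained in $T$. Its internal vertices avoid $u$ and $v$, and $w$ has both neighbours in $\{u,v\}$. Hence $P$ starts with the edge $uu'$, ends with the edge $v'v$, and uses no edge of $T$.

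\emph{Covering of $T$.} Next we analyse how the three edges $uv$, $uw$, $wv$ are covered by the other paths of $\Pp$.
\begin{itemize}
\item At $u$, the edge $uu'$ is used by $P$, which ends at $u$. So any path through $u$ as an internal vertex would need two edges at $u$ other than $uu'$, that is, both $uv$ and $uw$.
\item The same holds at $v$ with the edges $vu$ and $vw$.
\item Since $w$ has degree $2$, the edges $uw$ and $wv$ lie on a common path unless some path ends at $w$.
\end{itemize}
Every path covering an edge of $T$ other than $P$ therefore stays entirely inside $T$: it cannot leave through $u'$ or $v'$, since those edges belong to $P$. The edges of $T$ thus form a cycle of length three covered by paths contained in $T$, and this needs at least two paths $Q_1, Q_2 \in \Pp \setminus \{P\}$.

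\emph{Re-routing.} Now replace $P, Q_1, Q_2$ (and any further paths inside $T$) by the two paths
\[
u'PvTu \quad\text{(i.e. } P \text{ minus its last edge reversed, extended via } v\,w\,u \text{)}
\]
and a second path, chosen so the total drops by one. The cleanest choice is:
\begin{itemize}
\item extend $P$ at $v$ through $vw, wu$. This is legal, since $u$ is the other endpoint of $P$ and becoming a cycle must be avoided. So instead extend $P$ at its $v$-end only by $vw$, giving a path ending at $w$.
\item Take the new path $wu\,v$, or better, extend $P$ at the $u$-end by $uv$ and at the $v$-end by $vw$. The result is a trail $v\,u\,u' \dots v' v\,w$ that repeats $v$, so this fails as well.
\end{itemize}
Instead I will use the following. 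Replace $P$ by $u'\dots v' v w$ (a path, since $w \notin P$) together with the path $u'$-side edge handling: $P$ minus the edge $uu'$, extended by $vw$ and then by $wu$, gives $u'' \dots v' v w u$. Here $u$ is not on the remainder, so this is a path, and it covers $P - uu' + vw + wu$. The leftover edges $uu'$ and $uv$ form the path $u'uv$. This yields two paths in place of at least three, so $|\Pp|$ decreases by at least one.

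\emph{Main obstacle.} The delicate point is the case where $P$ has length one, or where $uu'$ is the only edge of $P$ beyond $u$, so that $P - uu'$ degenerates. In that case I would instead split at the $v$-end symmetrically, or merge $P$ with one triangle path directly, using that $P$ contains a vertex of $V_4$ and hence has length at least two. I expect this careful check — that the rebuilt objects are genuine paths and together cover exactly the edges of $P \cup T$ — to be the main work.
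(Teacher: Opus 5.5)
Your proof is correct and follows the paper's argument. You show that $P$ leaves $T$ through $uu'$ and $vv'$, so the edges of $T$ form a component of $G-E(P)$ needing at least two further paths, and then you re-cover $E(P)\cup E(T)$ with two paths; your pair $u'\dots v'vwu$ and $u'uv$ works, though the paper uses the simpler pair $P+vw$ and $wuv$ (which you wrote down in your first bullet before discarding it), the degenerate case in your last paragraph cannot occur since $P$ starts with $uu'$, ends with $v'v$, and $u'\neq v$, and the contradictory false starts in the re-routing paragraph should be deleted.
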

\begin{proof}
Suppose that $\Pp_4$ is not bull-free and let $P \in \Pp_4$ be the path with endpoints $u,v$ that are the degree-3 vertices of a bull triangle $T=(u,v,w)$.
    Because $V(P) \cap V_4 \ne \emptyset$, the path $P$ cannot be fully contained in $T$.
    This implies that $w \not\in V(P)$ and $uv\notin E(P)$. %
    Then, the triangle $T$ forms a connected component of the graph $G \setminus E(P)$.
    Hence, $\Pp$ must contain two paths $Q_1,Q_2$ covering $T$.
    But appending $w$ to the $v$-end of $P$ yields a valid path $P'$, depicted in Figure~\ref{fig:bull-free-paths}. 
    Then $\Pp':=\Pp \setminus \{P,Q_1,Q_2\} \cup \{P', wuv\}$ is a path partition for $G$ with $|\Pp'|<|\Pp|$, a~contradiction.
\end{proof}

\begin{figure}[h]
\centering
\includegraphics[width=0.8\textwidth]{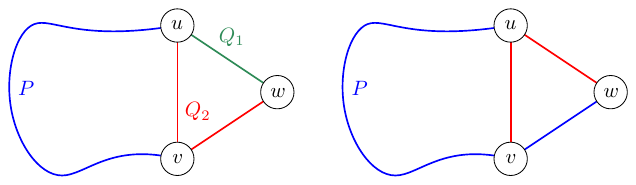}
\caption{Extending the path $P$ with the edge $vw$ %
reduces the number of used paths by one.
}
\label{fig:bull-free-paths}
\end{figure}

\section{Patterns} %
\label{sec:PATTERNS}

\subsection{Traces and encoding}

Let $(G,V_4)$ be a nice pair.
For a path partition $\Pp$ of $G$ let $\Pp_4 \sub \Pp$ be the subfamily of paths that visit at least one vertex from $V_4$.
It is easy to see that $|\Pp_4| \le \high(G)$ so we can afford guessing some information about each path $P \in \Pp_4$.
Specifically, we will define a {\em pattern} encoding the endpoints of each  $P \in \Pp_4$ and its visits in $N_G[V_4]$.
However, we cannot afford guessing all the endpoints, so we will 
encode the relevant vertices outside $N_G[V_4]$ as variables and record their degrees.
To make the exposition clearer, we first define a {\em trace}, being a representation of a single path, then we
explain how the traces are extracted from $\Pp_4$, and finally we provide a full definition of a pattern. 

\begin{definition}[Trace]
    Let $(G,V_4)$ be a nice pair and $\ell$ be an integer. 
    We define $X_\ell \coloneqq \{x_1, \dots, x_\ell\}$. %
    The elements of $X_\ell$ are called {\em variables}.
    
    A {\em $(G,V_4,\ell)$-trace} is a sequence over the alphabet $N_G[V_4] \cup X_\ell$ in which no symbol occurs more than once and no variable $x_i$ appears next to any $v \in V_4$.
\end{definition}

The last condition reflects the fact that whenever a path $P$ visits a vertex $v \in V_4$ then the consecutive vertices on $P$ must belong to $N_G(v) \sub N_G[V_4]$ while the variables $x_1, \dots, x_\ell$ are meant to encode vertices outside $N_G[V_4]$.
See \Cref{fig:outline:pattern} for an example.

In the later arguments (\Cref{lem:patern-conditions}) it will be helpful to avoid the same pair of consecutive symbols appearing twice in the traces.
The following structure allows us to impose this property.

\begin{definition}[Terminal Collection]\label{def:terminal}
    Let $(G,V_4)$ be a nice pair and $\Pp$ be a family of edge-disjoint paths in $G$.
    A set of vertices $U \sub V(G)$ is called a {\em terminal collection} for $\Pp$
    if
    \begin{enumerate}
        \item $N_G[V_4] \sub U$,
        \item for each $P \in \Pp$ the endpoints of $P$ belong to $U$,
        \item for each $u,v \in U$, if two paths $P_1,P_2 \in \Pp$ visit both $u$ and $v$ then at least one of $P_1,P_2$ visits another vertex $w \in U$ between $u$ and $v$.
    \end{enumerate}
\end{definition}

\begin{lemma}\label{lem:terminal} %
    Let $(G,V_4)$ be a nice pair, $\Pp$ be a path partition of $G$, and $\Pp_4 \subseteq \Pp$ be the subfamily of paths that visit any vertex from $V_4$.
    Then $\Pp_4$ admits a terminal collection of size at most $16\cdot\high(G)$.
\end{lemma}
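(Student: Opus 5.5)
Let $k \coloneqq \high(G)$. We first bound the objects involved and then add vertices to $U$ greedily. Every path in $\Pp_4$ visits some $v \in V_4$, and each such visit uses at least one edge incident to $V_4$. These edges are distinct across paths, so $|\Pp_4| \le k$. Also $|N_G[V_4]| \le |V_4| + k \le 2k$, since $|V_4| \le k/4$.

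We start with $U_0 \coloneqq N_G[V_4] \cup \{\text{endpoints of paths in } \Pp_4\}$, so $|U_0| \le 2k + 2k = 4k$. Conditions 1 and 2 of \Cref{def:terminal} then hold, and adding further vertices to $U$ never breaks them. For condition 3, view each path $P \in \Pp_4$ as a sequence of \emph{segments}, namely the maximal subpaths whose internal vertices avoid $U$. Condition 3 says exactly that no two distinct paths contain segments with the same pair of endpoints $\{u,v\}$.

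Since the paths are edge-disjoint, the same path cannot use the pair $\{u,v\}$ twice either, because a vertex occurs at most once on a path. So the number of segments is $\sum_{P}(|U\cap V(P)|-1)$. The plan is to repair violations by inserting one new vertex into $U$ per offending segment. Take an offending segment $uPv$ with at least one internal vertex, and add some internal vertex $w$ of it to $U$. If the segment is a single edge $uv$, it cannot be shared by two edge-disjoint paths, so two such paths always include at least one segment of length at least $2$.

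The main obstacle is that adding $w$ splits segments of every path through $w$, not just $P$. This creates new pairs that may collide, so the process need not terminate quickly. To control this, I would exploit subcubicity outside $V_4$. A vertex $w \notin N_G[V_4]$ has degree at most $3$, so at most one path passes through it as an internal vertex. Adding $w$ therefore splits only the one segment it lies in (plus possibly an endpoint role, already in $U$).

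So I would count potentials. For each pair $\{u,v\} \subseteq U_0$ used by several paths, all but one of those paths get one internal vertex added. After this, every original segment with endpoints in $U_0$ is unique, except for newly created ones. A new pair $\{u,w\}$ involves a fresh $w$ that lies internally on only one path, so it cannot collide. The number of added vertices is at most the number of segments with respect to $U_0$. That number is at most $\sum_P |U_0\cap V(P)|$, which is at most the sum of visits: each vertex of $N_G[V_4]$ is visited by at most $\deg$-many paths, and the endpoints contribute $2|\Pp_4|$.

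Bounding $\sum_{u\in N_G[V_4]}\deg(u) \le k + 3\cdot k + \ldots$ gives $O(k)$, and careful bookkeeping should reach $|U| \le 16k$.
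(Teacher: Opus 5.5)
Your proof is correct, but it builds the terminal collection differently from the paper.

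\textbf{The paper's route.} Starting from the same $U_0$, the paper does no adaptive repair; it simply takes $U=N_G[U_0]$. The size bound $|U|\le 4|U_0|\le 16\cdot\high(G)$ is then immediate from $|N_G(v)\setminus U_0|\le 3$ for $v\in U_0$. Condition 3 is checked by contradiction. In a violating pair $u,v$, some segment has an internal vertex, and this forces $u,v\notin U_0$, because the vertex following a $U_0$-vertex on the path already lies in $U$. Hence $u,v$ are internal on both edge-disjoint paths, so they belong to $V_4\subseteq U_0$, a contradiction.

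\textbf{Your route.} You split each duplicated $U_0$-segment at one internal vertex $w$, keeping the possible single-edge copy. Your key observation is that $w\notin U_0$ has degree at most $3$ and is not an endpoint, so it lies on no other path of $\Pp_4$ and the new segments cannot collide. This is the same degree argument the paper applies to $u,v$, now applied to the inserted vertices.

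\textbf{What each buys.} The paper's choice gives a one-line size bound and a definition independent of how the paths run. Yours adds fewer vertices, but it needs the visit count you left as ``careful bookkeeping''.

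\textbf{Closing the count.} The count does close, but your claim that the endpoints contribute only $2|\Pp_4|$ visits is an undercount. An endpoint vertex outside $N_G[V_4]$ may also be an internal vertex of another path. Instead, bound visits by $\deg_G(v)$ on $V_4$ and by $3$ on every other vertex of $U_0$; there are at most $k+2k$ such vertices, where $k=\high(G)$. This gives at most $k+3k+6k=10k$ segments with respect to $U_0$. Hence you insert at most $10k$ vertices, and $|U|\le 4k+10k=14k\le 16k$. Since the constant is part of the statement, you should write this out rather than leave it as an expectation.
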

\begin{proof}
    We first set $U_0$ as the union of $N_G[V_4]$ and all the endpoints of paths in $\Pp_4$.
    We have $|N_G[V_4]| \le 2\cdot \high(G)$ and $|\Pp_4| \le \high(G)$ hence $|U_0| \le 4\cdot\high(G)$.
    Next, let $U = N_G[U_0]$.
    Observe that for each $v \in U_0$ we have $|N_G(v) \sm U_0| \le 3$ because if $\deg_G(v) > 3$ then $v \in V_4$ and $N_G(v) \sub U_0$.
    Therefore, we can estimate $|U| \le 4\cdot |U_0| \le 16\cdot \high(G)$.

    We claim that $U$ forms a {terminal collection} for $\Pp_4$.
    Suppose not and consider $u,v \in U$ and $P_1,P_2 \in \Pp_4$ so that each of $P_1,P_2$ visits $u,v$ but no vertex from $U$ is visited by them in between.
    Since $P_1,P_2$ are edge-disjoint, at least one of $P_1,P_2$ visits at least one vertex from $V(G)$ between $u,v$.
    However, we assumed that no such vertex belongs to $U$, which implies that $u,v \not \in U_0$.
    Consequently, none of $u,v$ forms an endpoint of $P_1$ or $P_2$.
    As $u,v$ are internal vertices in both $P_1,P_2$, we infer that $\deg_G(u) \ge 4$ and $\deg_G(v) \ge 4$ hence $u,v \in V_4 \sub T_0$.
    The obtained contradiction concludes the proof.
\end{proof}

To facilitate working with the subfamily $\Pp_4 \sub \Pp$, we define a structure that encapsulates its properties.

\begin{definition}[Covering family]\label{def:cover}
Let $(G,V_4)$ be a nice pair and $\Qq$ be a family of edge-disjoint paths in $G$.
We say that $\Qq$ is a {\em covering family} if every edge $e \in E(G)$ incident to $V_4$ is contained in some path from $\Qq$.
\end{definition}

\begin{figure}[h]
\centering
\includegraphics[width=0.75\textwidth]{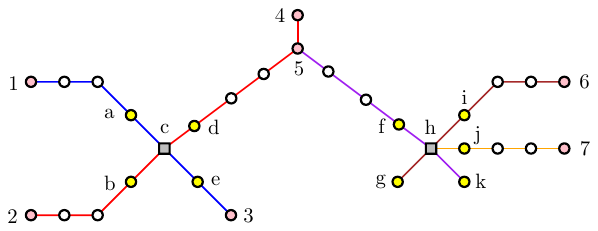}
\caption{The vertices in $V_4$ ($c$ and $h$) are gray squares, the ones in $N(V_4)$ are yellow disks labeled with the remaining letters, and the vertices in $V_X$ are pink disks labeled with numbers.
The highlighted vertices form a terminal collection for the covering family comprising the five paths represented by
the edge colors.
The pattern contains the following traces: red $(x_2,b,c,d,x_5,x_4)$, blue $(x_1,a,c,e,x_3)$,
purple $(x_5,f,h,k)$, brown $(g,h,i,x_6)$, orange $(h,j,x_7)$.
}
\label{fig:outline:pattern}
\end{figure}

\begin{definition}[Encoding]\label{def:match}
    Let $(G,V_4)$ be a nice pair and $\Qq$ be a covering family. %
    Next, let $\Tt$ be a collection of $(G,V_4,\ell)$-traces for some $\ell \in \nn$ and $d$ be a function  $X_\ell \to \{1,2,3\}$.
    We say that the pair $(\Tt, d)$ {\em encodes} $\Qq$ if it can be obtained in the following way.

    Let $U$ be some terminal collection for $\Qq$. %
    Next, let $V_X$ be defined as $U \sm N_G[V_4]$, $\ell = |V_X|$, and
    $f$ be any bijection $X_\ell \to V_X$.
    
    For each $P \in \Qq$, we construct the trace $T_P$ by orienting $P$ in any direction, recording the occurrences of vertices from $N_G[V_4] \cup V_X$,
    and replacing each occurrence of $v \in V_X$ with $f^{-1}(v) \in X_\ell$.
    The obtained pair is $(\Tt, d)$ where $\Tt$ comprises traces $T_P$, for each $P \in \Qq$, and $d \colon X_\ell \to \{1,2,3\}$ is defined as $d(x_i) = \deg_G(f(x_i))$.
\end{definition}

The outcome of the above construction is not unique because the orientations of the paths, the terminal collection, and the bijection $f$ are arbitrary.

\subsection{Pattern definition}

In order to define a combinatorial object describing the encoding, we first need more notation to work with traces.

\begin{definition}[Trace Notation]
    For a $(G, V_4, \ell)$-trace $T = (t_1,t_2,t_3,\dots,t_r)$ we define $\edges(T)$ as the set of those unordered pairs among $t_1t_2,\, t_2t_3, \dots, t_{r-1}t_r$ that contain a symbol from $V_4$.
    Next, we define $\pairs(T)$ as the set of the remaining pairs from the above list.   
    For a collection $\cal T$ of traces, we define $\edges(\Tt)$ (resp. $\pairs(\Tt))$ as the union of $\edges(T)$ (resp. $\pairs(T))$ over $T \in \cal T$. 
    For $y \in N_G[v_4] \cup X_\ell$ we define $\deg_T(y)$ as:
    \begin{enumerate}
        \item[0:] if $y$ does not appear on $T$,
        \item[1:] if $y$ is an endpoint of $T$,
        \item[2:] otherwise.
     \end{enumerate}
\end{definition}

We are finally ready to properly define a {\em pattern}. 

\begin{definition}[Pattern]\label{def:pattern}
    A {\em $(G,V_4,\ell)$-pattern} is a pair $({\cal T}, d)$, where $\cal T$ is a collection of $(G,V_4,\ell)$-traces and $d$ is a function $X_\ell \to \{1,2,3\}$
    with the following properties.
    \begin{enumerate}
        \item The sets $\pairs(T_1), \pairs(T_2)$ are disjoint for distinct $T_1,T_2 \in \cal T$.\label{item:pattern:ends}
        \item The sets $\edges(T_1), \edges(T_2)$ are disjoint for distinct $T_1,T_2 \in \cal T$.\label{item:pattern:edges}
        \item The set $\edges(\Tt)$ equals the set of edges incident to $V_4$ in $G$.\label{item:pattern:cover}
        \item For each $x_i\in X_\ell$ we have $\sum_{T\in\Tt} \deg_T(x_i)\le d(x_i)$.\label{item:pattern:deg-x}
        \item For each $v \in N_G(V_4)$ we have $\sum_{T\in\Tt} \deg_T(v)\le \deg_G(v)$.\label{item:pattern:deg-n}
    \end{enumerate}
\end{definition}

The conditions (2, 3) ensure that a pattern encodes all the information about the edges incident to $V_4$, i.e., which path utilizes which edges.
The last two conditions reflect the fact that the number of times a vertex is ``used'' in a pattern is bounded by its degree.
They will also come in useful for estimating the total number of different patterns.
Next, we need to justify that any $(\Tt, d)$ obtained via \Cref{def:match} is indeed a valid pattern.
We also utilize \Cref{lem:terminal} to bound the number of variables needed in an encoding.

\begin{restatable}{lemma}{lemPatternEncoding}
\label{lem:pattern-encoding}
    Let $(G,V_4)$ be a nice pair, $\Qq$ be a covering family,
    and $(\Tt, d)$ be a pair constructed according  to \Cref{def:match}.
    Then $(\Tt, d)$ is a $(G,V_4,\ell)$-pattern.
    Moreover, there exists  a $(G,V_4,\ell)$-pattern encoding $\Qq$ which satisfies  $\ell \le 16\cdot \high(G)$.
\end{restatable}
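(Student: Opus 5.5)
We verify the five conditions of \Cref{def:pattern} one by one for a pair $(\Tt,d)$ produced by \Cref{def:match} from a terminal collection $U$, a bijection $f\colon X_\ell\to V_X$ with $V_X=U\sm N_G[V_4]$, and orientations of the paths. We first check that each $T_P$ is a valid $(G,V_4,\ell)$-trace. No symbol repeats because $P$ is a path and $f$ is a bijection. If $v\in V_4$ appears on $P$, its neighbours on $P$ lie in $N_G(v)\sub N_G[V_4]\sub U$, so they are recorded as themselves. Hence the symbols adjacent to $v$ in $T_P$ are exactly the neighbours of $v$ on $P$, and none of them is a variable. Also, $d$ takes values in $\{1,2,3\}$: every vertex of $V_X$ lies outside $V_4$, and every vertex of $V_X$ has positive degree because it lies on some path.

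Conditions \ref{item:pattern:edges} and \ref{item:pattern:cover} follow from the same observation. A consecutive pair in $T_P$ containing some $v\in V_4$ corresponds to an actual edge of $P$ incident to $v$. Conversely, every edge of $P$ incident to $V_4$ appears as a consecutive pair, since both of its endpoints lie in $U$. Thus $\edges(T_P)$ is exactly the set of edges of $P$ incident to $V_4$. Edge-disjointness of $\Qq$ gives \ref{item:pattern:edges}, and the covering property gives \ref{item:pattern:cover}.

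For condition \ref{item:pattern:ends}, suppose $\pairs(T_{P_1})$ and $\pairs(T_{P_2})$ share a pair $\{a,b\}$. Then both $P_1$ and $P_2$ visit the vertices of $U$ corresponding to $a$ and $b$ consecutively among the vertices of $U$. This contradicts property 3 of \Cref{def:terminal}. This is the one place where the terminal collection is essential, and I expect it to be the main subtle step: we must make sure that ``consecutive in the trace'' really means ``no vertex of $U$ in between''. That holds because every vertex of $U$ is recorded, since $U=N_G[V_4]\cup V_X$.

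For conditions \ref{item:pattern:deg-x} and \ref{item:pattern:deg-n}, fix a vertex $u$ (either $u=f(x_i)$ or $u\in N_G(V_4)$). Since $\deg_{T_P}(u)$ equals the number of edges of $P$ incident to $u$, and the paths are edge-disjoint, we get $\sum_P\deg_{T_P}(u)\le\deg_G(u)$. When $u=f(x_i)$, the right-hand side is $d(x_i)$. For the \emph{moreover} part, Lemma~\ref{lem:terminal} as stated concerns $\Pp_4$. Its proof, however, only uses that $\Qq$ consists of edge-disjoint paths, that $|\Qq|\le\high(G)$, and that $N_G[V_4]$ is included. So I would either reprove it directly for the covering family $\Qq$, or apply it with $\Pp=\Qq$ together with single-edge paths. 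The remaining worry is whether $|\Qq|\le\high(G)$ holds. If $\Qq$ contains paths avoiding $V_4$, we still have $|\Qq|$ bounded only if such paths are excluded, so I would note that the covering families of interest consist of paths visiting $V_4$. This gives a terminal collection $U$ with $|U|\le 16\cdot\high(G)$, hence $\ell=|V_X|\le|U|\le16\cdot\high(G)$, and the construction above yields the required pattern.
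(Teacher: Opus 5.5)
Your proof follows the paper's route: check that each $T_P$ is a valid trace, derive the five pattern conditions from the terminal-collection property, edge-disjointness and the covering property, and obtain $\ell\le 16\cdot\high(G)$ from \Cref{lem:terminal}. Your caveat is justified: the paper invokes \Cref{lem:terminal} for $\Qq$ without comment, the bound needs every path of $\Qq$ to visit $V_4$ (so that $|\Qq|\le\high(G)$), and it genuinely fails for covering families with many paths avoiding $V_4$; this is harmless only because the lemma is applied just to $\Pp_4$. One small fix: vertices of $V_X$ need not lie on any path of $\Qq$ (e.g.\ $U=N_G[U_0]$ in \Cref{lem:terminal}), so $d(x_i)\ge 1$ should instead come from $G$ being connected with more than one vertex.
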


\lv{%
\begin{proof}
First we show that each $T \in \Tt$ is a $(G,V_4,\ell)$-trace.
Clearly, each symbol appears at most once on $T$ because a path $P$ visits each vertex at most once.
Next, no variable $x_i$ appears next to $v \in V_4$ because the vertices consecutive to $v$ on $P$ belong to $N_G[V_4]$ while $V_X \cap N_G[V_4] = \emptyset$.

Now we check the conditions of \Cref{def:pattern}.
Since the set $U$ was chosen as a terminal collection for $\Qq$, no pair of vertices appears twice as consecutive in the collection of traces.
This ensures that the first two conditions hold.
Next, let $vu \in E(G)$ be an edge incident to $v \in V_4$.
Then $vu$ appears on some path $P \in \Qq$ and so $vu \in \edges(\Tt)$.
To see the last two conditions, observe that when a trace $T$ encodes a path $P$ then for each  $v \in N_G[V_4]$ we have $\deg_T(v) = \deg_P(v)$.
Furthermore, when $x_i \in X_\ell$ and $u =f(x_i) \in V_X$ then $\deg_T(x_i) = \deg_P(u)$ and $d(x_i) = \deg_G(u)$.

Finally, we argue that $\ell$ can be bounded by $16\cdot \high(G)$.
In \Cref{def:match} we can utilize any set $U$ that forms a terminal collection for $\Qq$.
\Cref{lem:terminal} shows that there exists such $U$ of size at most $16\cdot \high(G)$.
\end{proof}
}

\section{Decoding a Pattern}\label{sec:MAIN}

We will now explain how $\pn(G)$ can be determined by examining patterns appearing in $G$.

\begin{definition}[Odd Number of a Pattern]\label{def:OddPattern}
Let $(G,V_4)$ be a nice pair, $\ell \in \mathbb{N}$, and $(\Tt, d)$ be a $(G,V_4,\ell)$-pattern.
We say that a vertex $v \in N_G[V_4]$ (resp. variable $v \in X_\ell$) {\em loses oddity} if $\deg_G(v)$ (resp. $d(v))$ is \underline{odd} and $\sum_{T\in\Tt}\deg_T(v)$ is \underline{odd}.
Similarly, a vertex $v \in N_G[V_4]$ (resp. variable $v \in X_\ell$) {\em gains oddity} if $\deg_G(v)$ (resp. $d(v))$ is \underline{even} and $\sum_{T\in\Tt}\deg_T(v)$ is \underline{odd}.

We define the \emph{odd number} $\odd(\Tt, d)$ of a pattern $(\Tt,d)$ as the number of elements in $N_G[V_4] \cup X_\ell$ that gain oddity minus the number of elements that lose oddity.
\end{definition}

The utility of the above definition stems from the following lemma, which allows us to read $\odd(G-\Qq)$ by examining only the pattern encoding $\Qq$.

\begin{lemma}\label{lem:oddity-remainder}
Let $(G,V_4)$ be a nice pair,
$\Qq$ be a covering family in $(G,V_4)$, 
$\ell \in \mathbb{N}$, and $({\cal T}, d)$ be a $(G,V_4,\ell)$-pattern encoding $\Qq$.
Then $\odd(G - \Qq) = \odd(G) + \odd(\Tt,d)$.
\end{lemma}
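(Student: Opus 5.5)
Since $(\Tt,d)$ encodes $\Qq$, we fix the data from \Cref{def:match} that produced it: a terminal collection $U$ for $\Qq$, the set $V_X = U\sm N_G[V_4]$, and the bijection $f\colon X_\ell\to V_X$. The plan is to compare $\deg_{G-\Qq}(v)$ with $\deg_G(v)$ vertex by vertex. We have $\deg_{G-\Qq}(v)=\deg_G(v)-\sum_{P\in\Qq}\deg_P(v)$, where $\deg_P(v)\in\{0,1,2\}$ counts the edges of $P$ at $v$ (it is $0$ if $v\notin V(P)$, $1$ if $v$ is an endpoint, and $2$ if $v$ is internal). The parity of $v$ therefore flips exactly when $s(v):=\sum_{P\in\Qq}\deg_P(v)$ is odd. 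The difference $\odd(G-\Qq)-\odd(G)$ is the number of vertices that go from even to odd degree minus the number that go from odd to even. Summing over all $v$, this equals the number of vertices with $s(v)$ odd and $\deg_G(v)$ even, minus the number with $s(v)$ odd and $\deg_G(v)$ odd.

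The next step is to show that only vertices of $U = N_G[V_4]\cup V_X$ can have $s(v)$ odd. Indeed, $s(v)$ is odd only if $v$ is an endpoint of some path of $\Qq$. By property 2 of a terminal collection, all such endpoints lie in $U$. Every other vertex contributes an even amount and does not change parity, so the sum reduces to $U$.

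For $v\in U$ we then translate $s(v)$ into trace data. If $v\in N_G[V_4]$, then $v$ is recorded in $T_P$ exactly when $v\in V(P)$, and it is an endpoint of $T_P$ exactly when it is an endpoint of $P$, since the endpoints of $P$ lie in $U$ and are therefore recorded. Hence $\deg_{T_P}(v)=\deg_P(v)$. If $v=f(x_i)\in V_X$, the same argument gives $\deg_{T_P}(x_i)=\deg_P(v)$, and moreover $d(x_i)=\deg_G(v)$. These are exactly the identities already observed in the proof of \Cref{lem:pattern-encoding}. Consequently $v$ (respectively $x_i$) gains or loses oddity in the sense of \Cref{def:OddPattern} precisely when the vertex $v$ switches from even to odd or from odd to even degree. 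Because $N_G[V_4]$ and $V_X$ are disjoint and $f$ is a bijection, the count matches $\odd(\Tt,d)$ term by term.

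The only delicate point is verifying that endpoints of the traces correspond exactly to endpoints of the paths. This correspondence is what the terminal-collection condition guarantees, and everything else is bookkeeping.
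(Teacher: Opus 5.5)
Your proof is correct and follows essentially the same route as the paper. Both compare parities vertex by vertex, note that only endpoints of paths in $\Qq$ can change parity, and identify $\sum_{P\in\Qq}\deg_P(v)$ with $\sum_{T\in\Tt}\deg_T$ of the corresponding symbol; you additionally make explicit, via property 2 of a terminal collection, that path endpoints are recorded as trace endpoints, which the paper uses only implicitly.
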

\begin{proof}
The quantity $\odd(G - \Qq) - \odd(G)$ equals the number of vertices $v \in V(G)$ that are odd in $G-\Qq$ and even in $G$, minus the number of vertices that are even in $G-\Qq$ and odd in $G$.
This transition occurs exactly when  $\sum_{P\in\Qq}\deg_P(v) = \sum_{T\in\Tt}\deg_T(v)$ is odd.
Observe that when $v \in V(G)$ is not an endpoint of any path $P \in \Qq$ then its oddity is not affected, so it suffices to inspect only the vertices recorded in the encoding.
When  $\sum_{P\in\Qq}\deg_P(v)$ is odd then,
depending on whether $\deg_G(v)$ is even or odd, vertex $v$  gains or loses oddity, respectively.
Consequently, $\odd(G - \Qq) - \odd(G) = \odd(\Tt,d)$.
\end{proof}

The next two lemmas show that $\pn(G)$ can be expressed as $\min\left(\frac{\odd(G)+\odd(\Tt, d)}{2} +|\Tt|\right)$ over all patterns $({\cal T}, d)$ that are {\em feasible} in $G$.
Recall that for a path partition $\Pp$, we write $\Pp_4 \sub \Pp$ to denote the subfamily of paths visiting any vertex from $V_4$ and that $\Pp_4$ forms a covering family.
We first show that a pattern encoding $\Pp_4$ provides a lower bound on $|\Pp|$.

\begin{lemma}\label{lem:pn-greater-odd}
Let $(G,V_4)$ be a nice pair,
$\Pp$ be a path partition of $G$, $\Pp_4 \sub \Pp$ be the corresponding covering family,
$\ell \in \mathbb{N}$, and $({\cal T}, d)$ be a $(G,V_4,\ell)$-pattern encoding $\Pp_4$. 
Then \[|\Pp| \ge \frac{\odd(G)+\odd(\Tt, d)}{2} +|\Tt|.
 \]
\end{lemma}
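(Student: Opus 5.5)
Split $\Pp$ into the covering family $\Pp_4$ and the remainder $\Pp' := \Pp \setminus \Pp_4$. The two counts on the right-hand side will be matched separately: $|\Tt|$ against $|\Pp_4|$, and $\frac{\odd(G)+\odd(\Tt,d)}{2}$ against $|\Pp'|$.

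\textbf{Step 1: $|\Tt| = |\Pp_4|$.} By \Cref{def:match}, the collection $\Tt$ contains exactly one trace $T_P$ for each $P \in \Pp_4$. Distinct paths give distinct traces. Indeed, every path in $\Pp_4$ visits some vertex $v \in V_4$ as an internal vertex or an endpoint, so $T_P$ contains at least one pair from $\edges(T_P)$, namely an edge of $P$ incident to $v$. By condition \ref{item:pattern:edges} of \Cref{def:pattern} (valid by \Cref{lem:pattern-encoding}), two different paths cannot yield the same trace, since their $\edges$-sets are disjoint and nonempty. If $\Tt$ is read as a multiset, this step is immediate anyway.

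\textbf{Step 2: $|\Pp'| \ge \odd(G - \Pp_4)/2$.} Since $\Pp$ is a path partition of $G$, the family $\Pp'$ is a path partition of the graph $G - \Pp_4$. The lower bound then follows from \Cref{obs:lower-bound}, with isolated vertices causing no issue.

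\textbf{Step 3: combine.} By \Cref{lem:oddity-remainder} applied to $\Qq = \Pp_4$, we have $\odd(G - \Pp_4) = \odd(G) + \odd(\Tt,d)$. Adding the two bounds gives
\[
|\Pp| = |\Pp_4| + |\Pp'| \ge |\Tt| + \frac{\odd(G)+\odd(\Tt,d)}{2}.
\]

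The only step needing care is Step 1, specifically the distinctness of traces so that $|\Tt|$ really counts paths. Every other step is a direct invocation of earlier results.
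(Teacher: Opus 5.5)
Your proof is correct and follows the paper's argument essentially verbatim. You identify $|\Tt|=|\Pp_4|$, apply \Cref{obs:lower-bound} to $G-\Pp_4$ (on which $\Pp\setminus\Pp_4$ is a path partition), and rewrite $\odd(G-\Pp_4)$ via \Cref{lem:oddity-remainder}. Your added argument that distinct paths give distinct traces is fine, but the disjointness of their $\edges$-sets comes directly from the paths being edge-disjoint: each pair in $\edges(T_P)$ is a genuine edge of $P$ incident to $V_4$. It does not come from condition~2 of \Cref{def:pattern}, which only constrains traces already known to be distinct.
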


\begin{proof}
Recall that $|\Tt|=|\Pp_4|$ by \cref{def:match}.
Let $G'\coloneqq G-\Pp_4$. 
By \Cref{obs:lower-bound} we have $\pn(G') \ge \odd(G')/2$.
Next, \Cref{lem:oddity-remainder} yields
$\mathrm{odd}(G')=\mathrm{odd}(G)+\mathrm{odd}(\Tt, d)$ and so
\[
|\Pp|-|\Tt| = |\Pp \sm \Pp_4| \ge \pn(G') \ge \frac{\odd(G')}{2} = \frac{\odd(G)+\odd(\Tt, d)}{2}. \qedhere
\]
\end{proof}

The more complicated direction is to show that the odd number of a feasible pattern can be used to attain an upper bound on $\pn(G)$.
This argument relies on \Cref{thm:subcubic-case} for subcubic graphs and requires the encoded covering family to be bull-free.
Recall from \Cref{lem:bull-free} that this holds in any optimal solution.

\begin{lemma}\label{lem:pn-less-odd}
Let $(G,V_4)$ be a nice pair,
$\Qq$ be a bull-free covering family in $(G,V_4)$, 
$\ell \in \mathbb{N}$, and $({\cal T}, d)$ be a $(G,V_4,\ell)$-pattern encoding $\Qq$.
Then %
 \[\pn(G) \le \frac{\mathrm{odd}(G)+\mathrm{odd}(\Tt, d)}{2} +|\Tt|.
 \]
\end{lemma}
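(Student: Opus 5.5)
Since $(\Tt,d)$ encodes $\Qq$, we have $|\Tt| = |\Qq|$, and \Cref{lem:oddity-remainder} gives $\odd(G-\Qq) = \odd(G)+\odd(\Tt,d)$. Let $H \coloneqq G-\Qq$. Because $\Qq$ is covering, $H$ is subcubic, and a path partition of $H$ together with $\Qq$ partitions $E(G)$. It therefore suffices to show $\pn(H) \le \odd(H)/2$.

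The difficulty is that this may fail in two ways. First, by \Cref{thm:subcubic-case}, a component $K$ of $H$ may be a cycle or a subdivided diamond, where $\pn(K)=2 > \odd(K)/2$. Second, $H$ may contain pan cycles, each adding $+1$. To handle this, I will modify $\Qq$ instead of $H$. I will choose a bull-free covering family $\Qq'$ with $|\Qq'| = |\Qq|$ and $\odd(G-\Qq') \le \odd(G-\Qq)$, where $\Qq'$ is obtained by greedily extending paths of $\Qq$ and absorbing edges of $H$. Among all such families I take one minimizing $|E(G-\Qq')|$, and I argue that $G-\Qq'$ has no bad component and no pan cycle. Extending a path never increases the number of paths.

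\textbf{Pan cycles in $H$.} Let $C$ be a pan cycle of $H$ with degree-3 vertex $x$. Then $G$ has no pan cycles since $G$ is nice, so some vertex $y\in V(C)\setminus\{x\}$ has $\deg_G(y)\ge 3$. Such a $y$ has $\deg_H(y)=2$, so $y$ is an endpoint of a path $P\in\Qq$ (or, if $y\in V_4$, some path ends there). I extend $P$ at $y$ along one of the two arcs of $C$ to $x$, choosing an arc avoiding $V(P)$. If both arcs meet $P$, I instead stop at the first vertex of $P$ encountered, which must be an endpoint of $P$ since $C$-vertices have $H$-degree $2$. This absorbs part of $C$ and turns the remainder into a path hanging at $x$. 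Parities change only at $y$ and the new endpoint, and I will check that $\odd$ does not increase: $y$ loses its path-end and the new endpoint gains one.

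\textbf{Cyclic and diamond components.} A cyclic component $K$ of $H$ contains a vertex $y$ with $\deg_G(y)\ge 3$, since $G$ is connected and not a cycle. Such a $y$ is an endpoint of some $P\in\Qq$, and I extend $P$ around $K$ back to a neighbour of $y$. This path-end keeps $y$ odd in $G-\Qq'$ and makes the remaining single edge a path component, so $\odd$ increases by $2$ while saving one path; the count remains correct. Alternatively, I extend the path around the whole cycle except one edge. For a subdivided diamond I will handle the analogous extension at a degree-2 vertex of $K$ that is a path end in $G$.

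The main obstacle is the case where the only available extension endpoints form a bull configuration. If both ends of $P$ lie on a triangle component, an extension could close $P$ into a cycle or create a non-path. This is exactly where bull-freeness of $\Qq$ and niceness (d), that all bull cycles are triangles, enter: a triangle component of $H$ whose two attachment points are both endpoints of the same path is a bull pair, which is excluded. I expect this case analysis, namely ensuring each extension keeps the family path-like and bull-free while strictly decreasing $|E(G-\Qq')|$, to be the hard step. I would first try the potential-function argument above and treat the triangle and bull cases by hand. After that, \Cref{thm:subcubic-case} applied to each component gives $\pn(G-\Qq')=\odd(G-\Qq')/2$, and the claimed bound follows.
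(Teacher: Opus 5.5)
\textbf{The step for cyclic components fails.} You extend a single path $P$ from its end $y$ around a cycle component $K$ of $H$ to a neighbour $y_1$ of $y$. The leftover edge $yy_1$ makes $y$ and $y_1$ odd, so $\odd(G-\Qq')=\odd(G-\Qq)+2$. This breaks the invariant $\odd(G-\Qq')\le\odd(G-\Qq)$ that you imposed yourself. The final count becomes $|\Qq|+\odd(H)/2+1$ rather than $|\Qq|+\odd(H)/2$: the lemma needs a component with $\odd(K)=0$ to cost nothing, but you still pay one path for the leftover edge. So ``the count remains correct'' is false. Your argument only gives a bound that is weaker by one per cyclic component, and the proof of \Cref{thm:main} needs the exact bound. (The extension is not even a path when the other end of $P$ lies on $K$.) A single path cannot repair this. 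Extending one path at its ends never absorbs all of $K$: with one end on $K$ it would have to return to itself, and with both ends on $K$ two vertex-disjoint extensions cannot cover a cycle. Any partial absorption creates two new odd vertices.

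\textbf{The missing idea is to use two distinct paths of the family.} Since $V(K)\cap V_4=\emptyset$ and $G$ has no pan cycles, the set $W$ of vertices of $K$ with $\deg_G=3$ has $|W|\ge 2$. Each $w\in W$ is an endpoint of exactly one path $P_w$. If $|W|\ge 3$, pigeonhole gives $x,y\in W$ with $P_x\ne P_y$. If $|W|=2$, then $K$ is a bull cycle of $G$, hence a triangle by niceness (d), and bull-freeness forces $P_x\ne P_y$. This, rather than the worry about closing a single path, is where the observation in your last paragraph actually enters. You then split $K$ into two $(x,y)$-arcs and append them to $P_x$ and $P_y$. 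If the other ends $x',y'$ also lie on $K$, split into two or four arcs according to the cyclic order of $x,x',y,y'$. Every vertex of $K$ drops from degree $2$ to $0$, so $\odd$ is unchanged.

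Two smaller points. In the pan-cycle step, your fallback of stopping at an endpoint of $P$ would close $P$ into a cycle. It never occurs, though: $V(P)\cap V(C)\subseteq\{y,y'\}$ and $y'\ne x$ (since $x$ lies on no path of $\Qq$), so one arc always avoids $P$. For a diamond component, make the decomposition explicit: split it into an $(x,z)$-path and a $(y,z)$-path, and extend $P_z$ along the one avoiding its other end. A single path suffices there precisely because that component already has $\odd=2$.
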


\begin{proof}

We prove the lemma by constructing a path partition $\Pp_0$ for $G$ such that
\begin{align*}
|\Pp_0| = \frac{\mathrm{odd}(G)+\mathrm{odd}(\Tt, d)}{2} +|\Tt|.
\end{align*}

Let $G':=G-\Qq$. As $G'$ is subcubic, we can apply \Cref{thm:subcubic-case} for every subgraph of $G'$. The main idea is to extend some paths in $\Qq$ to obtain another bull-free covering family $\Qq'$ of the same size in a way that the subgraph $G'':=G-\Qq'$ satisfies $\mathrm{odd}(G'')=\mathrm{odd}(G')$ and it has a path partition $\Qq$ of size $\mathrm{odd}(G')/2$. 
This occurs when $G''$ has no pan cycle and no connected component which is either a cycle or a subdivided diamond.
To construct $\Qq'$, we gradually extend paths in $\Qq$ in three steps. Note that $\deg_{G}(u)\leq 3$ for all non-isolated vertices $u\in V(G')$ since $\Qq$ is a covering family. This, in particular, implies that for every vertex $u\in V(G')$ satisfying $\deg_{G'}(u)=2\neq \deg_{G}(u)$, we have $\deg_{G}(u)=3$, and there exists a unique path in $\Qq$ containing $u$, where $u$ is in fact an endpoint of this path.

\begin{figure}[!h]
    \centering
\begin{tikzpicture}[scale=0.4]
\node (v4) at (0,0) [circle, draw] {};
\node (v5) at (3,0) [circle, draw] {};
\node (v1) at (4.5,2.5) [circle, draw, inner sep=3pt] {$w$};
\node (v2) at (3,5) [circle, draw] {};
\node (v3) at (0,5) [circle, draw, inner sep=3pt] {$v$};
\node (v') at (-7,2) [circle, draw, inner sep=3pt] {$v'$};
\node (v6) at (7,2.5) [circle, draw] {};

\node at (-4,4.4) {\color{red} \Large $P$};

\node at (2.4,4.1) {\color{red} \large $C_0$};

\draw [line width=2pt,color=red] (v1)--(v2);
\draw [line width=2pt] (v1)--(v5);
\draw[line width=2pt] (v1)--(v6);
\draw [line width=2pt] (v4)--(v5);
\draw [line width=2pt] (v3)--(v4);
\draw[line width=2pt,color=red] (v2)--(v3);

   \draw[line width=2pt,color=red]
        (v3)
        .. controls (-3,1) and (-5,6) ..
        (v');

\node (1v4) at (15,0) [circle, draw, inner sep=3pt] {$v'$};
\node (1v5) at (18,0) [circle, draw] {};
\node (1v1) at (19.5,2.5) [circle, draw, inner sep=3pt] {$w$};
\node (1v2) at (18,5) [circle, draw] {};
\node (1v3) at (15,5) [circle, draw, inner sep=3pt] {$v$};
\node (1v6) at (22,2.5) [circle, draw] {};

\draw [line width=2pt,color=red] (1v1)--(1v2);
\draw [line width=2pt] (1v1)--(1v5);
\draw[line width=2pt] (1v1)--(1v6);
\draw [line width=2pt] (1v4)--(1v5);
\draw [line width=2pt] (1v3)--(1v4);
\draw[line width=2pt,color=red] (1v2)--(1v3);

   \draw[line width=2pt,color=red]
        (1v3)
        .. controls (10,3) and (12,1) ..
        (1v4);

\node at (12,4.4) {\color{red} \Large $P$};

\node at (17.4,4.1) {\color{red} \large $C_0$};

\node at (0,-2) {\color{black} when $v'\notin V(C)$};

\node at (17,-2) {\color{black} when $v'\in V(C)$};

\end{tikzpicture}

\caption{We can extend $P\in \Qq$ to discard the pan cycle $C$ in $G'$.}
\label{fig:KILL-PAN-CYCLE}
\end{figure}
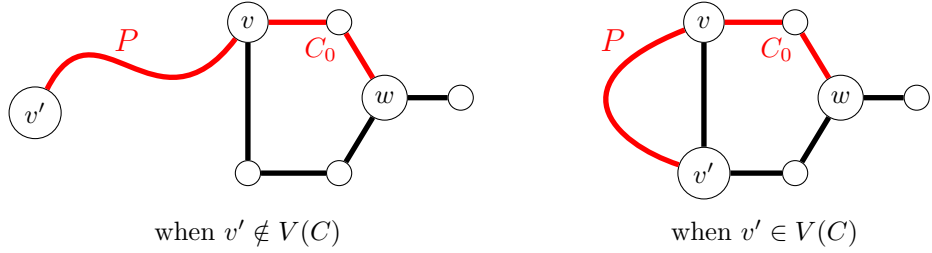

First, assume that $G'$ has a pan cycle $C$, and let $w\in V(C)$ be the unique vertex satisfying $\deg_{G'}(w)=3$. It follows that $\deg_{G}(w)=3$ using $\deg_{G'}(w)\leq \deg_{G}(w)\leq 3$. 
As $G$ is nice, it contains no pan cycle and so there exists $v\in V(C)$ satisfying $\deg_{G}(v)\neq \deg_{G'}(v)=2$. This implies that $v\neq w$, and that there exists an $(v,v')$-path $P\in \Qq$ for some $v'\in V(G)$.
Note that $V(C)\cap (V(P)\setminus \{v,v'\})=\emptyset$ because every vertex from $V(G) \sm V_4$ that is internal in $P$ has degree at most one in $G'$.
If $v'\in V(C)$, let $C_0$ be the $(w,v)$-path on $C$ not containing $v'$. Otherwise, let $C_0$ be any of the two $(w,v)$-paths on $C$. By extending $P$ to $v'PvC_0w$, depicted in Figure~\ref{fig:KILL-PAN-CYCLE}, we can decrease the number of pan cycles in $G'$ by ensuring that $\odd(G')$ remains the same because we only reverse the parities of $\deg_{G'}(v)$ and $\deg_{G'}(w)$. Moreover, it is clear that $v'$ and $w$ do not form a bull-pair. By applying the same procedure successively for every pan cycle in $G'$, we end up with a bull-free covering family $\Qq_1$ satisfying $|\Qq_1|=|\Qq|$ such that the subgraph $G_1:=G-\Qq_1$ satisfies $\pan(G_1)=0$ and $\odd(G_1)=\odd(G')$.

Second, assume that $G_1$ has a connected component that is a cycle $C$. Since $G$ is connected and not subcubic, there are edges between $V(C)$ and $V(G)\setminus V(C)$, which implies that $W:=\{w\in V(C):\deg_{G}(w)\neq 2\}$ is non-empty. Moreover, using $\pan(G)=0$, we see that $|W|\neq 1$. For every $w\in W$, we write $P_w$ for the unique path in $\Qq_1$ containing $w$. In fact, $w$ is an endpoint of $P_w$. 
As before, we observe that no internal vertex of $P_w$ can lie on $C$.

If $|W|\geq 3$, we can choose $x,y\in W$ for which $P_x\neq P_y$. If $|W|=2$, letting $W=\{x,y\}$, we claim that $P_x\neq P_y$. Indeed, when $|W|=2$, $C$ is a bull cycle in $G$. As $G$ is nice, it follows that $C$ is in fact a triangle. As $\Qq_1$ is bull-free and $\{x,y\}$ form a bull-pair, it follows that $P_x\neq P_y$. Let $x'$ and $y'$ be the other endpoints of $P_x$ and $P_y$, respectively. We will extend $P_x$ and $P_y$ to cover $C$ depending on whether $x'$ and $y'$ lie on $C$.

If at most one of $x'$ and $y'$ belongs to $C$, without loss of generality, assume that $y'\notin V(C)$. If $x'\in V(C)$, let $C_x$ be the $(x,y)$-path on $C$ not containing $x'$. If $x'\notin V(C)$, let $C_x$ be any $(x,y)$-path on $C$. Let $C_y$ be the $(x,y)$-path on $C$ other than $C_x$. Then, we can extend $P_x$ and $P_y$ to $x'P_xxC_xy$ and $y'P_yyC_yx$, respectively, depicted in Figure~\ref{fig:KILL-CYCLE-COMPONENT-I}.
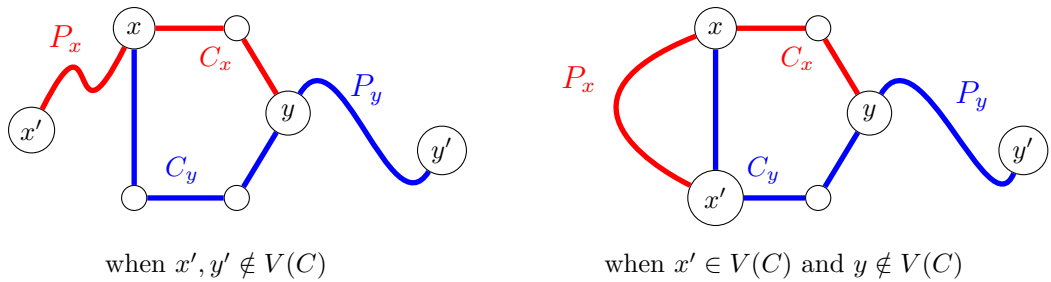
\begin{figure}[h]
    \centering
\begin{tikzpicture}[scale=0.45]
\node (v4) at (0,0) [circle, draw] {};
\node (v5) at (3,0) [circle, draw] {};
\node (v1) at (4.5,2.5) [circle, draw, inner sep=3pt] {$y$};
\node (v2) at (3,5) [circle, draw] {};
\node (v3) at (0,5) [circle, draw, inner sep=3pt] {$x$};
\node (v') at (-3,2) [circle, draw, inner sep=2.1pt] {$x'$};
\node (v6) at (9,1.4) [circle, draw, inner sep=2.1pt] {$y'$};

\node at (-2,4.7) {\color{red} \Large $P_x$};

\node at (2.4,4.1) {\color{red} \large $C_x$};

\node at (6.8,3.2) {\color{blue} \Large $P_y$};

\node at (1.4,0.8) {\color{blue} \large $C_y$};

\draw [line width=2pt,color=red] (v1)--(v2);
\draw [line width=2pt,color=blue] (v1)--(v5);
\draw [line width=2pt,color=blue] (v4)--(v5);
\draw [line width=2pt,color=blue] (v3)--(v4);
\draw[line width=2pt,color=red] (v2)--(v3);

   \draw[line width=2pt,color=red]
        (v3)
        .. controls (-2,1) and (-1,6) ..
        (v');

   \draw[line width=2pt,color=blue]
        (v1)
        .. controls (6,5) and (7.5,-1) ..
        (v6);

\node (11v4) at (17,0) [circle, draw] {$x'$};
\node (11v5) at (20,0) [circle, draw] {};
\node (11v1) at (21.5,2.5) [circle, draw, inner sep=3pt] {$y$};
\node (11v2) at (20,5) [circle, draw] {};
\node (11v3) at (17,5) [circle, draw, inner sep=3pt] {$x$};
\node (11v6) at (26,1.4) [circle, draw, inner sep=2.1pt] {$y'$};

\node at (13,3.5) {\color{red} \Large $P_x$};

\node at (19.4,4.1) {\color{red} \large $C_x$};

\node at (24.5,3) {\color{blue} \Large $P_y$};

\node at (18.4,0.8) {\color{blue} \large $C_y$};

\draw [line width=2pt,color=red] (11v1)--(11v2);
\draw [line width=2pt,color=blue] (11v1)--(11v5);
\draw [line width=2pt,color=blue] (11v4)--(11v5);
\draw [line width=2pt,color=blue] (11v3)--(11v4);
\draw[line width=2pt,color=red] (11v2)--(11v3);

   \draw[line width=2pt,color=red]
        (11v3)
        .. controls (12,3) and (15,1) ..
        (11v4);

   \draw[line width=2pt,color=blue]
        (11v1)
        .. controls (23,5) and (25,-1) ..
        (11v6);

\node at (2.4,-2) {\color{black} when $x',y'\notin V(C)$ };

\node at (19,-2) {\color{black} when $x'\in V(C)$ and $y\notin V(C)$};

\end{tikzpicture}

\caption{When at most one of $x'$ and $y'$ belongs to $C$, then two $(x,y)$-paths of $C$ can be appended to $P_x$ and $P_y$.}
\label{fig:KILL-CYCLE-COMPONENT-I}
\end{figure}

If $x',y'\in V(C)$, without loss of generality, we can assume that $\{x,x',y,y'\}$ lie on the cycle $C$ with the order $(x,x',y,y')$ or $(x,y,x',y')$ clockwise. For the first case, let $C_x$ (resp.~$C_y$) be the $(x',y')$-path on $C$ containing $x$. We can extend $P_x$ and $P_y$ to $xP_xx'C_yy'$ and $yP_yy'C_xx'$, respectively, depicted in Figure~\ref{fig:KILL-CYCLE-COMPONENT-II} (\mic{left}). For the second case, we can partition $C$ into an $(x,y)$-path $C_1$, a $(y,x')$-path $C_2$, an $(x',y')$-path $C_3$, and a $(y',x)$-path $C_4$ such that $E(C)=E(C_1)\cup E(C_2)\cup E(C_3)\cup E(C_4)$. We can extend $P_x$ and $P_y$ to $yC_{1}xP_xx'C_{3}y'$ and $x'C_{2}yP_yy'C_{4}x$, respectively, as shown in Figure~\ref{fig:KILL-CYCLE-COMPONENT-II} (\mic{right}). 
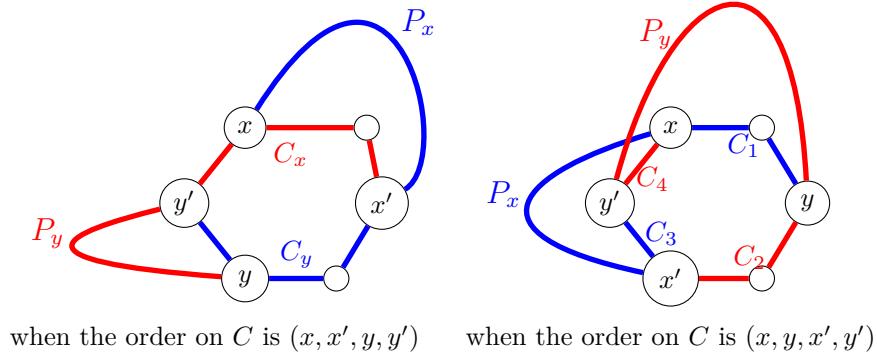
\begin{figure}[!h]
    \centering
\begin{tikzpicture}[scale=0.4]
\node (y') at (0,2.5) [circle, draw, inner sep=2.1pt] {$y'$};
\node (y) at (2,0) [circle, draw] {$y$};
\node (v1) at (5,0) [circle, draw] {};
\node (x') at (6.5,2.5) [circle, draw, inner sep=3pt] {$x'$};
\node (v2) at (6,5) [circle, draw] {};
\node (x) at (2,5) [circle, draw, inner sep=3pt] {$x$};

\node at (7.7,8.5) {\color{blue} \Large $P_x$};

\node at (3.5,4.1) {\color{red} \large $C_x$};

\node at (-4.5,1.5) {\color{red} \Large $P_y$};

\node at (3.7,0.8) {\color{blue} \large $C_y$};

\draw [line width=2pt,color=red] (x)--(v2);
\draw [line width=2pt,color=red] (v2)--(x');
\draw [line width=2pt,color=blue] (x')--(v1);
\draw [line width=2pt,color=blue] (v1)--(y);
\draw [line width=2pt,color=blue] (y)--(y');
\draw [line width=2pt,color=red] (y')--(x);

   \draw[line width=2pt,color=blue]
        (x)
        .. controls (7,13) and (9,4) ..
        (x');

   \draw[line width=2pt,color=red]
        (y)
        .. controls (-6,0.6) and (-4,1.5) ..
        (y');

\node (1y') at (14,2.5) [circle, draw, inner sep=2.1pt] {$y'$};
\node (1x') at (16,0) [circle, draw] {$x'$};
\node (1v1) at (19,0) [circle, draw] {};
\node (1y) at (20.5,2.5) [circle, draw, inner sep=3pt] {$y$};
\node (1v2) at (19,5) [circle, draw] {};
\node (1x) at (16,5) [circle, draw, inner sep=3pt] {$x$};

\draw [line width=2pt,color=blue] (1x)--(1v2);
\draw [line width=2pt,color=blue] (1v2)--(1y);
\draw [line width=2pt,color=red] (1y)--(1v1);
\draw [line width=2pt,color=red] (1v1)--(1x');
\draw [line width=2pt,color=blue] (1x')--(1y');
\draw[line width=2pt,color=red] (1y')--(1x);

\node at (10.5,2.8) {\color{blue} \Large $P_x$};

\node at (15.5,8.1) {\color{red} \Large $P_y$};

\node at (18.4,4.4) {\color{blue} \large $C_1$};
\node at (18.6,0.7) {\color{red} \large $C_2$};

\node at (15.7,1.5) {\color{blue} \large $C_3$};
\node at (15.4,3.3) {\color{red} \large $C_4$};

   \draw[line width=2pt,color=blue]
        (1x)
        .. controls (11,3.5) and (9,1.5) ..
        (1x');

   \draw[line width=2pt,color=red]
        (1y)
        .. controls (20,12) and (16,10) ..
        (1y');

\node at (1,-2) {\color{black} when the order on $C$ is $(x,x',y,y')$};

\node at (16,-2) {\color{black} when the order on $C$ is $(x,y,x',y')$};

\end{tikzpicture}

\caption{If $x',y'\in V(C)$, we can partition $C$ into either two or four parts, depending on the order of $x,y,x',y'$ on $C$, such that every part can be appended to $P_x$ or $P_y$.}
\label{fig:KILL-CYCLE-COMPONENT-II}
\end{figure}

In all cases, we ensure that $\odd(G_1)$ remains the same as we only change degrees from 2 to 0, and we do not create a bull-pair. By applying the same procedure for every connected component that is a cycle, we end up with a bull-free covering family $\Qq_2$ satisfying $|\Qq_2|=|\Qq_1|=|\Qq|$ such that the subgraph $G_2:=G-\Qq_2$ satisfies $\pan(G_2)=0$, $\odd(G_2)=\odd(G_1)=\odd(G')$, and $G_2$ has no connected component that is a cycle.

Third, let $G_2$ has a connected component $D$ that is a subdivision of a diamond. Let $x,y\in V(D)$ be the vertices with $\deg_{G_2}(x)=3=\deg_{G_2}(y)$. Note that $\deg_{G}(x)=3=\deg_{G}(y)$. Since $G$ is not subcubic and connected, there are edges between $V(D)$ and $V(G)\setminus V(D)$, which implies that there exists $z\in V(D)$ satisfying $\deg_{G}(z)\neq \deg_{G'}(z)=2$. Note that we can partition $D$ into an $(x,z)$-path $D_y$ and a $(y,z)$-path $D_x$ with $E(D)=E(D_x)\cup E(D_y)$, depicted in Figure~\ref{fig:KILL-DIAMOND-COMPONENT}. It is clear that every $u\in V(D)\setminus \{x,y,z\}$ appears exactly one of $D_x$ and $D_y$. Let $P_z$ be the unique path in $\Qq_2$ containing $z$, and let $z'$ be the other endpoint of $P_z$. As $z'\notin V(D_x)\cap V(D_y)$, without loss of generality, we can assume that $z'\notin V(D_y)$. Then, we can extend $P_z$ to $z'P_zzD_yx$ regardless of whether $z'\in V(D_x)$ or not. Note that we ensure that $\odd(G_2)$ remains the same because we 
only reverse the parities of $\deg_{G_2}(z)$ and $\deg_{G_2}(x)$. 
Effectively, the component $C$ of $G'$ with $\odd(C)=2$ is replaced by the path $D_x$, also with $\odd(D_x)=2$.

\begin{figure}[!h]
    \centering
\begin{tikzpicture}[scale=0.34]
\node (y) at (10,0) [circle, draw, inner sep=2.1pt] {$y$};
\node (x) at (0,0) [circle, draw, inner sep=2.1pt] {$x$};
\node (z) at (4,4.5) [circle, draw, inner sep=2.1pt] {$z$};
\node (z') at (12,4.5) [circle, draw, inner sep=1.7pt] {$z'$};

   \draw[line width=2pt,color=blue]
        (x)
        .. controls (6.5,-1) and (7,1.7) ..
        (y);

   \draw[line width=2pt,color=red]
        (x)
        .. controls (-2,4.5) and (3,2) ..
        (z);

   \draw[line width=2pt,color=blue]
        (z)
        .. controls (6,2) and (8,4) ..
        (y);

   \draw[line width=2pt,color=red]
        (x)
        .. controls (2,-4) and (3.5,-5) ..
        (y);

    \draw[line width=2pt,color=blue]
        (z)
        .. controls (7,7) and (10,1.5) ..
        (z');

\node at (6.8,2) {\color{blue} \large $D_y$};
\node at (2.9,-2.6) {\color{red} \large $D_x$};
\node at (7.8,5.6) {\color{black} \large $P_z$};

\node (1y) at (30,0) [circle, draw, inner sep=2.1pt] {$y$};
\node (1x) at (20,0) [circle, draw, inner sep=2.1pt] {$x$};
\node (1z) at (24,4.5) [circle, draw, inner sep=2.1pt] {$z$};
\node (1z') at (25,-4) [circle, draw, inner sep=1.7pt] {$z'$};

  \draw[line width=2pt,color=blue]
        (1x)
        .. controls (26.5,-1) and (27,1.7) ..
        (1y);

   \draw[line width=2pt,color=red]
        (1x)
        .. controls (18,4.5) and (23,2) ..
        (1z);

   \draw[line width=2pt,color=blue]
        (1z)
        .. controls (26,2) and (28,4) ..
        (1y);

   \draw[line width=2pt,color=red]
        (1x)
        .. controls (21.5,-4) and (23.3,-5) ..
        (1z');

    \draw[line width=2pt,color=red]
        (1z')
        .. controls (26.5,-2.5) and (28,-1) ..
        (1y);

\draw[line width=2pt,color=blue]
        (1z)
        .. controls (37,3) and (34,-2) ..
        (1z');

\node at (26.8,2) {\color{blue} \large $D_y$};
\node at (22.3,-2.6) {\color{red} \large $D_x$};
\node at (34,2.3) {\color{black} \large $P_z$};

\node at (5,-6.6) {\color{black} when $z'\notin V(D_x)$};

\node at (26,-6.6) {\color{black} when $z'\in V(D_x)$};

\end{tikzpicture}

\caption{The partition of a diamond $D$ into two edge-disjoint paths $D_x$ and $D_y$. The path $P_z$ can intersect with $D$ only at $z$ and $z'$.}
\label{fig:KILL-DIAMOND-COMPONENT}
\end{figure}
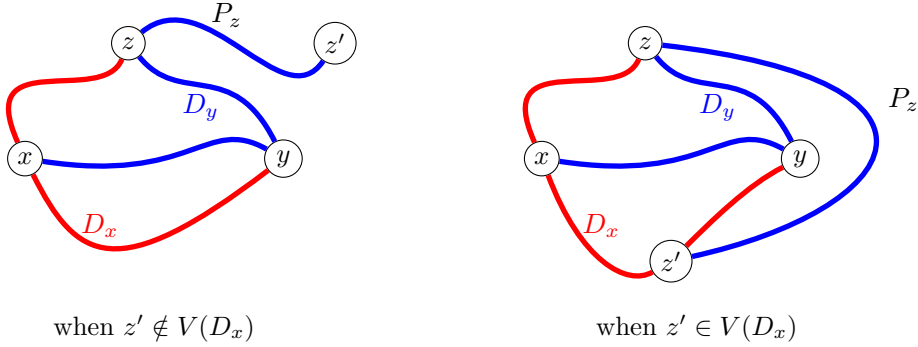

By this way we decrease the number of connected components of $G_2$ that are subdivisions of a diamond. It is also clear that we do not create any bull-pair. By applying this procedure exhaustively, we end up with a bull-free covering family $\Qq'$ satisfying $|\Qq'|=|\Qq_2|=|\Qq_1|=|\Qq|$ such that the subgraph $G'':=G-\Qq'$ satisfies $\pan(G')=0$, $\odd(G'')=\odd(G_2)=\odd(G_1)=\odd(G')$, and no connected component of $G''$ is a cycle or a subdivision of a diamond. 

We infer from \Cref{thm:subcubic-case} that $\pn(G'')=\odd(G'')/2=\odd(G')/2$. 
Hence, there exists a path partition $\Pp''$ for $G''$ of size $\odd(G')/2$. Set $\Pp_0=\Pp''\,\cup \Qq'$. Recall that $\odd(G')=\odd(G)+\odd(\Tt,d)$ by \Cref{lem:oddity-remainder}. %
As $|\Tt| = |\Qq|$, we arrive at
\begin{align*}
\pn(G)\le |\Pp_0|=\frac{\odd(G'')}{2}+|\Qq'|=\frac{\odd(G')}{2}+|\Qq|=\frac{\odd(G)+\odd(\Tt, d)}{2} +|\Tt|.
\end{align*}
This concludes the proof of \Cref{lem:pn-less-odd}.
\end{proof}

\section{Parameterized Algorithm}
\label{sec:ALGO}

Recall that two paths are internally vertex-disjoint if no vertex of one path appears as an internal vertex on the other path.

In the algorithm we will need to check if a given pattern $({\cal T}, d)$ is {\em feasible}, i.e., if there exists a bull-free covering family encoded by $({\cal T}, d)$.
First, we formulate feasibility via conditions that can be expressed using a logical formula over the subcubic graph $G-V_4$.
In order to incorporate the logical meta-theorem from~\cite{Logic} we need to translate edge-disjointedness to internal vertex-disjointedness.
The first direction follows from subcubicness while the opposite implication requires at least one endpoint of each two paths to be distinct.
To ensure this, we exploit the concept of terminal collection from \cref{def:terminal}.

\begin{restatable}{lemma}{lemPatternConditions}
\label{lem:patern-conditions}
    Let $(G,V_4)$ be a nice pair, $\ell \in \mathbb{N}$, and $({\cal T}, d)$ be a $(G,V_4,\ell)$-pattern.
    Then there exists a bull-free covering family in $(G,V_4)$ encoded by  $({\cal T}, d)$
    if and only if there exists
    a mapping $f \colon X_\ell \cup N_G(V_4) \to V(G)$
    that satisfies the following:
    \begin{enumerate}
        \item $f$ is an identify on $N_G(V_4)$,
        \item $f$ is injective on $X_\ell$ and $f(X_\ell) \sub V(G) \setminus N_G[V_4]$,
        \label{item:patern:function}
        \item $\deg_G(f(x_i)) = d(x_i)$ for each $x_i \in X_\ell$,
        \label{item:patern:degree}
        \item there exists a collection of internally vertex-disjoint paths in  $G-V_4$ connecting the pairs in the collection $\left\{(f(y_1), f(y_2)) \colon y_1y_2 \in \pairs(\Tt)\right\}$,
        \label{item:patern:paths}
        \item for each $T \in \Tt$ with endpoints $y_1,y_2$, the pair $(f(y_1), f(y_2))$ is not a bull-pair in $G$.
        \label{item:patern:bull}
    \end{enumerate}
\end{restatable}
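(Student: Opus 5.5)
We prove the two directions separately. In both, the key tool is the correspondence between the consecutive symbols of the traces and segments of the encoded paths.

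\emph{Forward direction.} Let $\Qq$ be a bull-free covering family encoded by $(\Tt,d)$, and fix the terminal collection $U$ and the bijection from \Cref{def:match}. Define $f$ as that bijection on $X_\ell$ and as the identity on $N_G(V_4)$. Conditions (1)--(3) then hold by construction, and condition (5) holds because $\Qq$ is bull-free. For condition (4), consider a pair $y_1y_2\in\pairs(T_P)$. By definition, neither symbol lies in $V_4$, so the segment of $P$ between $f(y_1)$ and $f(y_2)$ contains no symbol-vertex in its interior. In particular it avoids $V_4$ (as $V_4\sub U$), so it is a path in $G-V_4$. Any two such segments are edge-disjoint, since $\Qq$ is edge-disjoint. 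Moreover, every interior vertex $u$ of a segment lies outside $U\supseteq N_G[V_4]$, so $\deg_G(u)\le 3$. If $u$ were also a vertex of a second segment, then $u$ would have to be an endpoint of that segment: otherwise $u$ would need four incident edges. But endpoints of segments lie in $U$, a contradiction. Hence the segments are internally vertex-disjoint.

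\emph{Backward direction.} Given $f$ and the internally vertex-disjoint paths $R_{y_1y_2}$, we build one path $P_T$ for each trace $T\in\Tt$. We walk along $T$; for each consecutive pair in $\edges(T)$ we use the actual edge of $G$ (it exists and is incident to $V_4$), and for each pair in $\pairs(T)$ we use $R_{y_1y_2}$.

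\emph{Edge-disjointness.} First, the edges incident to $V_4$ appear exactly once, by pattern conditions (2) and (3); together these make the family covering. Second, the $R$-paths avoid $V_4$ and are internally vertex-disjoint. Two such paths could share an edge only if they shared both of its endpoints, with each endpoint being an endpoint of both paths. In that case both paths would equal that single edge and connect the same pair. This is excluded by pattern condition (1) and the "no symbol twice" rule, using injectivity of $f$ on $X_\ell$ together with $f(X_\ell)\cap N_G[V_4]=\emptyset$.

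\emph{$P_T$ is a path.} This is the main obstacle: the concatenation must not revisit a vertex. Vertices of distinct symbols of $T$ are distinct, since $f$ is injective and identity elsewhere. Two segments of the same trace share no interior vertices, by internal disjointness. It remains to rule out an interior vertex $u$ of some $R$ coinciding with a symbol vertex $f(y)$ of $T$. I would first try to exclude this from degrees. Such a $u$ is an endpoint or an interior vertex of another segment, or else is adjacent to $V_4$ via an $\edges$-pair. In the latter case $u\in N_G(V_4)$; interior vertices of $R\sub G-V_4$ could a priori lie in $N_G(V_4)$, and then two of $u$'s at most three edges (when $u\notin V_4$) are used by $R$, leaving at most one for the pattern's usage. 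If this degree-counting via pattern conditions (4)--(5) does not close the gap, the fallback is to shortcut: whenever the walk revisits a vertex, splice out the intermediate loop, or reroute the $R$-paths through such a $u$. The harder sub-case is when the loop contains pattern edges incident to $V_4$, which cannot be discarded.

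\emph{Encoding and bull-freeness.} Finally, verify that the family is encoded by $(\Tt,d)$, taking $U=N_G[V_4]\cup f(X_\ell)$. The terminal-collection property (3) for this $U$ again follows from pattern condition (1) and internal disjointness. Bull-freeness is exactly condition (5).
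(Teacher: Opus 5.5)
Your forward direction and your edge-disjointness argument are correct and match the paper. The backward direction, however, has a genuine gap exactly where you flag it, and neither remedy you suggest closes it.

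The bad vertex is some $u\in N_G(V_4)$ with $\deg_G(u)=3$ and a single neighbour $a\in V_4$, which occurs in $\Tt$ only as the last symbol of one trace, right after $a$. Such a $u$ is an endpoint of no pair in $\pairs(\Tt)$, so internal vertex-disjointness says nothing about it. Degree counting gives no contradiction either: a connecting path can use the two edges of $u$ in $G-V_4$ while the trace uses $ua$, and $\sum_{T\in\Tt}\deg_T(u)=1\le 3$. Splicing out the resulting loop would discard pattern edges at $V_4$, so that fails too.

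The same configuration also breaks your encoding paragraph when the connecting path through $u$ belongs to a different trace. Then no vertex repeats, but $u$ is recorded on that path, so its trace is not the prescribed one; you never check that the recorded trace of $P_T$ is exactly $T$.

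In fact the step cannot be proved, because the ``if'' direction is false as stated. Take $V_4=\{a\}$ and $E(G)=\{av,aq,ar,as,vb,vc,cq\}$. This $G$ is nice, since its only cycle $avcq$ passes through the degree-$4$ vertex $a$. Set $\ell=1$, $d(x_1)=1$ and $\Tt=\{(x_1,q,a,v),(r,a,s)\}$; this is a pattern. The choice $f(x_1)=b$ with connecting path $bvcq$ in $G-a$ satisfies (1)--(5); no endpoint pair is a bull-pair, since $b,r,s$ are leaves. Yet no covering family is encoded by $(\Tt,d)$. Indeed, $f(x_1)$ must be a leaf outside $N_G[a]=\{a,v,q,r,s\}$, i.e.\ $b$. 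The path encoded by the first trace must then end at $b$, whose only neighbour is $v\in N_G[V_4]$. So $v$, not $q$, would be recorded next to $x_1$. The concatenation $bvcqav$ is precisely your unresolved case.

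The paper's proof uses the same concatenation. It asserts that internal vertex-disjointness of the connecting paths makes $P_T$ a path, and that the construction ``yields the pattern $(\Tt,d)$''. Both claims overlook exactly this case, so you have located a real hole in the paper rather than a weakness of your own route.

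The natural repair is to strengthen condition (4): require that no connecting path has an interior vertex in $N_G(V_4)\cup f(X_\ell)$. Vertices occurring in some pair of $\pairs(\Tt)$ are already protected by internal vertex-disjointness. For $N_G(V_4)$, the condition therefore amounts to deleting from $G-V_4$ the vertices that occur in no pair, and these are fixed by the pattern.

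The forward direction survives the strengthening, since segments of an encoded family have no interior vertex in $U$. Your outline then goes through:
\begin{itemize}
\item The symbol vertices of $T$ are pairwise distinct, and every other vertex of $P_T$ is interior to exactly one connecting path. Hence $P_T$ is a path, and its recorded trace is exactly $T$.
\item Property (3) of a terminal collection for $U=N_G[V_4]\cup f(X_\ell)$ follows from pattern conditions (1)--(2). This is because consecutive $U$-vertices along the constructed paths are consecutive symbols of the traces.
\end{itemize}
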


\lv{%

\begin{proof}
First we argue that when $(\Tt,d)$ encodes a covering family $\Qq$ then the given conditions are satisfied.
The only condition that is not obvious is \eqref{item:patern:paths}.
By the definition of the set $\pairs(\Tt)$, for each $y_1y_2 \in \pairs(\Tt)$ there exists a path $P \in \Qq$ containing a subpath $Q$ from $f(y_1)$ to $f(y_2)$.
We need to prove that these subpaths are pairwise internally vertex-disjoint and contained in $G-V_4$.
To see the latter fact, note that, by definition, $f(y_1),f(y_2) \not\in V_4$ and $Q$ cannot visit any $v\in V_4$ because then $v$ would be recorded in the trace encoding $P$, implying that $y_1,y_2$ cannot be consecutive in this trace.

Next, we prove internal vertex-disjointedness.
First consider the case where two subpaths $Q_1,Q_2$ contained in a single path  $P \in \Qq$.
By the definition of a trace, these subpaths do not overlap on $P$.
Since $P$ visits each vertex at most once, the only common vertex on $Q_1,Q_2$ may be their common endpoint.
Now suppose that $Q_1 \sub P_1$ and $Q_2 \sub P_2$ for $P_1,P_2 \in \Qq$ and assume w.l.o.g. that there exists $v \in V(Q_1) \cap V(Q_2)$ and $v$ is internal on $Q_1$.
If $v$ is also internal on $Q_2$ then $\deg_G(v) \ge 4$ because the paths $P_1,P_2$ are edge-disjoint.
But then $v \in V_4$ and each occurrence of $v$ is recorded in each trace $T \in \Tt$, implying that $y_1,y_2$ cannot be consecutive in the traces encoding $P_1$ or $P_2$; contradiction.
It follows that $v$ is an endpoint of $Q_2$.
But then $v \in V_X$ and again every occurrence of $v$ is recorded in each trace $T \in \Tt$, leading to the same contradiction as above.
This concludes the justification of condition  \eqref{item:patern:paths} and the first part of the proof.

Now we argue that the given conditions are sufficient for an existence of a bull-free covering family encoded by $(\Tt,d)$.
Let $V_X = f(X_\ell)$.
By condition \eqref{item:patern:function} we have $|V_X| = \ell$ and $V_X \cap N_G[V_4] = \emptyset$.
The function $d$ is defined in \Cref{def:match} exactly as in condition \eqref{item:patern:degree}.

For each $y_1y_2$ in $\pairs(\Tt)$ there exists a path $Q(y_1,y_2)$ from $f(y_1)$ to $f(y_2)$ so that these paths are internally vertex-disjoint.
For a trace $T \in \Tt$ let $P_T$ be the concatenation of {\em subpaths} $Q(y_1,y_2)$ for $y_1y_2 \in \pairs(T)$ and edges from $\edges(T)$, in the order specified in $T$.
Then $P_T$ is a walk and to argue that $P_T$ is a path we need to prove that each vertex $v \in V(G)$ appears on $P_T$ at most once.
By the definition of a trace, each symbol occurs on $T$ at most once, so there may be at most two considered edges or subpaths having $v$ as an endpoint, and then they overlap at $v$.
Next, the subpaths $Q(y_1,y_2)$ are pairwise internally vertex-disjoint, that is, if $v$ occurs on any of them then it cannot be an internal vertex on another one.
Hence $P_T$ is indeed a path.

Next, we show that the paths $P_{T_1}$, $P_{T_2}$ are edge-disjoint for distinct $T_1,T_2 \in \Tt$.
Consider an edge $uv$ that appears on $P_{T_1}$.
If it is incident to $V_4$ then $uv \in \edges(T_1)$ and we use the fact that sets $\edges(T_1)$ and $\edges(T_2)$ are disjoint so $uv$ cannot appear on $P_{T_2}$.
Otherwise, $uv$ appears on some subpath $Q(y_1,y_2)$ for $y_1y_2$ in $\pairs(T_1)$.
If any of $u,v$ is internal on $Q(y_1,y_2)$ then, by internal vertex-disjointedness, it cannot appear in any other path in the collection.
In the remaining case, $u,v$ are the endpoints of $Q(y_1,y_2)$.
But the sets $\pairs(T_1)$ and $\pairs(T_2)$ are disjoint so
$u,v$ cannot be the endpoints of any subpath of  $P_{T_2}$.
Also, none of $u,v$ can then also appear as an internal vertex in any subpath of  $P_{T_2}$.
Hence, the paths $P_{T_1}$, $P_{T_2}$ are indeed edge-disjoint.

Finally, condition \eqref{item:patern:bull} directly corresponds to the definition of being bull-free.
Each edge incident to $V_4$ is included in $\edges(\Tt)$ (as demanded in \Cref{def:pattern}) and so it is used by one of the constructed paths.
Therefore, applying \Cref{def:match} with $\Qq$ being the constructed path family and  $V_X = f(X_\ell)$ yields the pattern $(\Tt, d)$.
The lemma follows.
\end{proof}
}

\subsection{Incorporating logic}

For a graph $G$, integer $k$, and $s_1,t_1,\dots, s_k,t_k \in V(G)$, we say that $\mathsf{dp}_k[(s_1,t_1),\dots,(s_k,t_k)]$ holds in $G$ if in $G$ there are internally vertex-disjoint paths between $s_i$ and $t_i$, for $i\in [1,k]$.

We consider first-order formulas extended by the disjoint-paths predicates over graph structures.
There are three types of an atomic formula: equality $x=y$, edge relation $E(x,y)$, and disjoint-paths relation $\mathsf{dp}_k[(x_1,y_1),\dots,(x_k,y_k)]$ for some $k \in \mathbb{N}$.
An \fodp formula is built from atomic formulas using the Boolean operators $\neg, \land, \lor$, as well as existential and universal quantification $\exists x$, $\forall x$.

A variable $x$ not in the scope of any quantifier is called a {\em free variable}.
We write $\phi(\bar x)$ to indicate that $\bar x = (x_1,\dots,x_\ell)$ is the tuple of free variables in $\phi$.
For a tuple $\bar v = (v_1,\dots,v_\ell)$ from $V(G)$ we write $G \vDash \phi(\bar v)$ if $\phi$ holds in $G$ after substitution $x_i \xleftarrow{} v_i$ for $i \in [1,\ell]$.

We summon the main theorem from \cite{Logic} in a simplified form.
The original theorem works for more general graphs classes, excluding a fixed graph $H$ as a topological minor.
Note that every subcubic graph excludes $K_{1,4}$ as a topological minor, so the statement below is indeed a corollary of the theorem from \cite{Logic}.

\begin{theorem}[\cite{Logic}]\label{thm:logic}
     There is an algorithm that, given a subcubic graph $G$, an \fodp formula $\phi(\bar x)$, and $\bar v \in V(G)^{|\bar x|}$, decides whether $G \vDash \phi(\bar v)$ in time $f(\phi)\cdot |V(G)|^3$, where $f$ is a computable function.
\end{theorem}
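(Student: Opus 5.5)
The statement is cited from \cite{Logic}, so the plan is to derive it as a direct specialization of their meta-theorem rather than to prove model checking from scratch. Their result says that for every fixed graph $H$, model checking of \fodp formulas on graphs excluding $H$ as a topological minor runs in time $f(\phi)\cdot |V(G)|^{c}$ for an absolute constant $c$ (with $c=3$ in their statement). So I will instantiate $H := K_{1,4}$ and check that the class of subcubic graphs lies inside the class of $K_{1,4}$-topological-minor-free graphs.

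For the containment, suppose $G$ contains a subdivision of $K_{1,4}$ as a subgraph. The branch vertex of the centre has four incident edges in that subdivision, coming from four internally disjoint paths. Hence its degree in $G$ is at least $4$, which contradicts $\Delta(G)\le 3$. Therefore every subcubic graph excludes $K_{1,4}$ as a topological minor.

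Next I will check that the predicate semantics agree. The $\mathsf{dp}_k$ atom in \cite{Logic} asks for paths that are internally vertex-disjoint, which is exactly the definition fixed above. Free variables $\bar x$ with a given assignment $\bar v$ are handled by the standard device of marking the vertices $v_i$. One option is to add unary predicates or constants, which their framework supports for colored graphs. Alternatively, one can note that their algorithm decides $G\vDash\phi(\bar v)$ for formulas with free variables directly. The dependence on $H$ is absorbed into $f$, because $H$ is fixed once and for all.

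The only potential obstacle is a bookkeeping one. It lies in matching the precise input format of \cite{Logic}: possibly colored or labelled structures, and the exact polynomial exponent. I would check that their theorem gives cubic time for fixed $H$, and otherwise simply state the bound as $f(\phi)\cdot|V(G)|^{\Oh(1)}$. That weaker bound would still suffice for \Cref{thm:main}.
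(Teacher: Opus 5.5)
Your proposal is correct and matches the paper's justification. The paper also obtains this statement as a corollary of the meta-theorem of \cite{Logic} for classes excluding a fixed topological minor, instantiated with $K_{1,4}$, which subcubic graphs exclude by exactly your degree argument; your further remarks on the internally vertex-disjoint semantics of $\mathsf{dp}_k$ and on free variables are consistent with how the paper uses the result.
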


This theorem, together with \Cref{lem:patern-conditions}, allows us to efficiently test if a given pattern encodes some bull-free covering family.

\begin{lemma}\label{lem:pattern-testing}
    Let $(G,V_4)$ be a nice pair, $k = \high(G)$, $\ell \in \mathbb{N}$, %
    and $({\cal T}, d)$ be a $(G,V_4,\ell)$-pattern.
    Then we can check whether $({\cal T}, d)$ encodes some bull-free covering family %
    in time $g(k,\ell)\cdot |V(G)|^3$, where $g$ is a computable function.

\end{lemma}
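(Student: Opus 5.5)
By \Cref{lem:patern-conditions}, it suffices to decide whether there is a mapping $f$ satisfying conditions (1)--(5). I will express this as an \fodp sentence $\phi$ with free variables, evaluated on the subcubic graph $H := G - V_4$, which is obtained by deleting all vertices of $V_4$. The formula has free variables $\bar z = (z_v)_{v \in N_G(V_4)}$, instantiated by the identity, so condition (1) is hard-wired. The variables $x_1,\dots,x_\ell$ are existentially quantified. The length of $\phi$ will be bounded by a function of $k$ and $\ell$, because $|N_G(V_4)| \le k$ and $|\Tt| \le k$. Moreover, each trace has length at most $|N_G[V_4]| + \ell \le 2k + \ell$, so $|\pairs(\Tt)| \le (2k+\ell)^2$. \Cref{thm:logic} then gives running time $f(\phi)\cdot|V(G)|^3 = g(k,\ell)\cdot|V(G)|^3$. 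Constructing $H$ and the tuple $\bar v$ takes polynomial time, which is absorbed in this bound.

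The formula is a conjunction of the following parts.

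\textbf{Condition (2).} The formula asserts $x_i \neq x_j$ for $i \neq j$ and $x_i \neq z_v$ for every $v \in N_G(V_4)$. Since the domain of $H$ is $V(G) \setminus V_4$, this already gives $f(X_\ell) \sub V(G)\setminus N_G[V_4]$.

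\textbf{Condition (3).} Degrees must be measured in $G$, not in $H$. For a vertex outside $N_G[V_4]$ the two agree, because all its neighbours lie outside $V_4$. Hence "$\deg_G(x_i)=d(x_i)$" is the \fo statement that $x_i$ has exactly $d(x_i)$ distinct $E$-neighbours in $H$, which needs at most four quantifiers.

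\textbf{Condition (4).} This is the single atom $\mathsf{dp}_m[(f(y_1),f(y_2))]_{y_1y_2\in\pairs(\Tt)}$. Each term is either $x_i$ or $z_v$, and the paths are taken in $H = G - V_4$ exactly as the lemma demands.

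\textbf{Condition (5).} Whether $(a,b)$ is a bull-pair in $G$ has to be expressed from inside $H$. Because $G$ is nice, every bull cycle is a triangle. So $(a,b)$ is a bull-pair iff $a \neq b$, $ab \in E(G)$, there is a common neighbour $w$ with $\deg_G(w)=2$, and $\deg_G(a)=\deg_G(b)=3$.

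The main obstacle is that degrees in $G$ of vertices in $N_G(V_4)$, and edges or neighbours lying in $V_4$, are invisible in $H$. I will handle this case by case.

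1. If an endpoint of the trace is a symbol from $N_G[V_4]$, its identity is known. The bull-pair test then reduces to a finite check carried out directly in $G$ by the algorithm, or to an \fo formula with that endpoint as a constant free variable.
2. If both endpoints are variables, then $a,b \notin N_G[V_4]$, and their $G$-degrees equal their $H$-degrees. The candidate $w$ is adjacent to $a$, so $w \notin V_4$. We may also need $\deg_G(w)$ when $w \in N_G(V_4)$. This I handle by a disjunction over the free variables $z_v$: if $w = z_v$, use the precomputed value $\deg_G(v)$; otherwise, read the degree of $w$ in $H$.

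In either case the bull test becomes a bounded \fo formula. Alternatively, the degree issue can be handled by colouring vertices with their $G$-degree via unary predicates. Since \Cref{thm:logic} is stated only for the plain graph signature, I avoid this and prefer the $z_v$-disjunction trick.

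Finally, the algorithm evaluates $\phi$ on $H$ with $\bar z \leftarrow N_G(V_4)$. By \Cref{lem:patern-conditions}, $\phi$ holds exactly when the pattern $(\Tt,d)$ encodes a bull-free covering family, which establishes the lemma.
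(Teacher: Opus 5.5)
Your proposal is correct and follows essentially the same route as the paper. Both encode conditions (1)--(5) of \Cref{lem:patern-conditions} as an \fodp formula over the subcubic graph $H=G-V_4$, with the vertices of $N_G(V_4)$ as free variables and $x_1,\dots,x_\ell$ existentially quantified, and then invoke \Cref{thm:logic}. Your bull-pair test is in fact slightly more careful than the paper's: the paper looks for a common neighbour $w$ with $\deg_H(w)=2$, which misfires when $w\in N_G(V_4)$ because then $\deg_G(w)\ge 3$, and your disjunction over the $z_v$ (equivalently, requiring $w\neq z_v$ for all $v$) closes this small gap.
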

\begin{proof}
    We shall reduce the problem to \fodp model checking over the subcubic graph $H = G-V_4$.
    Let us order $N_G(V_4)$ arbitrarily as $\bar v = (v_1,v_2,\dots, v_r)$.
    Note that $r \le k$. 
    We introduce free variables $\bar y = (y_1,y_2,\dots, y_r)$ so that $y_i$ will be assigned to $v_i$.

    We write formula $\phi(\bar y)$ starting from existential quantifiers $\exists x_1 \exists x_2 \dots \exists x_\ell$ which correspond to an assignment $X_\ell \to V(H)$.
    Together with the free variables, these encode a function $f \colon X_\ell \cup N_G(V_4) \to V(H)$ which is an~identity on $N_G(V_4)$.
    Inside the quantifiers we need to check if $f$ satisfies the conditions listed in \Cref{lem:patern-conditions}.
    
    First, it is easy to ensure that $f|_{X_\ell}$ is injective by a conjunction of $\ell \choose 2$ inequalities $x_i \ne x_j$.
    Similarly, we enforce that $f(X_\ell) \cap N_G(V_4) = \emptyset$ by writing $\ell\cdot r$ inequalities $x_i \ne y_j$.

    The next condition states that $\deg_G(f(x_i)) = d(x_i)$ for each $i \in [\ell]$.
    Note that $f(x_i) \not\in N_G(v_4)$ implies that $\deg_G(f(x_i)) = \deg_H(f(x_i))$.
    Then for each $c \in \{1,2,3\}$ it is easy to write an \fo formula checking that the degree of $x_i$ in $H$ equals $c$.

    Condition \eqref{item:patern:paths} states that the pairs of vertices in $\pairs(\Tt)$ are connected by internally vertex-disjoint paths in $H$.
    This directly corresponds to a predicate $\mathsf{dp}_h[(s_1,t_1),\dots,(s_h,t_h)]$
    where the variables $s_i,t_i$ correspond to some $x_j$ or $y_j$ depending on whether it is a symbol from $X_\ell$ or $N_G(V_4)$.
    Note that due to the degree bound each symbol can appear at most 3 times in $\pairs(\Tt)$ hence $h \le 3\cdot (k+\ell)$.

    Next, we need to ensure that for each trace $T \in \Tt$ the vertices assigned to the endpoints of $T$ do not form a bull-pair in $G$.
    Recall that $(u,v)$ is a bull-pair in $G$ when $\deg_G(u) = \deg_G(v) = 3$, $uv \in E(G)$, and they have a common neighbor $w$ of degree 2. 
    We show that this can be expressed by an \fo formula over $H$.
    We write this formula only for those traces whose endpoints have degree 3 in $G$ -- this information is known upfront because the assignment of each $y_i$ is fixed and the degree of $x_i$ is encoded using the function~$d$.
    Suppose this is the case.
    Then $u,v,w \in V(H)$.
    Moreover, $u,v$ are the only neighbors of $w$ in $G$.
    Since $u,v \not\in V_4$ we infer that $\deg_H(w) = \deg_G(w) = 2$.
    So $(u,v)$ is a bull-pair in $G$ if and only if $uv \in E(H)$ and they have a common neighbor $w$ of degree 2 in $H$.
    This again can be easily expressed in \fo over the graph $H$.
    
    We have constructed an \fodp formula $\phi(\bar y)$ so that $H \vDash \phi(\bar v)$ holds if and only if there exists a bull-free covering family in $(G,V_4)$ encoded by $({\cal T}, d)$.
    The length of this formula is $\Oh((k+\ell)^2)$.
    Finally, by \Cref{thm:logic} we can test if $H \vDash \phi(\bar v)$ in time $g(k,\ell)\cdot |V(G)|^3$.
\end{proof}

\subsection{Wrapping-up}

The final ingredient towards  \Cref{thm:main} is a bound on the number of different patterns to consider. 

\begin{restatable}{lemma}{lemPatternCount}
\label{lem:pattern-count}
    Let $(G,V_4)$ be a nice pair, $k = \high(G)$, and $\ell \in \mathbb{N}$.
    Then the number of $(G,V_4,\ell)$-patterns is $\exp(\Oh((k+\ell)\log(k+\ell)))$ and they can be enumerated in such running time.
\end{restatable}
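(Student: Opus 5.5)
The plan is to count the ingredients of a pattern $(\Tt,d)$ separately and multiply. The alphabet is $\Sigma = N_G[V_4]\cup X_\ell$. We have $|N_G[V_4]| \le |V_4| + \sum_{v\in V_4}\deg_G(v) \le 2k$, since $|V_4|\le k/4$. Hence $|\Sigma| \le 2k+\ell$.

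\textbf{Number of traces.} Every edge incident to $V_4$ lies in $\edges(T)$ for some trace, so each trace $T$ either contains a symbol of $V_4$ or not. If it contains none, then $\edges(T)=\emptyset$. Such a trace is useless but possibly allowed; its pairs are still restricted by conditions \eqref{item:pattern:deg-x} and \eqref{item:pattern:deg-n}. To avoid this subtlety, I bound the number of traces by the total degree budget instead. Every trace has two endpoint occurrences. Each occurrence of an $x_i$ consumes at least $1$ from the budget $d(x_i)\le 3$. Each occurrence of a vertex of $N_G(V_4)$ consumes from $\deg_G(v)\le \max(3,\cdot)$; a vertex of $N_G(V_4)$ could lie in $V_4$ itself, so I account for it through $\edges$. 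Each occurrence of a $V_4$-symbol uses at least one edge of $\edges(\Tt)$, and these edges are distinct by condition \eqref{item:pattern:edges}.

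Summing, the total length $L=\sum_{T\in\Tt}|T|$ satisfies $L \le 2k + 3\ell + 3\cdot 2k = \Oh(k+\ell)$. Here the budgets from conditions \eqref{item:pattern:deg-x} and \eqref{item:pattern:deg-n} bound the occurrences of non-$V_4$ symbols, and the count of edges incident to $V_4$ (at most $k$, each used once, each contributing at most two occurrences) bounds the $V_4$ occurrences. The number of traces is then at most $L$, since traces are nonempty.

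\textbf{Counting.} A pattern is now specified by a sequence of at most $L$ symbols from $\Sigma$, together with at most $L$ separator positions marking where traces begin, together with $d\in\{1,2,3\}^{\ell}$. The first has at most $(|\Sigma|+1)^{L}$ possibilities, the second at most $2^{L}$, and the third $3^\ell$. In total this is
\[
(2k+\ell+1)^{\Oh(k+\ell)}\cdot 2^{\Oh(k+\ell)}\cdot 3^\ell=\exp(\Oh((k+\ell)\log(k+\ell))).
\]
The count of collections is up to ordering, so this overcounts, which is harmless.

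\textbf{Enumeration.} To enumerate, I generate all such candidate tuples and filter those satisfying the trace conditions (no repeated symbol, no variable adjacent to a $V_4$-symbol) and conditions (1)--(5) of \Cref{def:pattern}. Each check takes time polynomial in $k+\ell$ given adjacency information for $N_G[V_4]$, which is computed once in time polynomial in $|G|$.

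\textbf{Main obstacle.} The step I expect to be the main obstacle is the bound on $L$, specifically making sure that no symbol can appear unboundedly often across traces. Condition \eqref{item:pattern:deg-n} handles vertices of $N_G(V_4)$, which have bounded degree if they lie outside $V_4$. Vertices of $V_4$ are handled because every occurrence of a $V_4$-symbol is adjacent in its trace to at least one pair in $\edges$. Isolated single-symbol traces would need separate care: I would note that a one-symbol trace still consumes degree budget. If that were not forced, I would restrict attention to patterns without such traces, since they never arise from \Cref{def:match} for nontrivial paths.
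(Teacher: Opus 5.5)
Your proof is correct and follows essentially the same route as the paper. Both arguments bound the total length of all traces by $\Oh(k+\ell)$, using the degree budgets of conditions (4)--(5) for symbols in $N_G(V_4)\cup X_\ell$ and the disjointness and covering of $\edges(\Tt)$ (conditions (2)--(3)) for $V_4$-symbols. Both then count words over an alphabet of size at most $2k+\ell$, together with a splitting into traces and the choice of $d$, and enumerate by generating and filtering. Your remark on single-symbol traces consisting of a $V_4$-vertex points to a corner case that the paper's proof also passes over without comment, and your proposed fix is adequate.
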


\lv{%

\begin{proof}
    We inspect \Cref{def:pattern} to estimate the number of occurrences of each symbol from $\Sigma = N_G[V_4] \cup X_\ell$ is a pattern.
    Due to conditions \eqref{item:pattern:deg-x} and \eqref{item:pattern:deg-n} and the fact that symbols do not repeat in a single trace, each symbol from $N_G(V_4) \cup X_\ell$ can appear at most 3 times in a pattern.
    Condition \eqref{item:pattern:edges} implies that the number of occurrences of each $v \in V_4$ is bounded by $\deg_G(v)$.
    Therefore, we can bound the total length of all traces by
     $3\cdot (N_G(V_4) + \ell) + \sum_{v \in V_4} \deg_G(v) \le 4 k + 3 \ell$.

     Consequently, every $(G,V_4,\ell)$-pattern can be obtained by choosing a word of length at most $4 k + 3 \ell$ over the alphabet $\Sigma$, splitting it into subwords, and choosing function $d \colon X_\ell \to \{1,2,3\}$.
     Since $|\Sigma| \le 2k + \ell$, we can estimate the number of possibilities as $\exp(\Oh((k+\ell)\log(k+\ell)))$.
     The analysis above translates directly to an algorithm enumerating all the $(G,V_4,\ell)$-patterns.
\end{proof}
}

\begin{proof}[Proof of \cref{thm:main}]
 Each connected component of $G$ can be treated independently.
 Moreover,  \Cref{lem:niceGraphs} allows us to assume that we work with a nice pair $(G,V_4)$.
 Let $k = \high(G)$; it is bounded by $8\cdot \sen(G)$ due to
 \Cref{lem:high-sen}.
Hence, it suffices
 to give an FPT algorithm parameterized by $k$. %
 
 The algorithm enumerates all $(G,V_4,\ell)$-patterns for $\ell \le 16\cdot k$ using \Cref{lem:pattern-count} in time $\exp(\Oh(k\log k))$.
 Then for each such pattern $(\Tt,d)$ we test if it encodes some
 bull-free covering family. %
\Cref{lem:pattern-testing} allows us to do this in time $g(k,\, 16\cdot k)\cdot |V(G)|^3$.
We output the minimum of $\frac{\mathrm{odd}(G)+\mathrm{odd}(\Tt, d)}{2} +|\Tt|$ over all patterns  $(\Tt,d)$ that passed the test.

We prove the correctness of this algorithm.
Let $\alpha$ denote the computed value. 
\Cref{lem:pn-less-odd} implies that $\pn(G) \le \alpha$.
Let $\Pp$ be an optimal path partition of size $\pn(G)$ and $\Pp_4 \sub \Pp$ be the corresponding covering family.
By \Cref{lem:bull-free} the family $\Pp_4$ is bull-free.
By \Cref{lem:pattern-encoding} there exists a $(G,V_4,\ell)$-pattern $(\Tt,d)$ with $\ell \le 16\cdot k$ that encodes~$\Pp_4$.
Consequently, $(\Tt,d)$ will be considered by the algorithm and so $\alpha \le \frac{\mathrm{odd}(G)+\mathrm{odd}(\Tt, d)}{2} +|\Tt|$.
Finally, we apply \Cref{lem:pn-greater-odd} to obtain that 
$\frac{\mathrm{odd}(G)+\mathrm{odd}(\Tt, d)}{2} +|\Tt| \le |\Pp|$.
We infer that $\pn(G) \le \alpha \le |\Pp| = \pn(G)$. 
Hence, $\alpha = \pn(G)$.
This concludes the main proof.
\end{proof}

\section{Conclusion}\label{sec:conclusion}

We have presented an {FPT} algorithm to compute $\pn(G)$ parametrized by the edge-deletion distance of $G$ to a subcubic graph. 
A natural follow-up question is whether this parameter can be replaced by the vertex-deletion distance to a subcubic graph, which may be arbitrarily smaller than the edge-deletion distance.
Is this problem in FPT or in XP?
This question is non-trivial already for parameter value one: is there a polynomial algorithm that computes $\pn(G)$ of a graph $G$ with only one vertex of degree higher than three?

While the Gallai conjecture remains open,  Lov{\'a}sz proved that $n$ paths and cycles suffice to partition the edges of every graph on $2n$ vertices~\cite{Gallai}.
In the regime of ``parameterization above/below a guarantee'' one asks whether we can effectively compute a solution that is slightly better than a certain combinatorial bound (see, e.g.~\cite{fomin2025pathcover}).
The Lov{\'a}sz' bound gives rise to the following problem: given a graph on $2n$ vertices, determine whether $E(G)$ can be partitioned into $n-k$ path and cycles.
Is this problem in FPT or in XP for the parameter~$k$?
As before, the existence of a polynomial algorithm for $k=1$ is already open.

\bibliography{bib}

\newcommand{\etalchar}[1]{$^{#1}$}
\begin{thebibliography}{FMMR25}

\bibitem[BBB21]{Gallai-planar}
Alexandre Blanch{\'e}, Marthe Bonamy, and Nicolas Bonichon.
\newblock Gallai's path decomposition in planar graphs, 2021.
\newblock \href {https://arxiv.org/abs/2110.08870} {\path{arXiv:2110.08870}}.

\bibitem[BDG{\etalchar{+}}25]{baste2025polynomial}
Julien Baste, Lucas {De Meyer}, Ugo Giocanti, Etienne Objois, and Timothé
  Picavet.
\newblock A polynomial bound on the pathwidth of graphs edge-coverable by $k$
  shortest paths, 2025.
\newblock \href {https://arxiv.org/abs/2510.02901} {\path{arXiv:2510.02901}}.

\bibitem[BP19]{Bonamy19}
Marthe Bonamy and Thomas~J. Perrett.
\newblock Gallai’s path decomposition conjecture for graphs of small maximum
  degree.
\newblock {\em Discrete Mathematics}, 342(5):1293--1299, 2019.
\newblock \href {https://doi.org/10.1016/j.disc.2019.01.005}
  {\path{doi:10.1016/j.disc.2019.01.005}}.

\bibitem[CDF{\etalchar{+}}25]{chakraborty2025}
Dibyayan Chakraborty, Oscar Defrain, Florent Foucaud, Mathieu Mari, and
  Prafullkumar Tale.
\newblock Parameterized complexity of isometric path partition: treewidth and
  diameter, 2025.
\newblock \href {https://arxiv.org/abs/2508.05448} {\path{arXiv:2508.05448}}.

\bibitem[DFPT24]{Dumas24}
Ma\"{e}l Dumas, Florent Foucaud, Anthony Perez, and Ioan Todinca.
\newblock On graphs coverable by $k$ shortest paths.
\newblock {\em SIAM Journal on Discrete Mathematics}, 38(2):1840--1862, 2024.
\newblock \href {https://doi.org/10.1137/23M1564511}
  {\path{doi:10.1137/23M1564511}}.

\bibitem[DK00]{Gallai-2n-over-3-first}
Nathaniel Dean and Mekkia Kouider.
\newblock Gallai's conjecture for disconnected graphs.
\newblock {\em Discrete Mathematics}, 213(1--3):43--54, 2000.
\newblock \href {https://doi.org/10.1016/S0012-365X(99)00167-3}
  {\path{doi:10.1016/S0012-365X(99)00167-3}}.

\bibitem[FFM{\etalchar{+}}25]{Fernau25}
Henning Fernau, Florent Foucaud, Kevin Mann, Utkarsh Padariya, and Rajath {Rao
  K.N.}
\newblock Parameterizing path partitions.
\newblock {\em Theoretical Computer Science}, 1028:115029, 2025.
\newblock \href {https://doi.org/10.1016/j.tcs.2024.115029}
  {\path{doi:10.1016/j.tcs.2024.115029}}.

\bibitem[FGJ{\etalchar{+}}25]{fomin2025pathcover}
Fedor~V. Fomin, Petr~A. Golovach, Nikola Jedličková, Jan Kratochvíl, Danil
  Sagunov, and Kirill Simonov.
\newblock Path cover, {Hamiltonicity}, and independence number: An {FPT}
  perspective, 2025.
\newblock \href {https://arxiv.org/abs/2403.05943} {\path{arXiv:2403.05943}}.

\bibitem[FMMR25]{foucaud2025polynomialtime}
Florent Foucaud, Atrayee Majumder, Tobias Mömke, and Aida {Roshany-Tabrizi}.
\newblock Polynomial-time algorithms for path cover and path partition on trees
  and graphs of bounded treewidth, 2025.
\newblock \href {https://arxiv.org/abs/2511.07160} {\path{arXiv:2511.07160}}.

\bibitem[GST23]{GolovachST23}
Petr~A. Golovach, Giannos Stamoulis, and Dimitrios~M. Thilikos.
\newblock Model-checking for first-order logic with disjoint paths predicates
  in proper minor-closed graph classes.
\newblock In Nikhil Bansal and Viswanath Nagarajan, editors, {\em Proceedings
  of the 2023 {ACM-SIAM} Symposium on Discrete Algorithms, {SODA} 2023,
  Florence, Italy, January 22--25, 2023}, pages 3684--3699. {SIAM}, 2023.
\newblock \href {https://doi.org/10.1137/1.9781611977554.CH141}
  {\path{doi:10.1137/1.9781611977554.CH141}}.

\bibitem[KW10]{Kawarabayashi2010}
Ken-ichi Kawarabayashi and Paul Wollan.
\newblock A shorter proof of the graph minor algorithm: the unique linkage
  theorem.
\newblock In {\em Proceedings of the Forty-Second ACM Symposium on Theory of
  Computing}, STOC '10, pages 687--694, New York, NY, USA, 2010. Association
  for Computing Machinery.
\newblock \href {https://doi.org/10.1145/1806689.1806784}
  {\path{doi:10.1145/1806689.1806784}}.

\bibitem[Lov68]{Gallai}
L{\'a}szl{\'o} Lov{\'a}sz.
\newblock On covering of graphs.
\newblock In Paul Erd{\H{o}}s and Gyula O.~H. Katona, editors, {\em Theory of
  Graphs (Proceedings of the Colloquium held at Tihany, Hungary, September
  1966)}, pages 231--236. Academic Press, New York, 1968.

\bibitem[LY24]{Gallai-treewidth}
Changhong Lu and Niping Yi.
\newblock The problem of path decomposition for graphs with treewidth at most
  4.
\newblock {\em Discrete Mathematics}, 347(6):113957, 2024.
\newblock \href {https://doi.org/10.1016/j.disc.2024.113957}
  {\path{doi:10.1016/j.disc.2024.113957}}.

\bibitem[Man18]{manuel2018revisiting}
Paul Manuel.
\newblock Revisiting path-type covering and partitioning problems, 2018.
\newblock \href {https://arxiv.org/abs/1807.10613} {\path{arXiv:1807.10613}}.

\bibitem[MR18]{MajumdarR18}
Diptapriyo Majumdar and Venkatesh Raman.
\newblock Structural parameterizations of undirected feedback vertex set: {FPT}
  algorithms and kernelization.
\newblock {\em Algorithmica}, 80(9):2683--2724, 2018.
\newblock \href {https://doi.org/10.1007/S00453-018-0419-4}
  {\path{doi:10.1007/S00453-018-0419-4}}.

\bibitem[P{\'e}r84]{NP-complete-paper}
Bernard P{\'e}roche.
\newblock {NP}-completeness of some problems of partitioning and covering in
  graphs.
\newblock {\em Discrete Applied Mathematics}, 8(2):195--208, 1984.
\newblock \href {https://doi.org/10.1016/0166-218X(84)90101-X}
  {\path{doi:10.1016/0166-218X(84)90101-X}}.

\bibitem[SSS{\etalchar{+}}24]{Logic}
Nicole Schirrmacher, Sebastian Siebertz, Giannos Stamoulis, Dimitrios~M.
  Thilikos, and Alexandre Vigny.
\newblock Model checking disjoint-paths logic on topological-minor-free graph
  classes.
\newblock In Paweł Sobocinski, Ugo~Dal Lago, and Javier Esparza, editors, {\em
  Proceedings of the 39th Annual {ACM/IEEE} Symposium on Logic in Computer
  Science, {LICS} 2024, Tallinn, Estonia, July 8-11, 2024}, pages 68:1--68:12.
  {ACM}, 2024.
\newblock \href {https://doi.org/10.1145/3661814.3662089}
  {\path{doi:10.1145/3661814.3662089}}.

\bibitem[Yan98]{Gallai-2n-over-3-second}
Lirong Yan.
\newblock {\em On path decompositions of graphs}.
\newblock PhD thesis, Arizona State University, 1998.
\newblock Thesis (Ph.D.). ProQuest LLC, Ann Arbor, MI. MR2697319. ISBN:
  978-0591-80735-6.
\newblock URL:
  \url{https://www.proquest.com/openview/59f62a730dfd1ec06fefb62e49e2f83d/}.

\end{thebibliography}

\sv{
\appendix

\section{Missing proofs}
\label{sec:MISSING}

\lemNiceGraphs*
\begin{proof}
Let $C$ be a pan cycle in $G$, and $x\in V(C)$ be the unique vertex satisfying $\deg_G(x)=3$. Let $G'$ be the graph obtained from $G$ by deleting all the vertices $V(C)\setminus \{x\}$, i.e., $G'$ is the graph obtained from $G-C$ by deleting the isolated vertices. Note that $G'$ is connected. By \Cref{prop:remove-end-cycles}, we have $\pn(G)=\pn(G')+1$. 
\mic{Moreover, all the affected vertices have degree at most 3, hence this modification does not change $\sen(G)$.}
Since $\mathrm{pan}(G')=\mathrm{pan}(G)-1$, by successively removing all the pan cycles in the similar way, we end up with a connected and not subcubic graph $G_1$ without pan cycles such that $\pn(G)=\pn(G_1)+\mathrm{pan}(G)$ and $\sen(G_1)= \sen(G)$.
Next, let $B$ be a bull cycle in $G_1$ that is not a triangle. Let $u,v\in V(B)$ be the vertices satisfying $\deg_{G_1}(u)=\deg_{G_1}(v)=3$. Take an arbitrary vertex $w\in V(B)\setminus\{u,v\}$. Let $B_{uv}$ be the $(u,v)$-path on $B$ not containing $w$. Similarly define $B_{uw}$ and $B_{vw}$. Let $G_1'$ be the graph obtained from $G_1$ by deleting all the vertices $V(B)\setminus \{u,v,w\}$ and adding edges $uv$, $vw$, $uw$ (if they are not already present). Note that $G_1'$ is connected. Moreover, if $\deg_{G_1}(z)\geq 4$ for some $z\in V(G_1)$, then we have $\deg_{G_1}(z)=\deg_{G_1'}(z)$, so $G_1'$ is not subcubic and $\sen(G_1')=\sen(G_1)$. For any path partition of $G_1'$ of minimum size, we can obtain a path partition of $G_1$ of the same size by replacing the edges $uv$, $vw$, $uw$ with the paths $B_{uv}$, $B_{vw}$, $B_{uw}$, respectively, which implies that $\pn(G_1)\leq \pn(G_1')$. Conversely, let $\mathcal{P}$ be a path partition for $G_1$ of size $\pn(G_1)$. 

Suppose that there exists $P\in \Pp$ that is not entirely contained in $B$ such that $P-B$ is not a single path in $G_1'$. Since $\deg_{G_1}(z)=2$ for all $z\in V(B)\setminus \{u,v\}$, we see that $P$ should contain either $B_{uv}$ or $B_{uw}\cup B_{wv}$ in the middle with non-empty subpaths at both ends, depicted in Figure~\ref{fig:P-B-not-single-path}. 

\begin{figure}[!htbp]
\centering
\begin{tikzpicture}[scale=0.45]
\node (u) at (0,0) [circle, draw] {$u$};
\node (v) at (10,0) [circle, draw] {$v$};
\node (x) at (-6,0) [circle, draw] {};
\node (y) at (16,0) [circle, draw] {};

\node at (5,0.7) [color=black] {\footnotesize $B_{uv}$ or $B_{uw}\cup B_{wv}$};

\node at (-7,0.7) [color=black] {\Large $P$};

\node at (-3,0.7) [color=black] {\footnotesize belongs to $G_1'$};

\node at (13,0.7) [color=black] {\footnotesize belongs to $G_1'$};

\draw[line width=1.4pt, color=green] (u)--(v);

\draw[line width=1.4pt,color=red] (u)-- (x);
\draw[line width=1.4pt,color=red] (v)-- (y);

\end{tikzpicture}

\caption{If $P-B$ is not a single path in $G_1'$, then there is an $(u,v)$-subpath in the middle.}
\label{fig:P-B-not-single-path}
\end{figure}

This, in particular, implies that there exists at least one path $Q\in \Pp$ that is entirely contained in $B$. Moreover, since $\deg_{G_1'}(u)=\deg_{G_1'}(v)=1$, there are no other paths $R\in \Pp$ for which $R-B$ is neither empty nor a single path. As a result, together with two paths obtained from $P-B$, we obtain a path partition for $G_1'-\{uv,vw,uw\}$ with at most $|\Pp|-2=\pn(G_1)-2$ paths by deleting edges belonging to $B$ from every path in $\Pp\setminus \{P,Q\}$. Since $u$ and $v$ lie on different paths, we can extend these paths with $uwv$ to the $u$-end and $vu$ to the $v$-end in order to obtain a path partition for $G_1'$ of size $\pn(G_1)$, depicted in Figure~\ref{fig:P-B-not-single-path-extended}, which shows $\pn(G_1')\leq \pn(G_1)$.

\begin{figure}[!htbp]
\centering
\begin{tikzpicture}[scale=0.45]

\node (u1) at (0,-4) [circle, draw] {$u$};
\node (v1) at (10,-4) [circle, draw] {$v$};
\node (x1) at (-6,-4) [circle, draw] {};
\node (y1) at (16,-4) [circle, draw] {};

\node (u2) at (8,-6) [circle, draw] {$u$};
\node (v2) at (5,-6) [circle, draw] {$v$};
\node (w) at (2,-6) [circle, draw] {\small $w$};

\node at (-1,-2.8) [color=black] {\footnotesize a new path in $G_1'$};

\node at (-7,-2.8) [color=black] {\Large $P$};

\node at (11.4,-2.8) [color=black] {\footnotesize a new path in $G_1'$};

\draw[line width=1.4pt,color=green, dashed] (u1)-- (v1);

\draw[line width=1.4pt,color=red] (u1)-- (x1);
\draw[line width=1.4pt,color=red] (v1)-- (y1);
\draw[line width=1.4pt,color=red] (u1)-- (w);
\draw[line width=1.4pt,color=red] (w)-- (v2);
\draw[line width=1.4pt,color=red] (v1)-- (u2);

\end{tikzpicture}
\caption{Subpaths belonging to $G_1'$ in $P$ can be extended in order to include edges $uv$, $vw$, $uw$.}
\label{fig:P-B-not-single-path-extended}
\end{figure}

Finally, suppose that for every $P\in \Pp$, either $P$ is entirely contained in $B$ or $P-B$ is a single path in $G_1'$. Then, for every path in $\Pp$ that contains at least one edge from $B$, one of the following three cases can happen, depicted in Figure~\ref{fig:types-having-edge-from-B}.

\begin{figure}[!htbp]
\centering
\begin{tikzpicture}[scale=0.3]
\node (u) at (0,0) [circle, draw] {$u$};
\node (a) at (-10,0) [circle, draw] {};
\node (b) at (8,0) [circle, draw] {};

\node at (4.2,0.7) [color=black] {\footnotesize belongs to $B$};

\node at (-5,0.7) [color=black] {\footnotesize belongs to $G_1'$};

\draw[line width=1.4pt,color=red] (u)-- (a);
\draw[line width=1.4pt,color=green] (u)-- (b);

\node at (-1,-2.2) [color=black] {\footnotesize first case};

\node (u1) at (22,0) [circle, draw] {$v$};
\node (a1) at (12,0) [circle, draw] {};
\node (b1) at (30,0) [circle, draw] {};

\node at (26.2,0.7) [color=black] {\footnotesize belongs to $B$};

\node at (17,0.7) [color=black] {\footnotesize belongs to $G_1'$};

\draw[line width=1.4pt,color=red] (u1)-- (a1);
\draw[line width=1.4pt,color=green] (u1)-- (b1);

\node at (21,-2.2) [color=black] {\footnotesize second case};

\node (a2) at (0,-5) [circle, draw] {};
\node (b2) at (20,-5) [circle, draw] {};

\node at (10.2,-4.3) [color=black] {\footnotesize entirely belongs to $B$};

\draw[line width=1.4pt,color=green] (a2)-- (b2);

\node at (10,-6.7) [color=black] {\footnotesize third case};

\end{tikzpicture}
\caption{There are three cases for every path in $\Pp$ that has at least one edge from $B$.}
\label{fig:types-having-edge-from-B}
\end{figure}

If there are at least two paths in $\Pp$ that satisfies the third case, then we obtain a path partition for $G_1'-\{uv,vw,uw\}$ with at most $|\Pp|-2=\pn(G_1)-2$ paths by deleting edges belonging to $B$ from every path in $\Pp$. Then, by adding new paths $uwv$ and $uv$, we find a path partition for $G_1'$ with at most $\pn(G_1)$ paths, so $\pn(G_1')\leq \pn(G_1)$ holds. If there is exactly one path $P\in \Pp$ that satisfies the third case, then without loss of generality, we can assume that there is a path $Q\in \Pp$ that satisfies the first case. Hence, we obtain a path partition for $G_1'-\{uv,vw,uw\}$ with $|\Pp|-1=\pn(G_1)-1$ paths by deleting edges belonging to $B$ from every path in $\Pp\setminus \{P\}$. Then, we obtain a path partition for $G_1'$ of size $\pn(G_1)$ by adding a new path $vuw$ and extending $Q$ by adding the edge $uv$ to the $u$-end, which shows $\pn(G_1')\leq \pn(G_1)$. If there are no paths in $\Pp$ satisfying the third case, using $\deg_{G_1}(u)=\deg_{G_1}(v)=3$, we can conclude that there is exactly one path $P_u\in \Pp$ satisfying the first case and exactly one path $P_v\in \Pp$ satisfying the second case. Then, $B$ is entirely contained in $P_u\cup P_v$, which implies that the part of $P_u$ belonging to $B$ ends at $v$, and that the part of $P_v$ belonging to $B$ ends at $u$. Then, $(\Pp\setminus \{P_u,P_v\})\cup \{P_u-B,P_v-B\}$ is a path partition for $G_1'-\{uv,vw,uw\}$ with $|\Pp|=\pn(G_1)$ paths, where $v\notin P_u-B$ and $u\notin P_v-B$. Hence, we obtain a path partition for $G_1'$ of size $\pn(G_1)$ by extending $P_u-B$ with the path $uwv$ to the $u$-end and extending $P_v-B$ with the edge $vu$ to the $v$-end, which shows $\pn(G_1')\leq \pn(G_1)$. We proved $\pn(G_1')\leq \pn(G_1)$ holds in all cases, so we have $\pn(G_1')=\pn(G_1)$.

Note that $G_1'$ has one less bull cycle that is not a triangle compared to $G_1$. Then, by successively reducing such bull cycles in the similar way, we end up with a connected and not subcubic graph $G_0$ whose all bull cycles are triangles such that $\pn(G_1)=\pn(G_0)$ and $\sen(G_1)=\sen(G_0)$. Using $\pn(G)=\pn(G_1)+\mathrm{pan}(G)$ and $\sen(G_1)=\sen(G)$, the result follows.
\end{proof}

\lemPatternEncoding*
\begin{proof}
First we show that each $T \in \Tt$ is a $(G,V_4,\ell)$-trace.
Clearly, each symbol appears at most once on $T$ because a path $P$ visits each vertex at most once.
Next, no variable $x_i$ appears next to $v \in V_4$ because the vertices consecutive to $v$ on $P$ belong to $N_G[V_4]$ while $V_X \cap N_G[V_4] = \emptyset$.

Now we check the conditions of \Cref{def:pattern}.
Since the set $U$ was chosen as a terminal collection for $\Qq$, no pair of vertices appears twice as consecutive in the collection of traces.
This ensures that the first two conditions hold.
Next, let $vu \in E(G)$ be an edge incident to $v \in V_4$.
Then $vu$ appears on some path $P \in \Qq$ and so $vu \in \edges(\Tt)$.
To see the last two conditions, observe that when a trace $T$ encodes a path $P$ then for each  $v \in N_G[V_4]$ we have $\deg_T(v) = \deg_P(v)$.
Furthermore, when $x_i \in X_\ell$ and $u =f(x_i) \in V_X$ then $\deg_T(x_i) = \deg_P(u)$ and $d(x_i) = \deg_G(u)$.

Finally, we argue that $\ell$ can be bounded by $16\cdot \high(G)$.
In \Cref{def:match} we can utilize any set $U$ that forms a terminal collection for $\Qq$.
\Cref{lem:terminal} shows that there exists such $U$ of size at most $16\cdot \high(G)$.
\end{proof}

\lemPatternConditions*
\begin{proof}
First we argue that when $(\Tt,d)$ encodes a covering family $\Qq$ then the given conditions are satisfied.
The only condition that is not obvious is \eqref{item:patern:paths}.
By the definition of the set $\pairs(\Tt)$, for each $y_1y_2 \in \pairs(\Tt)$ there exists a path $P \in \Qq$ containing a subpath $Q$ from $f(y_1)$ to $f(y_2)$.
We need to prove that these subpaths are pairwise internally vertex-disjoint and contained in $G-V_4$.
To see the latter fact, note that, by definition, $f(y_1),f(y_2) \not\in V_4$ and $Q$ cannot visit any $v\in V_4$ because then $v$ would be recorded in the trace encoding $P$, implying that $y_1,y_2$ cannot be consecutive in this trace.

Next, we prove internal vertex-disjointedness.
First consider the case where two subpaths $Q_1,Q_2$ contained in a single path  $P \in \Qq$.
By the definition of a trace, these subpaths do not overlap on $P$.
Since $P$ visits each vertex at most once, the only common vertex on $Q_1,Q_2$ may be their common endpoint.
Now suppose that $Q_1 \sub P_1$ and $Q_2 \sub P_2$ for $P_1,P_2 \in \Qq$ and assume w.l.o.g. that there exists $v \in V(Q_1) \cap V(Q_2)$ and $v$ is internal on $Q_1$.
If $v$ is also internal on $Q_2$ then $\deg_G(v) \ge 4$ because the paths $P_1,P_2$ are edge-disjoint.
But then $v \in V_4$ and each occurrence of $v$ is recorded in each trace $T \in \Tt$, implying that $y_1,y_2$ cannot be consecutive in the traces encoding $P_1$ or $P_2$; contradiction.
It follows that $v$ is an endpoint of $Q_2$.
But then $v \in V_X$ and again every occurrence of $v$ is recorded in each trace $T \in \Tt$, leading to the same contradiction as above.
This concludes the justification of condition  \eqref{item:patern:paths} and the first part of the proof.

Now we argue that the given conditions are sufficient for an existence of a bull-free covering family encoded by $(\Tt,d)$.
Let $V_X = f(X_\ell)$.
By condition \eqref{item:patern:function} we have $|V_X| = \ell$ and $V_X \cap N_G[V_4] = \emptyset$.
The function $d$ is defined in \Cref{def:match} exactly as in condition \eqref{item:patern:degree}.

For each $y_1y_2$ in $\pairs(\Tt)$ there exists a path $Q(y_1,y_2)$ from $f(y_1)$ to $f(y_2)$ so that these paths are internally vertex-disjoint.
For a trace $T \in \Tt$ let $P_T$ be the concatenation of {\em subpaths} $Q(y_1,y_2)$ for $y_1y_2 \in \pairs(T)$ and edges from $\edges(T)$, in the order specified in $T$.
Then $P_T$ is a walk and to argue that $P_T$ is a path we need to prove that each vertex $v \in V(G)$ appears on $P_T$ at most once.
By the definition of a trace, each symbol occurs on $T$ at most once, so there may be at most two considered edges or subpaths having $v$ as an endpoint, and then they overlap at $v$.
Next, the subpaths $Q(y_1,y_2)$ are pairwise internally vertex-disjoint, that is, if $v$ occurs on any of them then it cannot be an internal vertex on another one.
Hence $P_T$ is indeed a path.

Next, we show that the paths $P_{T_1}$, $P_{T_2}$ are edge-disjoint for distinct $T_1,T_2 \in \Tt$.
Consider an edge $uv$ that appears on $P_{T_1}$.
If it is incident to $V_4$ then $uv \in \edges(T_1)$ and we use the fact that sets $\edges(T_1)$ and $\edges(T_2)$ are disjoint so $uv$ cannot appear on $P_{T_2}$.
Otherwise, $uv$ appears on some subpath $Q(y_1,y_2)$ for $y_1y_2$ in $\pairs(T_1)$.
If any of $u,v$ is internal on $Q(y_1,y_2)$ then, by internal vertex-disjointedness, it cannot appear in any other path in the collection.
In the remaining case, $u,v$ are the endpoints of $Q(y_1,y_2)$.
But the sets $\pairs(T_1)$ and $\pairs(T_2)$ are disjoint so
$u,v$ cannot be the endpoints of any subpath of  $P_{T_2}$.
Also, none of $u,v$ can then also appear as an internal vertex in any subpath of  $P_{T_2}$.
Hence, the paths $P_{T_1}$, $P_{T_2}$ are indeed edge-disjoint.

Finally, condition \eqref{item:patern:bull} directly corresponds to the definition of being bull-free.
Each edge incident to $V_4$ is included in $\edges(\Tt)$ (as demanded in \Cref{def:pattern}) and so it is used by one of the constructed paths.
Therefore, applying \Cref{def:match} with $\Qq$ being the constructed path family and  $V_X = f(X_\ell)$ yields the pattern $(\Tt, d)$.
The lemma follows.
\end{proof}

\lemPatternCount*
\begin{proof}
    We inspect \Cref{def:pattern} to estimate the number of occurrences of each symbol from $\Sigma = N_G[V_4] \cup X_\ell$ is a pattern.
    Due to conditions \eqref{item:pattern:deg-x} and \eqref{item:pattern:deg-n} and the fact that symbols do not repeat in a single trace, each symbol from $N_G(V_4) \cup X_\ell$ can appear at most 3 times in a pattern.
    Condition \eqref{item:pattern:edges} implies that the number of occurrences of each $v \in V_4$ is bounded by $\deg_G(v)$.
    Therefore, we can bound the total length of all traces by
     $3\cdot (N_G(V_4) + \ell) + \sum_{v \in V_4} \deg_G(v) \le 4 k + 3 \ell$.

     Consequently, every $(G,V_4,\ell)$-pattern can be obtained by choosing a word of length at most $4 k + 3 \ell$ over the alphabet $\Sigma$, splitting it into subwords, and choosing function $d \colon X_\ell \to \{1,2,3\}$.
     Since $|\Sigma| \le 2k + \ell$, we can estimate the number of possibilities as $\exp(\Oh((k+\ell)\log(k+\ell)))$.
     The analysis above translates directly to an algorithm enumerating all the $(G,V_4,\ell)$-patterns.
\end{proof}

}%

\end{document}